\algrenewcommand\alglinenumber[1]{\tiny #1:}
\pgfplotsset{compat=1.7}
\newcommand{\blank}[1]{\hspace*{#1}}
\newcommand{\punt}[1]{}
\newcommand{\cmnt}[1]{}
\newtheorem{theorem}{Theorem}
\newtheorem{lemma}[theorem]{Lemma}
\newtheorem{property}[theorem]{Property}
\newtheorem{definition}{Definition}
\newcounter{history}
\newcommand{\secref}[1]{Section~\ref{sec:#1}}
\newcommand{\figref}[1]{Figure~\ref{fig:#1}}
\newcommand{\thmref}[1]{Theorem~\ref{thm:#1}}
\newcommand{\lemref}[1]{Lemma~\ref{lem:#1}}
\newcommand{\defref}[1]{Definition~\ref{def:#1}}
\newcommand{\propref}[1]{Property~\ref{prop:#1}}
\newcommand{\linref}[1]{Line~\ref{lin:#1}}
\newcommand{\algoref}[1]{{Algorithm~\ref{algo:#1}}}
\newcommand{\Lineref}[1]{Line~\ref{lin:#1}}
\newcommand{\ignore}[1]{}
\newcommand{\tobj} {t-object\xspace}
\newcommand{\txns}[1] {txns(#1)}
\newcommand {\comm}[1] {committed(#1)}
\newcommand {\aborted}[1] {aborted(#1)}
\newcommand {\live}[1] {live(#1)}
\newcommand {\term}[1] {term(#1)}
\newcommand{\tseq} {t-sequential\xspace}
\newcommand{\lastw} {lastWrite}
\newcommand{\lupdt}[2] {#2.lastUpdt(#1)}
\newcommand{\mr} {MR}
\newcommand{\tr} {TR}
\newcommand{\validity} {validity\xspace}
\newcommand{\legal} {legal\xspace}
\newcommand{\legality} {legality\xspace}
\newcommand{\op} {operation\xspace}
\newcommand{\mth} {method\xspace}
\newcommand{\cc} {correctness-criterion\xspace}
\newcommand{\ccs} {correctness-criteria\xspace}
\newcommand{\inv} {$inv$}
\newcommand{\rsp} {$rsp$}
\newcommand{\evts}[1] {evts(#1)}
\newcommand{\met}[1] {methods(#1)}
\newcommand{\init} {\emph{init}\xspace}
\newcommand{\tbeg} {\emph{t\_begin}\xspace}
\newcommand{\tread} {\emph{t\_read}\xspace}
\newcommand{\twrite} {\emph{t\_write}\xspace}
\newcommand{\tins} {\emph{t\_insert}\xspace}
\newcommand{\tdel} {\emph{t\_delete}\xspace}
\newcommand{\tlook} {\emph{t\_lookup}\xspace}
\newcommand{\tryc} {\emph{tryC}\xspace}%{\emph{tryC}}
\newcommand{\trya} {\emph{tryA}\xspace}%{\emph{tryA}}
\newcommand{\up} {\emph{up}}
\newcommand{\opq} {opaque\xspace}
\newcommand{\opty} {opacity\xspace}
\newcommand{\tab} {hash-table\xspace}
\newcommand{\otm} {\textit{OSTM}\xspace}%{ObjSTM}
\newcommand{\sotm} {\textit{SV-OSTM}\xspace}%{ObjSTM}
\newcommand{\rwtm} {\textit{RWSTM}\xspace}
\newcommand{\lsl} {lazyrb\text{-}list\xspace}
\newcommand{\lazy} {lazy-list\xspace}
\newcommand{\rvm} {\emph{rvm}\xspace}
\newcommand{\fevt}[1] {#1.firstEvt}
\newcommand{\levt}[1] {#1.lastEvt}
\newcommand{\fkmth}[3] {#3.firstKeyMth(#1, #2)}
\newcommand{\pkmth}[3] {#3.prevKeyMth(#1, #2)}
\newcommand\tabspace[1][1cm]{\hspace*{#1}}
\newcommand{\txsetst}[1] {L\_txlog.setStatus($L\_txstatus \downarrow$, $ OK \downarrow$)}
\newcommand{\npintv} {\emph{intraTransValidation()}}
\newcommand{\nptc} {\emph{STM tryC()}}
\newcommand{\npins} {\emph{STM insert()}}
\newcommand{\npdel} {\emph{STM delete()}}
\newcommand{\npluk} {\emph{STM lookup()}}
\newcommand{\nplsls} {\emph{list\_lookup()}}
\newcommand{\nplslins} {\emph{list\_Ins()}}
\newcommand{\npcld}{\emph{commonLu\&Del()}}
\newcommand{\rn} {\textcolor{red}{RL}\xspace}
\newcommand{\bn} {\textcolor{blue}{BL}\xspace}
\newcommand{\rc} {\textcolor{red}{currs[0]}}
\newcommand{\bc} {\textcolor{blue}{currs[1]}}
\newcommand{\bp} {\textcolor{blue}{preds[0]}}
\newcommand{\rp} {\textcolor{red}{preds[1]}}
\newcommand{\mvotm} {\textit{MVOSTM}\xspace}
\newcommand{\hmvotm} {\textit{HT-MVOSTM}\xspace}
\newcommand{\lmvotm} {\textit{list-MVOSTM}\xspace}
\newcommand{\hotm} {\textit{HT-OSTM}\xspace}
\newcommand{\kotm} {\textit{KOSTM}\xspace}
\newcommand{\hkotm} {\textit{HT-KOSTM}\xspace}
\newcommand{\lkotm} {\textit{list-KOSTM}\xspace}
\newcommand{\mvotmgc} {\textit{MVOSTM-GC}\xspace}
\newcommand{\llog} {txLog\xspace}
\newcommand{\txlfind} {$L\_txlog.find(L\_t\_id \downarrow, L\_obj\_id \downarrow, L\_key \downarrow, L\_rec \uparrow)$}
\newcommand{\llsopn}[1] {$L\_rec.setOpn(L\_obj\_id \downarrow$, $L\_key \downarrow$, #1)}
\newcommand{\llsval}[1] {$L\_rec.setVal(L\_obj\_id \downarrow$, $ L\_key \downarrow$, #1)}
\newcommand{\llsopst}[1] {$L\_rec.setOpStatus(L\_obj\_id \downarrow$, $ L\_key \downarrow$, #1)}
\newcommand{\llgopn}[1] {$L\_rec.getOpn(L\_obj\_id \downarrow$, $ L\_key \downarrow$)}
\newcommand{\llgval}[1] {$L\_rec.getVal(L\_obj\_id \downarrow$, $ L\_key \downarrow$)}
\newcommand{\llspc} {$L\_rec.setPred\&Curr(L\_obj\_id \downarrow$, $ L\_key \downarrow$, $G\_pred \downarrow$, $G\_curr \downarrow$)}
\newcommand{\cld}{$commonLu\&Del$($L\_t\_id \downarrow, L\_obj\_id \downarrow, L\_key \downarrow, L\_val \uparrow, L\_op\_status \uparrow$)}
\newcommand{\cnt} {G\_cnt}
\newcommand{\opg}[2] {OPG(#1, #2)}
\newcommand{\ord}[1] {ord(#1)}
\newcommand{\ordfn} {ord}
\newcommand{\mv} {mv}
\newcommand{\rvf} {rvf}
\newcommand{\rt} {rt}
\newcommand{\rvmt} {rv\_method\xspace}
\newcommand{\upmt} {upd\_method\xspace}
\newcommand{\locko} {lockOrder}
\newcommand{\valid} {valid}
\newcommand{\seq}[2] {linearize(#2, #1)}
\newcommand{\stl}[3] {#3.stl(#1, #2)}
\newcommand{\lts}[3] {#3.lts(#1, #2)}
\newcommand{\aco} {accessOrder}
\newcommand{\lsls}[1] {list\_lookup($L\_obj\_id \downarrow,  L\_key \downarrow, G\_pred \uparrow, G\_curr \uparrow$)}
\newcommand{\find} {find\_lts}
\newcommand{\lslins}[1] {list\_Ins($G\_pred \downarrow$, $G\_curr \downarrow$, $node \uparrow$)}
\newcommand{\opmvotm} {\textit{OPT-MVOSTM}\xspace}
\newcommand{\ophmvotm} {\textit{OPT-HT-MVOSTM}\xspace}
\newcommand{\oplmvotm} {\textit{OPT-list-MVOSTM}\xspace}
\newcommand{\opkotm} {\textit{OPT-KOSTM}\xspace}
\newcommand{\ophkotm} {\textit{OPT-HT-KOSTM}\xspace}
\newcommand{\oplkotm} {\textit{OPT-list-KOSTM}\xspace}
\newcommand{\opmvotmgc} {\textit{OPT-MVOSTM-GC}\xspace}
\newcommand{\ophmvotmgc} {\textit{OPT-HT-MVOSTM-GC}\xspace}
\newcommand{\oplmvotmgc} {\textit{OPT-list-MVOSTM-GC}\xspace}
\begin{document}

\begin{frontmatter}

\title{An Efficient Approach to Achieve Compositionality using Optimized Multi-Version Object Based Transactional Systems \footnote{ A preliminary version of this paper appeared in 20th International Symposium on Stabilization, Safety, and Security of Distributed Systems (SSS 2018) and awarded with the \textbf{Best Student Paper Award}. A
		poster version of this work received \textbf{Best Poster Award} in
		NETYS 2018.}}
%\tnotetext[mytitlenote]{}

%% Group authors per affiliation:
\author{Chirag Juyal$^1$, Sandeep Kulkarni$^2$, Sweta Kumari$^1$, Sathya Peri$^1$ and Archit Somani$^1$\footnote{Author sequence follows the lexical order of last names. All the authors can be contacted at the addresses given above. Archit Somani's phone number: +91 - 7095044601.}}

\address{Department of Computer Science \& Engineering, IIT Hyderabad, India$^1$ \\
	\texttt{(cs17mtech11014, cs15resch01004, sathya\_p, cs15resch01001)@iith.ac.in}  Department of Computer Science, Michigan State University, MI, USA$^2$ \\
	\texttt{sandeep@cse.msu.edu}}

\begin{abstract}
In the modern era of multi-core systems, the main aim is to utilize the cores properly. This utilization can be done by concurrent programming. But developing a flawless and well-organized concurrent program is difficult.  Software Transactional Memory Systems (STMs) are a convenient programming interface which assist the programmer to access the shared memory concurrently without worrying about consistency issues such as priority-inversion, deadlock, livelock, etc. Another important feature that STMs facilitate is compositionality of concurrent programs with great ease. %which makes it more approachable for concurrent programming. 
It composes different concurrent operations in a single atomic unit by encapsulating them in a transaction.

Many STMs available in the literature execute read/write primitive operations on memory buffers. We represent them as \emph{Read-Write STMs} or \emph{RWSTMs}. Whereas, there exist some STMs (transactional boosting and its variants) which work on higher level operations  such as insert, delete, lookup, etc. on a hash-table. We refer these STMs as Object Based STMs or OSTMs. 
%there have been some STMs that have been proposed (transactional boosting and its variants) that work on higher level operations such as hash-table insert, delete, lookup, etc. We call them Object STMs or OSTMs. 
%It has been shown that OSTMs provide greater concurrency.

The literature of databases and RWSTMs say that maintaining multiple versions ensures greater concurrency. This motivates us to maintain multiple version at higher level with object semantics and achieves greater concurrency. So, this paper proposes the notion of \emph{Optimized Multi-version Object Based STMs} or \emph{\opmvotm{s}} which encapsulates the idea of multiple versions in OSTMs to harness the greater concurrency efficiently. For efficient memory utilization, we develop two variations of \emph{\opmvotm{s}}. First, \opmvotm with garbage collection (or \opmvotmgc) which uses unbounded versions but performs garbage collection scheme to delete the unwanted versions. Second, finite version \opmvotm (or \opkotm) which maintains at most $K$ versions by replacing the oldest version when $(K+1)^{th}$ version is created by the current transaction.

We propose the \emph{\opmvotm{s}} for hash-table and list objects as \emph{\ophmvotm} and \emph{\oplmvotm} respectively. For memory utilization, we propose two variants of both the algorithms as \ophmvotmgc, \ophkotm and \oplmvotmgc, \oplkotm respectively.  \ophkotm performs best among its variants and
outperforms state-of-the-art hash-table based STMs  (HT-OSTM, ESTM, RWSTM, HT-MVTO, HT-KSTM) by a factor of 3.62, 3.95, 3.44, 2.75, 1.85 for workload W1 (90\% lookup, 8\% insert and 2\% delete),  1.44, 2.36, 4.45, 9.84, 7.42 for workload W2 (50\% lookup,  25\% insert and 25\% delete), and 2.11, 4.05, 7.84, 12.94, 10.70 for workload W3 (10\% lookup,  45\% insert and 45\% delete) respectively. Similarly, \oplkotm performs best among its variants and outperforms state-of-the-art list based STMs (list-OSTM, Trans-list, Boosting-list, NOrec-list, list-MVTO, list-KSTM)  by a factor of  2.56, 25.38, 23.57, 27.44, 13.34, 5.99 for W1, 1.51, 20.54, 24.27, 29.45, 24.89, 19.78 for W2, and 2.91, 32.88, 28.45, 40.89, 173.92, 124.89 for W3 respectively. \emph{\opmvotm{s}} are generic for other data structures as well. We rigorously proved that \opmvotm{s} satisfy \opty and ensure that transaction with lookup only \mth{s} will never return abort while maintaining unbounded versions. 

\end{abstract}

\begin{keyword}
Software Transactional Memory Systems, Optimized, Lazyrb-list, Hash-Table, List, Object, Multi-version, Compositionality, Opacity, Keys
\end{keyword}

\end{frontmatter}

%\linenumbers
%\vspace{-12pt}
\section{Introduction}
\label{sec:intro}

%\vspace{1mm}
%\noindent
%\textbf{History of STMs.}

Nowadays, multi-core systems are in trend which necessitated the need for concurrent programming to exploit the cores appropriately. Howbeit, developing the correct and efficient concurrent programs is difficult.  Software Transactional Memory Systems (STMs) are a convenient programming interface which assist the programmer to access the shared memory concurrently using multiple threads without worrying about consistency issues such as deadlock, livelock, priority-inversion, etc. STMs facilitate one more feature compositionality of concurrent programs with great ease which makes it more approachable. Different concurrent operations that need to be composed to form a single atomic unit is achieved by encapsulating them in a transaction. In this paper, we discuss various STMs such as read-write STMs (or RWSTMs), object based STMs (or OSTMs) available in the literature along with the benefits of OSTMs over RWSTMs. After that, we motivated from multi-version RWSTMs and propose multi-version object based STMs (or MVOSTMs) \cite{Juyal+:MVOSTM:SSS:2018} which maintain multiple versions and improves the concurrency further. Later, we made a couple of modifications (discussed in \secref{mvdesign}, \secref{pcode}, and \secref{exp}) to optimize the MVOSTMs and propose optimized MVOSTMs (or \emph{\opmvotm{s}}).
%\color{blue}
%Next, we discuss different types of STMs considered in the literature and identify the need to develop \todo{name} developed in this paper. 
%\color{black}

%With the rise of multi-core systems, concurrent programming is popular. Software Transaction Memory Systems (STMs) are a convenient programming interface for a programmer to access shared memory without worrying about concurrency issues such as locking, races, deadlocks, etc. To access the shared memory, concurrently running transactions uses the methods of \textit{STMs}. Another advantage of STMs is that they facilitate compositionality of concurrent programs with great ease. Different concurrent operations that need to be composed to form a single atomic unit is achieved by encapsulating all these operations as a single transaction.

\vspace{1mm}
\noindent
\textbf{Read-Write STMs:} There exists a lot of popular STMs in the literature such as ESTM \cite{Felber+:ElasticTrans:2017:jpdc}, NOrec \cite{Dalessandro+:NoRec:PPoPP:2010} which executes read/write operations on \emph{transaction objects} or \emph{\tobj{s}}. We represent these STMs as \emph{Read-Write STMs} or \emph{RWSTMs}. RWSTMs typically export following methods: (1) \tbeg{}: which begins a transaction with a unique identity, (2) \tread{} (or $r$): which reads the value of \tobj from shared memory, (3) \twrite{} (or $w$): which writes the new value to \tobj in its local memory, (4) \tryc{}: which validates the values written to \tobj{s} by the transaction and tries to commit. If all the updates made by the transaction is consistent then updates reflect to the shared memory and transaction returns commit, and (5) \trya{}: which returns abort on any inconsistency. %through \trya{} method.

%Any \textit{STMs} which uses read/write operations(or methods) to access shared memory is known as \textit{\textbf{Read-Write STMs} (or \rwtm{}}). In literature, most of the \textit{STMs} are \rwtm{}. NOrec \cite{conf/ppopp/DalessandroSS10}, ESTM \cite{Felber:2017:jpdc} etc. and several other renowned STMs that work at read-write level. It is a library which exports following methods: a) \tbeg{}: begins a transaction, b) \tread{} (or $r$): reads from shared memory, c) \twrite{} (or $w$): writes into its local memory, d) \tryc{}: validates and tries to commit the transaction by writing values on to the shared memory. If validation is successful,  then it returns commit. Otherwise, it returns abort. %through \trya{} method.

\ignore {
Most of the \textit{STMs} proposed in the literature are specifically based on read/write primitive operations (or methods) on memory buffers (or memory registers). These \textit{STMs} typically export the following methods: \tbeg{} which begins a transaction, \tread{} (or $r$) which reads from a buffer, \twrite{} (or $w$) which writes onto a buffer, \tryc{} which validates the \op{s} of the transaction and tries to commit. If validation is successful then it returns commit otherwise STMs export \trya{} which returns abort. We refer to these as \textit{\textbf{Read-Write STMs} or \rwtm{}}. 
As a part of the validation, the STMs typically check for \emph{conflicts} among the \op{s}. Two \op{s} are said to be conflicting if at least one of them is a write (or update) \op. Normally, the order of two conflicting \op{s} can not be commutated.  %Two \op{s} are said to be conflicting if the relative execution order of these \op{s} can not be reversed or commutated without affecting the behaviour of the \op{s} following them. 
}

%\vspace{-.5cm}
\begin{figure}
	\centering
	\captionsetup{justification=centering}
	\centerline{\scalebox{0.4}{\input{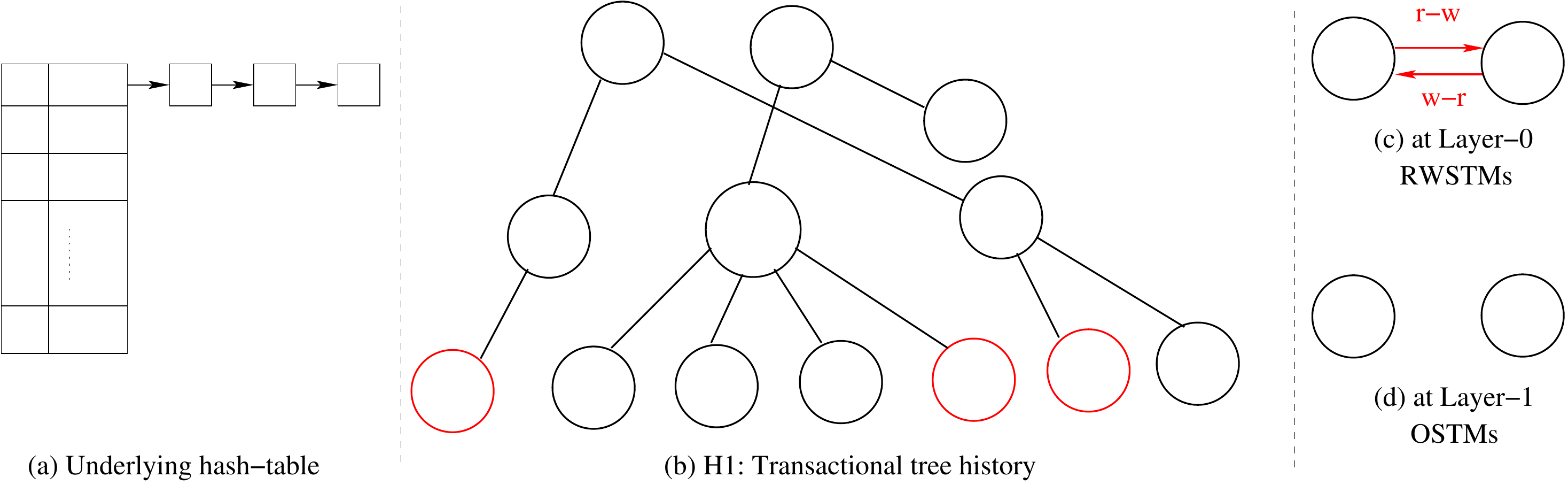_t}}}
	\caption{Advantages of OSTMs over RWSTMs}
	\label{fig:tree-exec}
\end{figure}
%\vspace{-.7cm}

%Literature of databases shows that object-based systems ensure greater concurrency than read-write systems. Hence, Herlihy et al.\cite{HerlihyKosk:Boosting:PPoPP:2018}, Hassan et al. \cite{Hassan+:OptBoost:PPoPP:2014}, and Weikum \& Vossen \cite[Chap 6]{WeiVoss:2002:Morg} extended this concept to STMs as  \textit{\textbf{Object-based STMs} (or \otm{s}}) which works on higher level objects than \rwtm{}.  We have considered an \otm{}\footnote{\otm{} has been accepted as a full paper in NETYS-2018.} using \emph{\tab} \cite{DBLP:journals/corr/abs-1709-00681}. It exports these methods: a) \tbeg{}: begins a transaction, b) \tins{} (or $ins$): inserts a value for a given key, c) \tdel{} (or $del$): deletes the value corresponding to the key and return its value, d) \tlook{} (or $lu$): lookup the value corresponding to the key and e) \tryc{}: validates the \op{s} and tries to commit the transaction.
 
%\vspace{1mm}
%\noindent
%\textbf{An example to illustrate benefits of \otm over \rwtm{}.} 

\vspace{1mm}
\noindent
\textbf{Object based STMs:} There are few STMs available in the literature which executes higher level operations such as insert, delete, lookup on \tab. We represent these STMs as \emph{Object based STMs} or \emph{OSTMs}. The concept of Boosting by Herlihy et al. \cite{HerlihyKosk:Boosting:PPoPP:2008}, the optimistic variant by Hassan et al. \cite{Hassan+:OptBoost:PPoPP:2014} and recently \hotm system by Peri et al. \cite{Peri+:OSTM:Netys:2018} are some examples that demonstrate the performance benefits achieved by \otm{s}. Peri et al. \cite{Peri+:OSTM:Netys:2018} showed that \otm{s} provide greater concurrency than RWSTMs while reducing the number of aborts.

\vspace{1mm}
\noindent 
\textbf{Benefits of \otm{s} over \rwtm{s}: } To show the benefits of \otm{s}, We consider a \tab based STM system which invokes insert (or $ins$), lookup (or $lu$) and delete (or $del$) method. Each \tab consists of $B$ buckets with the elements in each bucket arranged in the form of a linked-list. \figref{tree-exec} (a) represents a \tab{} with the first bucket containing keys $\langle k_3,~ k_6,~ k_8 \rangle$. \figref{tree-exec} (b) shows the execution by two transaction $T_1$ and $T_2$ represented in the form of a tree. $T_1$ performs lookup \op{s} on keys $k_3$ and $k_8$ while $T_2$ performs a delete on $k_6$. The delete on key $k_6$ generates read on the keys $k_3,k_6$ and writes the keys $k_6,k_3$ assuming that delete is performed similar to delete \op in \lazy \cite{Heller+:LazyList:PPL:2007}. The lookup on $k_3$ generates read on $k_3$ while the lookup on $k_8$ generates read on $k_3, k_8$. Note that in this execution $k_6$ has already been deleted by the time lookup on $k_8$ is performed. 

%\rwtm{} and \otm{s} \op{s} as layer-0 (or leaves) and layer-1 respectively in the form of transactional forest. Suppose transactions  $T_1$ and $T_2$ are concurrently executing. History $H0$ at layer-0 (while ignoring higher-level \op{s})  is not \opq \cite{GuerKap:2008:PPoPP} because between the two reads of $k_2$ by $T_1$, $T_2$ writes to $k_2$. So, history $H0$ is forming a cycle shown in \figref{tree-exec} (c). To ensure opacity for $H0$, one of the transactions among $T_1$ or $T_2$ would be aborted. \cmnt{Now, consider the history $H1$ at layer-1,

%In this setting, suppose a transaction $T_1$ of \otm{} invokes methods \tlook{} on the keys $k_5, k_8$. This would internally cause the \otm{} to invoke \lookk{} method on keys $\langle k_2, k_5 \rangle$ and $\langle k_2, k_5, k_7, k_8 \rangle$ respectively. Concurrently, suppose transaction $T_2$ invokes the method \tdel{} on key $k_7$ between the two \tlook{s} of $T_1$. This would cause, \otm{} to invoke \dell{} method of \llist{} on $k_7$. Since, we are using lazy-list approach on the underlying \llist, \dell{} involves pointing the next field of element $k_5$ to $k_8$ and marking element $k_7$ as deleted. Thus \dell{} of $k_7$ would execute the following sequence of read/write level operations- $r(k_2) r(k_5) r(k_7) w(k_5) w(k_7)$ where $r(k_5), w(k_5)$ denote read \& write on the element $k_5$ with some value respectively. The execution of \otm{} denoted as a \emph{history} can be represented as a transactional forest as shown in \figref{tree-exec} b). Here the execution of each transaction is a tree. 
  
In this execution, we denote the read-write \op{s} (leaves) as layer-0 and $lu, del$ methods as layer-1. Consider the history (execution) at layer-0 (while ignoring higher-level \op{s}), denoted as $H0$. It can be verified this history is not \opq \cite{GuerKap:Opacity:PPoPP:2008}. This is because, between the two reads of $k_3$ by $T_1$, $T_2$ writes to $k_3$. It can be seen that if history $H0$ is input to an \rwtm{s} one of the transactions between $T_1$ or $T_2$ would be aborted to ensure opacity \cite{GuerKap:Opacity:PPoPP:2008}. \figref{tree-exec} (c) shows the presence of a cycle in the conflict graph of $H0$. 
% correctness (in this case opacity \cite{GuerKap:2008:PPoPP})
  
Now, consider the history $H1$ at layer-1 consists of $lu$, and $del$ \mth{s}, while ignoring the read/write \op{s} since they do not overlap (referred to as pruning in \cite[Chap 6]{WeiVoss:TIS:2002:Morg}). These methods work on distinct keys ($k_3$, $k_6$, and $k_8$). They do not overlap and are not conflicting. So, they can be re-ordered in either way. Thus, $H1$ is \opq{} \cite{GuerKap:Opacity:PPoPP:2008} with equivalent serial history $T_1 T_2$ (or $T_2 T_1$) and the corresponding conflict graph shown in \figref{tree-exec} (d). Hence, a \tab based \otm{} system does not abort any of $T_1$ or $T_2$. This shows that \otm{s} can reduce the number of aborts and provide greater concurrency. 

%\vspace{1mm}
%\noindent
%\noindent{\textbf{Goals of the paper. } }
%The goal of this paper is to present hash table multi-version object STM as a way to increase concurrency and reduce unnecessary aborts in transactions operating on these STMs. Before we present the main summary of our results, next, we define the history of STMs and their limitations to identify the need for developing multi-version object STM. 

\ignore{
\color{red}
\vspace{1mm}
\noindent
\textbf{Multi-Version Object STMs:} Having shown the advantage achieved by \otm{s},\todo{issue 1} We now explore the notion of \emph{Multi-Version Object STMs} or \emph{\mvotm{s}}. It was observed in databases and \rwtm{s} that by storing multiple versions for each \tobj, greater concurrency can be obtained \cite{Kumar+:MVTO:ICDCN:2014}. Maintaining multiple versions can ensure that more read operations succeed because the reading \op{} will have an appropriate version to read. This motivated us to develop \mvotm{s}. 

issue1 = We did not show this. The writeup suggests that we did this. 

\color{blue}
}

\vspace{1mm}
\noindent
\textbf{Multi-Version Object Based STMs:} Some of the \otm{s} such as \cite{HerlihyKosk:Boosting:PPoPP:2008}, \cite{Hassan+:OptBoost:PPoPP:2014}, \cite{Peri+:OSTM:Netys:2018} exploits the advantages of it. In this paper, we propose and analyze \emph{Optimized Multi-version Object Based STMs} or \emph{\opmvotm{s}} along with the rigorous correctness proof. This work is motivated by the observation that databases and \rwtm{s} achieves greater concurrency by storing multiple versions corresponding to each \tobj \cite{Kumar+:MVTO:ICDCN:2014}. Specifically, maintaining multiple versions can ensure that more read operations succeed because the reading \op{} will obtain an appropriate version to read. Our goal is to analyze the benefit of \emph{OPT-\mvotm{s}} over both single version \otm{s} and multi-version \rwtm{s}. %\todo{check for truth}

%\color{black}
%It exports these methods: a) \tbeg{}: begins a transaction, b) \tins{} (or $ins$): inserts a version corresponding to the key, c) \tdel{} (or $del$): deletes a version corresponding to the key and returns its value, d) \tlook{} (or $lu$): looks up an appropriate version corresponding to the key and e) \tryc{}: validates the \op{s} of the transaction.
%\vspace{-.9cm}
%\color{blue}
%\vspace{-.7cm}
\begin{figure}
	\centering
	\captionsetup{justification=centering}
	\scalebox{.4}{\input{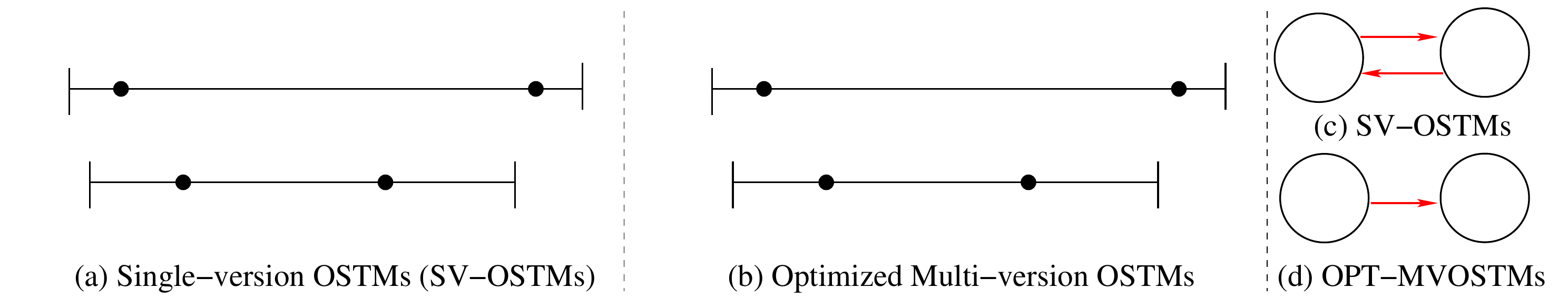_t}}
	% \vspace{-16cm}
	\caption{Advantages of optimized multi-version over single version \otm{}}
	\label{fig:pop}
\end{figure}
%\vspace{-.7cm}
%\vspace{1mm}
\noindent 
\textbf{The potential benefit of \emph{OPT-\mvotm{s}} over \otm{s} and multi-version \rwtm{s}:} We now illustrate the advantage of \emph{OPT-\mvotm{s}} as compared to single-version \otm{s} (\sotm{s}) using the \tab object with $B$ buckets having the same \op{s} as discussed above: $ins, lu, del$. \figref{pop} (a) represents a history H with two concurrent transactions $T_1$ and $T_2$ operating on a \tab{} $ht$. $T_1$ first tries to perform a $lu$ on key $k_3$. But due to the absence of key $k_3$ in $ht$, it obtains a value of $null$. Then $T_2$ invokes $ins$ method on the same key $k_3$ and inserts the value $v_3$ in $ht$. Then $T_2$ deletes the key $k_2$ from $ht$ and returns $v_0$ implying that some other transaction had previously inserted $v_0$ into $k_2$. The second method of $T_1$ is $lu$ on the key $k_2$. With this execution, any \sotm system has to return abort for $T_1$'s $lu$ \op  to ensure correctness, i.e., \opty. Otherwise, if $T_1$ would have obtained a return value $v_0$ for $k_2$, then the history would not be \opq anymore. This is reflected by a cycle in the corresponding conflict graph between $T_1$ and $T_2$, as shown in \figref{pop} (c). Thus to ensure \opty, \sotm system has to return abort for $T_1$'s lookup on $k_2$.

 In an \emph{OPT-\mvotm{s}} based on \tab, denoted as \emph{OPT-\hmvotm}, whenever a transaction inserts or deletes a key $k$, a new version is created. Consider the above example with an \emph{OPT-\hmvotm}, as shown in \figref{pop} (b). Even after $T_2$ deletes $k_2$, the previous value of $v_0$ is still retained. Thus, when $T_1$ invokes $lu$ on $k_2$ after the delete on $k_2$ by $T_2$, \emph{OPT-\hmvotm} return $v_0$ (as previous value). With this, the resulting history is \opq{} with equivalent serial history being $T_1 T_2$. The corresponding conflict graph is shown in \figref{pop} (d) does not have a cycle. 

Thus, \emph{OPT-\mvotm} reduces the number of aborts and achieve greater concurrency than \sotm{s} while ensuring the compositionality. We believe that the benefit of \emph{OPT-\mvotm} over multi-version \rwtm is similar to \sotm over single-version \rwtm as explained above.
%This paper is an extension of work originally presented by Juyal et al. \cite{Juyal+:MVOSTM:SSS:2018} in SSS 2018. We made couple of modifications in MVOSTMs to make it more efficient and called Optimized MVOSTM (or Opt-MVOSTM). 
\emph{OPT-\mvotm} is a generic concept which can be applied to any data structure. In this paper, we have considered the \tab{} and list  based \emph{OPT-\mvotm{s}} as \emph{OPT-\hmvotm} and \emph{OPT-\lmvotm} respectively. If the bucket size $B$ of \tab{} becomes 1 then \tab based \emph{OPT-\mvotm{s}} boils down to the list based \emph{OPT-\mvotm{s}}. 

\emph{OPT-\hmvotm} and \emph{OPT-\lmvotm} use an unbounded number of versions for each key. To address this issue, we develop two variants for both \tab and list data structures (or DS): (1) A garbage collection method in \emph{OPT-\mvotm{s}} to delete the unwanted versions of a key, denoted as \emph{OPT-\mvotmgc}. Garbage collection gave an average performance gain of 16\%  over \emph{OPT-\mvotm} without garbage collection in the best case. Thus, the overhead of garbage collection scheme is less than the performance improvement due to improved memory usage. (2) Placing a limit of $K$ on the number versions in \emph{OPT-\mvotm}, resulting in \emph{OPT-\kotm}. This gave an average performance gain of 24\% over \emph{OPT-\mvotm} without garbage collection in the best case. 

 Experimental results show that \ophkotm performs best among its variants and
outperforms state-of-the-art hash-table based STMs  (HT-OSTM, ESTM, RWSTM, HT-MVTO, HT-KSTM) by a factor of 3.62, 3.95, 3.44, 2.75, 1.85 for workload W1 (90\% lookup, 8\% insert and 2\% delete),  1.44, 2.36, 4.45, 9.84, 7.42 for workload W2 (50\% lookup,  25\% insert and 25\% delete), and 2.11, 4.05, 7.84, 12.94, 10.70 for workload W3 (10\% lookup,  45\% insert and 45\% delete) respectively. Similarly, \oplkotm performs best among its variants and outperforms state-of-the-art list based STMs (list-OSTM, Trans-list, Boosting-list, NOrec-list, list-MVTO, list-KSTM)  by a factor of  2.56, 25.38, 23.57, 27.44, 13.34, 5.99 for W1, 1.51, 20.54, 24.27, 29.45, 24.89, 19.78 for W2, and 2.91, 32.88, 28.45, 40.89, 173.92, 124.89 for W3 respectively. To the best of our knowledge, this is the first work to explore the idea of using multiple versions in \otm{s} to achieve greater concurrency. 

%\color{blue} Thus, the overhead of garbage collection is less than the performance improvement due to improved memory usage.\color{black}
%\todo{This paragraph has to be revisted.}

\noindent 
\textbf{Contributions of the paper:}
\begin{itemize}
\item We propose a new notion of optimized multi-version objects based STM system as \emph{OPT-\mvotm} in \secref{mvdesign}. In this paper, we develop it for list and \tab{} objects as \emph{OPT-\lmvotm} and \emph{OPT-\hmvotm} respectively. \emph{OPT-\mvotm} is generic for other data structures as well.

\item For efficient space utilization in \emph{\opmvotm{s}} with unbounded versions, we develop \emph{Garbage Collection} for \emph{OPT-\mvotm} (i.e. \emph{OPT-\mvotmgc}) and bounded version \emph{OPT-\mvotm} (i.e. \emph{OPT-\kotm}).

\item \secref{cmvostm} shows that \emph{OPT-\lmvotm} and \emph{OPT-\hmvotm} satisfy standard \cc of STMs, \emph{opacity} \cite{GuerKap:Opacity:PPoPP:2008}.

%\item Both the \mvotm{s} can have both bounded and unbounded number of versions. If the implementation chooses unbounded versions, then the resulting STM system satisfies the progress condition: \mvpness \cite{perel+:2010:MultVer:PODC} which requires that lookup-only transaction never gets aborted.

\item  Experimental analysis of both \emph{OPT-\lmvotm} and \emph{OPT-\hmvotm} with state-of-the-art STMs are present in \secref{exp}. Proposed \emph{OPT-\lmvotm} and \emph{OPT-\hmvotm} provide greater concurrency and reduces the number of aborts as compared to  \emph{MVOSTMs}, \sotm{s}, single-version \rwtm{s} and, multi-version \rwtm{s} while maintaining multiple versions corresponding to each key. 

%\item We develop the \emph{garbage collection} method to delete unwanted versions from \mvotm with unbounded versions.%

%\item \mvotm shows significant performance gain over renowned state-of-the-art STMs. %\emph{list-OSTM}\cite{DBLP:journals/corr/abs-1709-00681}.
\end{itemize}

\noindent
\textit{Roadmap:} The paper is organized as follows. We describe our building system model in \secref{model}. In \secref{gcofo}, we formally define the graph characterization of opacity. \secref{mvdesign} represents the \emph{\opmvotm{s}} design and data structure. \secref{pcode} shows the working of \emph{\ophmvotm{s}} and its algorithms. We formally prove the correctness of \emph{\opmvotm{s}} in \secref{cmvostm}.  In \secref{exp} we show the experimental evaluation of \emph{\opmvotm{s}} with state-of-art-STMs. Finally, we conclude in \secref{confu}. %Due to space constraints, we have only outlined the main idea. The full details can be found in \cite{prisat:mvto:corr:2013} and in appendix.
%\todo{In the previous paragraph, you have some text that this is better than existing STMs. We should have some text about that in the contributions. We show that \mvotm is better than single version otm. We show that \mvotm is better than multi version \rwtm}

%\textbf{Road-map:} We described our system model and preliminaries in \secref{model}. \secref{conflicts} represents the graph characterization of \mvotm. \secref{mvdesign} and \secref{pcode} describes the \hmvotm design, data structure and pseudo code respectively. \secref{cmvostm} and \secref{exp} represents correctness of \mvotm and comparisons with the state-of-the-art. Finally, we conclude our paper followed by future work in \secref{confu}. Due to space constraints, we kept only the main idea in the paper, but for more details, please refer technical report \cite{DBLP:journals/corr/abs-1712-09803}. 
\vspace{-4mm}
\section{Building System Model}
\label{sec:model}

 Our assumption follows ~\cite{tm-book,Peri+:OSTM:Netys:2018} in which the system consists of a finite set of $p$ processes, $p_1,\ldots,p_n$, accessed by a finite number of $n$ threads in a completely asynchronous fashion and communicates each other using shared keys (or objects). The threads invoke higher level \mth{s} on the shared objects and get corresponding responses. Consequently, we make no assumption about the relative speeds of the threads. We also assume that none of these processors and threads fail or crash abruptly.
 
\vspace{1mm}
\noindent
\textbf{Events and Methods:} We assume that the threads execute atomic \emph{events} and the events by different threads are (1) read/write on shared/local memory objects, (2) \mth{} invocations (or \emph{\inv}) event and responses (or \emph{\rsp}) event on higher level shared memory objects.

Within a transaction, a process can invoke layer-1 \mth{s} (or \op{s}) on a \emph{\tab} \tobj. A \tab{}($ht$) consists of multiple key-value pairs of the form $\langle k, v \rangle$. The keys and values are respectively from sets $\mathcal{K}$ and $\mathcal{V}$. The \mth{s} that a thread can invoke are: (1) $\tbeg_i(){}$: begins a transaction and returns a unique id to the invoking thread. (2) $\tins_i(ht, k, v)$: transaction $T_i$ inserts a value $v$ onto key $k$ in $ht$. (3) $\tdel_i(ht, k, v)$: transaction $T_i$ deletes the key $k$ from the \tab{} $ht$ and returns the current value $v$ for $T_i$. If key $k$ does not exist, it returns $null$. (4) $\tlook_i(ht, k, v)$: returns the current value $v$ for key $k$ in $ht$  for $T_i$. Similar to \tdel, if the key $k$ does not exist then \tlook returns $null$. (5) $\tryc_i()$: which tries to commit all the \op{s} of $T_i$  and (6) $\trya_i()$: aborts $T_i$. We assume that each \mth{} consists of an \inv{} and \rsp{} event.

%The STM system is initialized using init \mth. 

We denote \tins{} and \tdel{} as \emph{update} \mth{s} (or $\upmt{}$ or $up$) since both of these change the underlying data structure. We denote \tdel{} and \tlook{} as \emph{return-value methods (or $\rvmt{}$ or $rvm$)} as these operations return values from $ht$. A \mth{} may return $ok$ if successful or $\mathcal{A}$(abort) if it sees an inconsistent state of $ht$. 

Formally, we denote a \mth{} $m$ by the tuple $\langle \evts{m}, <_m\rangle$. Here, $\evts{m}$ are all the events invoked by $m$ and the $<_m$ a total order among these events.

\vspace{1mm}
\noindent
\textbf{Transactions:} Following the notations used in database multi-level transactions\cite{WeiVoss:TIS:2002:Morg}, we model a transaction as a two-level tree. The \emph{layer-0} consist of read/write events and \emph{layer-1} of the tree consists of \mth{s} invoked by a transaction.
%\begin{figure}[tbph]

Having informally explained a transaction, we formally define a transaction $T$ as the tuple $\langle \evts{T}, <_T\rangle$. Here $\evts{T}$ are all the read/write events at \emph{layer-0} of the transaction. $<_T$ is a total order among all the events of the transaction.
% $\inv(\tlook_1(ht, k_5))~ r_{11}(k_2, o_2) r_{11}(k_5, o_5)~ \rsp(\tlook(ht, k_5, o_5))~ \inv(\tdel_1(ht, k_7))~ r_{11}(k_2, o_2)~ r_{11}(k_5, o_5)~ r_{11}(k_7, o_7)~ w_{11}(k_5, v_i)~ w_{11}(k_7, v_j) \rsp(\tdel(ht, k_7, o_7))$. 

We denote the first and last events of a transaction $T_i$ as $\fevt{T_i}$ and $\levt{T_i}$. Given any other read/write event $rw$ in $T_i$, we assume that $\fevt{T_i} <_{T_i} rw <_{T_i} \levt{T_i}$. All the \mth{s} of $T_i$ are denoted as $\met{T_i}$. We assume that for any method $m$ in $\met{T_i}$, $\evts{m}$ is a subset of $\evts{T_i}$ and $<_m$ is a subset of $<_{T_i}$. We assume that if a transaction has invoked a \mth, then it does not invoke a new \mth{} until it gets the response of the previous one. Thus all the \mth{s} of a transaction can be ordered by $<_{T_i}$. Formally,  $(\forall m_{p}, m_{q} \in \met{T_i}: (m_{p} <_{T_i} m_{q}) \lor (m_{q} <_{T_i} m_{p}))\rangle$, here $m_{p}$ and $m_{q}$ are $p_{th}$ and $q_{th}$ methods of $T_i$ respectively. 

\noindent
\textbf{Histories:} A \emph{history} is a sequence of events belonging to different transactions. The collection of events is denoted as $\evts{H}$. Similar to a transaction, we denote a history $H$ as tuple $\langle \evts{H},<_H \rangle$ where all the events are totally ordered by $<_H$. The set of \mth{s} that are in $H$ is denoted by $\met{H}$. A \mth{} $m$ is \emph{incomplete} if $\inv(m)$ is in $\evts{H}$ but not its corresponding response event. Otherwise, $m$ is \emph{complete} in $H$. 

Coming to transactions in $H$, the set of transactions in $H$ are denoted as $\txns{H}$. The set of committed (resp., aborted) transactions in $H$ is denoted by $\comm{H}$ (resp., $\aborted{H}$). The set of \emph{live} transactions in $H$ are those which are neither committed nor aborted and denoted as $\live{H} =\txns{H}-\comm{H}-\aborted{H}$.  On the other hand, the set of \emph{terminated} transactions are those which have either committed or aborted and is denoted by $\term{H} = \comm{H} \cup \aborted{H}$.

The relation between the events of transactions \& histories is analogous to the relation between \mth{s} \& transactions. We assume that for any transaction $T$ in $\txns{H}$, $\evts{T}$ is a subset of $\evts{H}$ and $<_T$ is a subset of $<_{H}$. Formally, $\langle \forall T \in \txns{H}: (\evts{T} \subseteq \evts{H}) ~ \land (<_T \subseteq <_{H}) \rangle$. 

We denote two histories $H_1, H_2$ as \emph{equivalent} if their events are the same, i.e., $\evts{H_1} = \evts{H_2}$. A history $H$ is qualified to be \emph{well-formed} if: (1) all the \mth{s} of a transaction $T_i$ in $H$ are totally ordered, i.e. a transaction invokes a \mth{} only after it receives a response of the previous \mth{} invoked by it (2) $T_i$ does not invoke any other \mth{} after it received an $\mathcal{A}$ response or after $\tryc(ok)$ \mth. We only consider \emph{well-formed} histories for \opmvotm.

A \mth{} $m_{ij}$ ($j^{th}$ method of a transaction $T_i$) in a history $H$ is said to be \emph{isolated} or \emph{atomic} if for any other event $e_{pqr}$ ($r^{th}$ event of method $m_{pq}$) belonging to some other \mth{} $m_{pq}$ of transaction $T_p$ either $e_{pqr}$ occurs before $\inv(m_{ij})$ or after $\rsp(m_{ij})$. 

\vspace{1mm}
\noindent
\textbf{Sequential Histories:} A history $H$ is said to be \emph{sequential} (term used in \cite{KuznetsovPeri:Non-interference:TCS:2017, KuznetsovRavi:ConcurrencyTM:OPODIS:2011}) if all the methods in it are complete and isolated. From now onwards, most of our discussion would relate to sequential histories. 

Since in sequential histories all the \mth{s} are isolated, we treat each \mth as a whole without referring to its $inv$ and $rsp$ events. For a sequential history $H$, we construct the \emph{completion} of $H$, denoted $\overline{H}$, by inserting $\trya_k(\mathcal{A})$ immediately after the last \mth{} of every transaction $T_k \in \live{H}$. Since all the \mth{s} in a sequential history are complete, this definition only has to take care of completed transactions. 

Consider a sequential history $H$. Let $m_{ij}(ht, k, v/nil)$ be the first \mth of $T_i$ in $H$ operating on the key $k$ as $\fkmth{\langle ht, k \rangle}{T_i}{H}$, where $m_{ij}$ stands for $j^{th}$ method of $i^{th}$ transaction. For a \mth $m_{ix}(ht, k, v)$ which is not the first \mth on $\langle ht, k \rangle$ of $T_i$ in $H$, we denote its previous \mth on $k$ of $T_i$ as $m_{ij}(ht, k, v) = \pkmth{m_{ix}}{T_i}{H}$.

\vspace{1mm}
\noindent
\textbf{Real-time Order and Serial Histories:} Given a history $H$, $<_H$ orders all the events in $H$. For two complete \mth{s} $m_{ij}, m_{pq}$ in $\met{H}$, we denote $m_{ij} \prec_H^{\mr} m_{pq}$ if $\rsp(m_{ij}) <_H \inv(m_{pq})$. Here \mr{} stands for method real-time order. It must be noted that all the \mth{s} of the same transaction are ordered. Similarly, for two transactions $T_{i}, T_{p}$ in $\term{H}$, we denote $(T_{i} \prec_H^{\tr} T_{p})$ if $(\levt{T_{i}} <_H \fevt{T_{p}})$. Here \tr{} stands for transactional real-time order. 

\cmnt{
Thus, $\prec$ partially orders all the \mth{s} and transactions in $H$. It can be seen that if $H$ is sequential, then $\prec_H^{\mr}$ totally orders all the \mth{s} in $H$. Formally, $\langle (H \text{ is seqential}) \implies (\forall m_{ij}, m_{pq} \in \met{H}: (m_{ij} \prec_H^{\mr} m_{pq}) \lor (m_{pq} \prec_H^{\mr} m_{ij}))\rangle$. 
}

We define a history $H$ as \emph{serial} \cite{Papad:1979:JACM} or \emph{t-sequential} \cite{KuznetsovRavi:ConcurrencyTM:OPODIS:2011} if all the transactions in $H$ have terminated and can be totally ordered w.r.t $\prec_{\tr}$, i.e. all the transactions execute one after the other without any interleaving. Intuitively, a history $H$ is serial if all its transactions can be isolated. Formally, $\langle (H \text{ is serial}) \implies (\forall T_{i} \in \txns{H}: (T_i \in \term{H}) \land (\forall T_{i}, T_{p} \in \txns{H}: (T_{i} \prec_H^{\tr} T_{p}) \lor (T_{p} \prec_H^{\tr} T_{i}))\rangle$. Since all the methods within a transaction are ordered, a serial history is also sequential.% Refer \figref{serial} in \apnref{apndx1} to shows a serial history.

\ignore{
\vspace{1mm}
\noindent
\textbf{Real-time Order \& Serial Histories:}  Two \mth{s} $m_{ij}$ and $m_{pq}$ of history $H$ are in real-time order, if $\rsp(m_{ij}) <_H \inv(m_{pq})$. Similarly, two transactions $T_i$ and $T_j$ are in real-time order, if  $(\levt{T_{i}} <_H \fevt{T_{j}})$, where $\levt{T_{i}}$ and $\fevt{T_{j}}$ represents the last method of $T_i$ and first method of $T_j$ respectively. A history $H$ is said to be serial if all the transactions are atomic and totally ordered.

\vspace{1mm}
\noindent
\textbf{\textit{ Valid and Legal Histories:}} A history $H$ is said to valid if all the \rvmt{s} are lookup from previously committed 
}

%\noindent
%To simplify our analysis, we assume that there exists an initial transaction $T_0$ that invokes $\tdel$ \mth on all the keys of the \tab{} used by any transaction. 

\vspace{1mm}
\noindent
\textbf{Valid Histories:} A \rvmt{} (\tdel{} and \tlook{}) $rvm_{ij}$ on key $k$ is valid if it returns the value updated by any of the previously committed transaction that updated key $k$. A history $H$ is said to valid if all the \rvmt{s} of H are valid. 

\vspace{1mm}
\noindent
\textbf{Legal Histories:} %A \rvmt $m_{ij}$ on key $k$ is legal if it returns the value updated the latest committed transaction that updated key $k$. A history $H$ is said to be legal, if all the \rvmt{s} of H are legal. 
We define the \emph{\legality{}} of \rvmt{s} on sequential histories which we use to define correctness criterion as opacity \cite{GuerKap:Opacity:PPoPP:2008}. Consider a sequential history $H$ having a \rvmt{} $\rvm_{ij}(ht, k, v)$ (with $v \neq null$) as $j^{th}$ method belonging to transaction $T_i$. We define this \rvm \mth{} to be \emph{\legal} if: 
\vspace{-1mm}
\begin{enumerate}
	%\setlength\itemsep{-1em}
	%\vspace{-.2cm}
	\item[Rule 1] \label{step:leg-same} If the $\rvm_{ij}$ is not the first \mth of $T_i$ to operate on $\langle ht, k \rangle$ and $m_{ix}$ is the previous \mth of $T_i$ on $\langle ht, k \rangle$. Formally, $\rvm_{ij} \neq \fkmth{\langle ht, k \rangle}{T_i}{H}$ $\land (m_{ix}(ht, k, v') = \pkmth{\langle ht, k \rangle}{T_i}{H})$ (where $v'$ could be null). Then,
	\begin{enumerate}
		\setlength\itemsep{0em}
		%		\item if $m_{ix}(ht, k, v')$ is a \tins{} \mth i.e. $\tins_{ix}(ht, k, v')$ then $v = v'$. 
		%		\item if $m_{ix}(ht, k, v')$ is a \tlook{} \mth i.e. $\tlook_{ix}(ht, k, v')$ then $v = v'$. 
		%		\item if $m_{ix}(ht, k, v')$ is a \tdel{} \mth i.e. $\tdel_{ix}(ht, k, v'/NULL)$ then $v = NULL$. 
		\item If $m_{ix}(ht, k, v')$ is a \tins{} \mth then $v = v'$. 
		\item If $m_{ix}(ht, k, v')$ is a \tlook{} \mth then $v = v'$. 
		\item If $m_{ix}(ht, k, v')$ is a \tdel{} \mth then $v = null$.
	\end{enumerate}
	
	In this case, we denote $m_{ix}$ as the last update \mth{} of $\rvm_{ij}$, i.e., 
	
	 $m_{ix}(ht, k, v') = \lupdt{\rvm_{ij}(ht, k, v)}{H}$. 
	
	\item[Rule 2] \label{step:leg-ins} If $\rvm_{ij}$ is the first \mth{} of $T_i$ to operate on $\langle ht, k \rangle$ and $v$ is not null. Formally, $\rvm_{ij}(ht, k, v) = \fkmth{\langle ht, k \rangle}{T_i}{H} \land (v \neq null)$. Then,
	\begin{enumerate}
		\setlength\itemsep{0em}
		\item There is a \tins{} \mth{} $\tins_{pq}(ht, k, v)$ in $\met{H}$ such that $T_p$ committed before $\rvm_{ij}$. Formally, $\langle \exists \tins_{pq}(ht, k, v) \in \met{H} : \tryc_p \prec_{H}^{\mr} \rvm_{ij} \rangle$. 
		\item There is no other update \mth{} $up_{xy}$ of a transaction $T_x$ operating on $\langle ht, k \rangle$ in $\met{H}$ such that $T_x$ committed after $T_p$ but before $\rvm_{ij}$. Formally, $\langle \nexists up_{xy}(ht, k, v'') \in \met{H} : \tryc_p \prec_{H}^{\mr} \tryc_x \prec_{H}^{\mr} \rvm_{ij} \rangle$. 		
	\end{enumerate}
	
	In this case, we denote $\tryc_{p}$ as the last update \mth{} of $\rvm_{ij}$, i.e.,  $\tryc_{p}(ht, k, v)$= $\lupdt{\rvm_{ij}(ht, k, v)}{H}$.
	
	\item[Rule 3] \label{step:leg-del} If $\rvm_{ij}$ is the first \mth of $T_i$ to operate on $\langle ht, k \rangle$ and $v$ is null. Formally, $\rvm_{ij}(ht, k, v) = \fkmth{\langle ht, k \rangle}{T_i}{H} \land (v = null)$. Then,
	\begin{enumerate}
		\setlength\itemsep{0em}
		\item There is \tdel{} \mth{} $\tdel_{pq}(ht, k, v')$ in $\met{H}$ such that $T_p$ committed before $\rvm_{ij}$. Formally, $\langle \exists \tdel_{pq}\\(ht, k,$ $ v') \in \met{H} : \tryc_p \prec_{H}^{\mr} \rvm_{ij} \rangle$. Here $v'$ could be null. 
		\item There is no other update \mth{} $up_{xy}$ of a transaction $T_x$ operating on $\langle ht, k \rangle$ in $\met{H}$ such that $T_x$ committed after $T_p$ but before $\rvm_{ij}$. Formally, $\langle \nexists up_{xy}(ht, k, v'') \in \met{H} : \tryc_p \prec_{H}^{\mr} \tryc_x \prec_{H}^{\mr} \rvm_{ij} \rangle$. 		
	\end{enumerate}
	In this case, we denote $\tryc_{p}$ as the last update \mth{} of $\rvm_{ij}$, i.e., $\tryc_{p}(ht, k, v)$ $= \lupdt{\rvm_{ij}(ht, k, v)}{H}$. 
\end{enumerate}
%\vspace{-.3cm}
We assume that when a transaction $T_i$ operates on key $k$ of a \tab{} $ht$, the result of this \mth is stored in \emph{local logs} of $T_i$, $\llog_i$ for later \mth{s} to reuse. Thus, only the first \rvmt{} operating on $\langle ht, k \rangle$ of $T_i$ accesses the shared memory. The other \rvmt{s} of $T_i$ operating on $\langle ht, k \rangle$ do not access the shared memory and they see the effect of the previous \mth{} from the \emph{local logs}, $\llog_i$. This idea is utilized in Rule 1. With reference to Rule 2 and Rule 3, it is possible that $T_x$ could have aborted before $\rvm_{ij}$. %For LR3, since we are assuming that transaction $T_0$ has invoked a \tdel{} \mth{} on all the keys used of the \tab{} objects, there exists at least one \tdel{} \mth{} for every \rvmt on $k$ of $ht$. 

%This we also call \emph{conflict inheritance} as the conflict of the later \mth{} of $T_i$ operating on $\langle ht, k \rangle$ can be found using the conflicts of the first \mth{} of $T_i$.

Coming to \tins{} \mth{s}, since a \tins{} \mth{} always returns $ok$ as they overwrite the node if already present therefore they always take effect on the $ht$. Thus, we denote all \tins{} \mth{s} as \legal{} and only give legality definition for \rvmt{}. We denote a sequential history $H$ as \emph{\legal} or \emph{linearized} if all its \rvm \mth{s} are \legal. We formally prove the legality of the proposed \emph{\opmvotm{s}} in \secref{cmvostm}. 

%If $T_j$ invokes \rvmt on key $k_1$ from $T_i$ in $H$, note that in order to this to happen, $T_i$ must have closest committed before $T_j$ i.e. $c_i <_H \rvm_j(k_{1, i})$. %If $T_i$ is having \upmt{} as insert on the same key $k_1$ then $i_i(k_{1, i}, v_{i1}) <_H c_i <_H \rvm_j(k_{1, i}, v_{i1})$. If $T_i$ is having \upmt{} as delete on the same key $k_1$ then $d_i(k_{1, i}, nil_{i1}) <_H c_i <_H \rvm_j(k_{1, i}, nil_{i1})$. Such $\rvm_j$ is considered as legal.

\vspace{1mm}
\noindent
\textbf{Opacity:} It is a \emph{\ccs} for STMs \cite{GuerKap:Opacity:PPoPP:2008}. A sequential history $H$ is said to be \opq{} if there exists a serial history $S$ such that: (1) $S$ is equivalent to $\overline{H}$, i.e., $\evts{\overline{H}} = \evts{S}$ (2) $S$ is \legal{} and (3) $S$ respects the transactional real-time order of $H$, i.e., $\prec_H^{\tr} \subseteq \prec_S^{\tr}$. 

Finally, we show that history generated by \emph{\opmvotm{s}} satisfy correctness criteria as \opq{}.

%\noindent
%\textbf{MV-permissiveness:} It is a progress condition for STMs by Perelman et al.\cite{perel+:2010:MultVer:PODC} which ensures read-only transactions will never return abort while maintaining multi-versions corresponding to each key.

%A valid sequential history $H$ is said to be \opq{} if there exists a serial history $S$ such that: (1) $S$ is equivalent to $\overline{H}$, i.e. , $\evts{\overline{H}} = \evts{S}$ (2) $S$ is \legal{} and (3) $S$ respects the transactional real-time order of $H$, i.e., $\prec_H^{\tr} \subseteq \prec_S^{\tr}$. 

\section{Graph Characterization of Opacity}
\label{sec:gcofo}

To prove that an STM system satisfies opacity, it is useful to consider graph characterization of histories. In this section, we describe the graph characterization of Guerraoui and Kapalka \cite{tm-book} modified for sequential histories.

Consider a history $H$ which consists of multiple version for each \tobj. The graph characterization uses the notion of \textit{version order}. Given $H$ and a \tobj{} $k$, we define a version order for $k$ as any (non-reflexive) total order on all the versions of $k$ ever created by committed transactions in $H$. It must be noted that the version order may or may not be the same as the actual order in which the versions of $k$ are generated in $H$. A version order of $H$, denoted as $\ll_H$ is the union of the version orders of all the \tobj{s} in $H$. 

Consider the history $H3$ as shown in \figref{mvostm3} $: lu_1(k_{x, 0}, null), lu_2(k_{x, 0}, null), lu_1\\(k_{y, 0}, null), lu_3(k_{z, 0}, null), ins_1(k_{x, 1}, v_{11}), ins_3(k_{y, 3}, v_{31}), ins_2(k_{y, 2}, v_{21}), ins_1(k_{z, 1},\\ v_{12}), c_1, c_2, lu_4(k_{x, 1}, v_{11}), lu_4(k_{y, 2}, v_{21}), ins_3(k_{z, 3}, v_{32}), c_3, lu_4(k_{z, 1}, v_{12}), lu_5(k_{x, 1}, \\v_{11}), lu_6(k_{y, 2}, v_{21}), c_4, c_5, c_6$. Using the notation that a committed transaction $T_i$ writing to $k_x$ creates a version $k_{x, i}$, a possible version order for $H3$ $\ll_{H3}$ is: $\langle k_{x, 0} \ll k_{x, 1} \rangle, \langle k_{y, 0} \ll k_{y, 2} \ll k_{y, 3} \rangle, \langle k_{z, 0} \ll k_{z, 1} \ll k_{z, 3} \rangle $.
\cmnt{
Consider the history $H4: l_1(ht, k_{x, 0}, NULL) l_2(ht, k_{x, 0}, NULL) l_1(ht, k_{y, 0}, NULL) l_3(ht, k_{z, 0},\\ NULL) i_1(ht, k_{x, 1}, v_{11}) i_3(ht, k_{y, 3}, v_{31}) i_2(ht, k_{y, 2}, v_{21}) i_1(ht, k_{z, 1}, v_{12}) c_1 c_2 l_4(ht, k_{x, 1}, v_{11}) l_4(ht,\\ k_{y, 2}, v_{21}) i_3(ht, k_{z, 3}, v_{32}) c_3 l_4(ht, k_{z, 1}, v_{12}) l_5(ht, k_{x, 1}, v_{11}), l_6(ht, k_{y, 2}, v_{21}) c_4, c_5, c_6$. In our representation, we abbreviate \tins{} as $i$, \tdel{} as $d$ and \tlook{} as $l$. Using the notation that a committed transaction $T_i$ writing to $k_x$ creates a version $k_{x, i}$, a possible version order for $H4$ $\ll_{H4}$ is: $\langle k_{x, 0} \ll k_{x, 1} \rangle, \langle k_{y, 0} \ll k_{y, 2} \ll k_{y, 3} \rangle, \langle k_{z, 0} \ll k_{z, 1} \ll k_{z, 3} \rangle $. 
}
%\vspace{-.3cm}
\begin{figure}[H]
	%\includegraphics[scale=0.7]{figs/ex1.pdf_t}
	%\centerline{\scalebox{0.7}{\input{ex1.pstex_t}}}
	\centering
	\captionsetup{justification=centering}
	\centerline{\scalebox{0.45}{\input{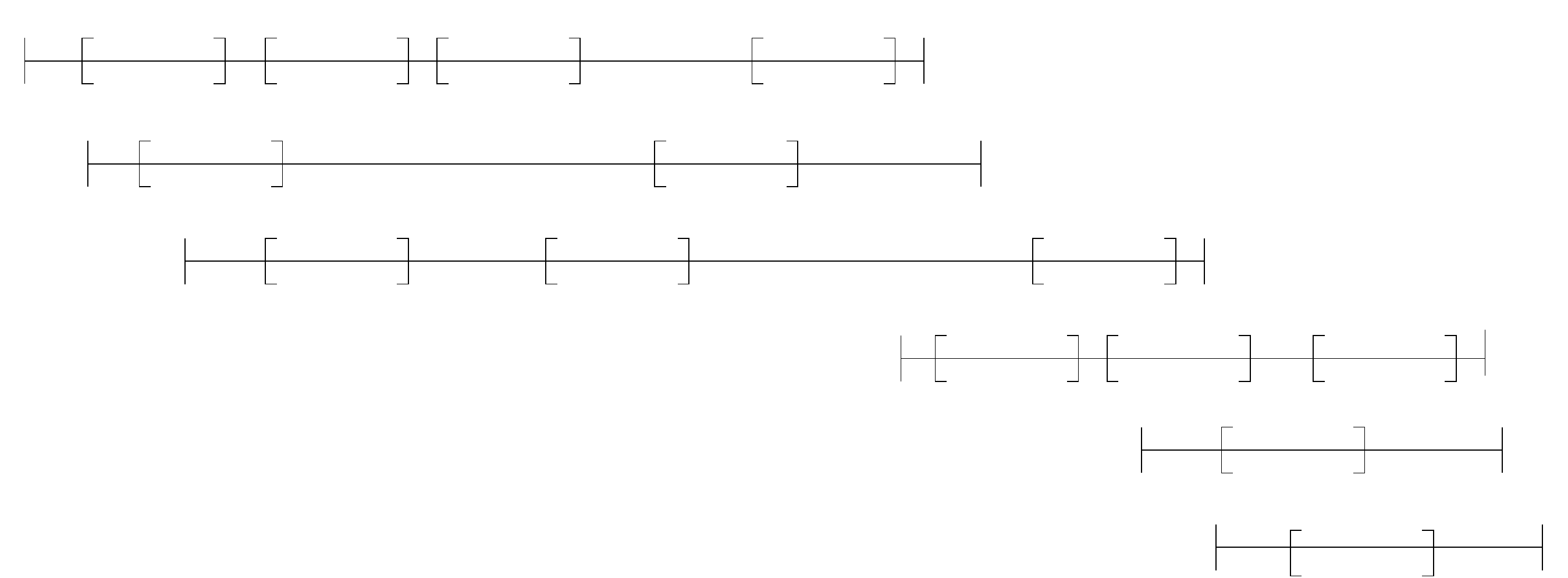_t}}}
	\caption{History $H3$ in time line view}
	\label{fig:mvostm3}
\end{figure}
%\vspace{-.3cm}
\cmnt{
\begin{figure}[tbph]
	%\includegraphics[scale=0.7]{figs/ex1.pdf_t}
	%\centerline{\scalebox{0.7}{\input{ex1.pstex_t}}}
	\centerline{\scalebox{0.45}{\input{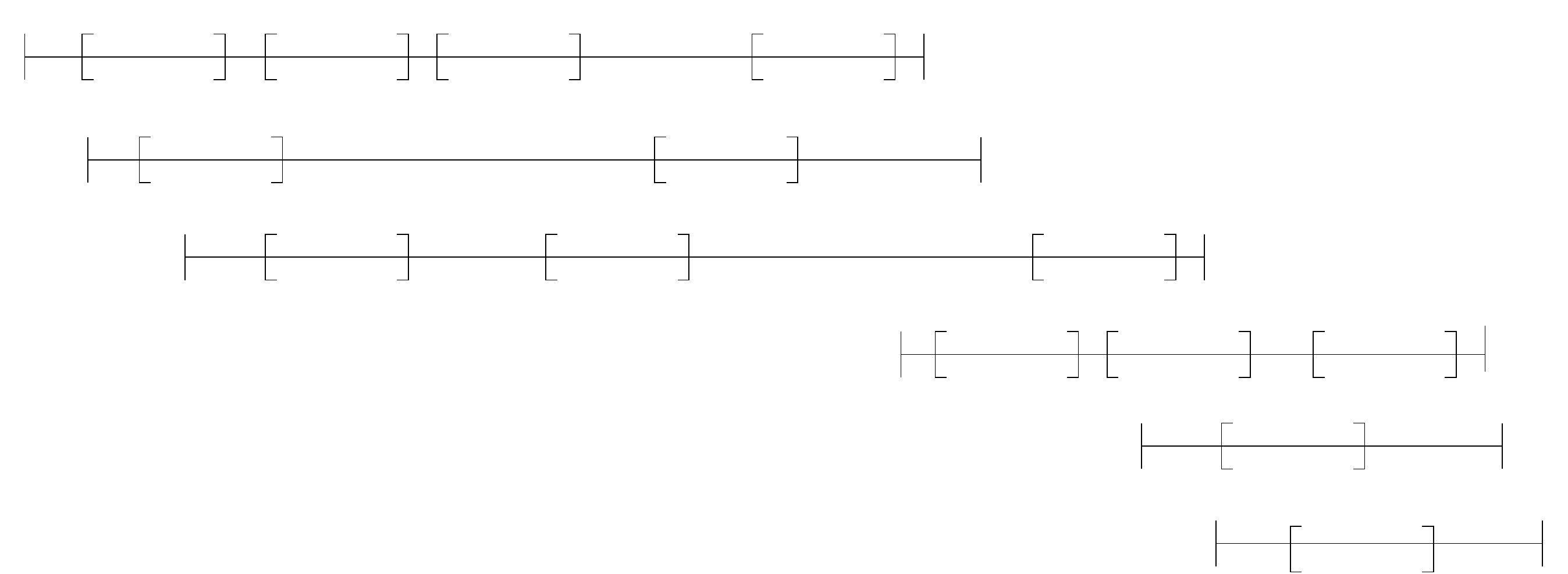_t}}}
	\caption{History $H4$ in time line view}
	\label{fig:mvostm3}
\end{figure}
}
We define the graph characterization based on a given version order. Consider a history $H$ and a version order $\ll$. We then define a graph (called opacity graph) on $H$ using $\ll$, denoted as $\opg{H}{\ll} = (V, E)$. The vertex set $V$ consists of a vertex for each transaction $T_i$ in $\overline{H}$. The edges of the graph are of three kinds and are defined as follows:
\begin{enumerate}
\setlength\itemsep{0em}
\item \textit{\rt}(real-time) edges: If the commit of $T_i$ happens before beginning of  $T_j$ in $H$, then there exist a real-time edge from $v_i$ to $v_j$. We denote set of such edges as $\rt(H)$.
%\vspace{-.2cm}
\item \textit{\rvf}(return value-from) edges: If $T_j$ invokes \rvmt on key $k_1$ from $T_i$ which has already been committed in $H$, then there exists a return value-from edge from $v_i$ to $v_j$. If $T_i$ is having \upmt{} as insert on the same key $k_1$ then $ins_i(k_{1, i}, v_{i1}) <_H c_i <_H \rvm_j(k_{1, i}, v_{i1})$. If $T_i$ is having \upmt{} as delete on the same key $k_1$ then $del_i(k_{1, i}, null) <_H c_i <_H \rvm_j(k_{1, i}, null)$. We denote set of such edges as $\rvf(H)$.
%\vspace{-.2cm}
\item \textit{\mv}(multi-version) edges: This is based on version order. Consider a triplet with successful methods as  $\up_i(k_{1, i},u)$, $\rvm_j(k_{1, i},u)$, $\up_k(k_{1, k},v)$ , where $u \neq v$. As we can observe it from $\rvm_j(k_{1,i},u)$, $c_i <_H\rvm_j(k_{1,i},u)$. if $k_{1,i} \ll k_{1,k}$ then there exist a multi-version edge from $v_j$ to $v_k$. Otherwise ($k_{1,k} \ll k_{1,i}$), there exist a multi-version edge from $v_k$ to $v_i$. We denote set of such edges as $\mv(H, \ll)$.
\end{enumerate}
\cmnt{
\begin{enumerate}

\item \textit{\rt}(real-time) edges: If commit of $T_i$ happens before beginning of  $T_j$ in $H$, then there exist a real-time edge from $v_i$ to $v_j$. We denote set of such edges as $\rt(H)$.

\item \textit{\rvf}(return value-from) edges: If $T_j$ invokes \rvmt on key $k_1$ from $T_i$ which has already been committed in $H$, then there exist a return value-from edge from $v_i$ to $v_j$. If $T_i$ is having \upmt{} as insert on the same key $k_1$ then $i_i(k_{1, i}, v_{i1}) <_H c_i <_H \rvm_j(k_{1, i}, v_{i1})$. If $T_i$ is having \upmt{} as delete on the same key $k_1$ then $d_i(k_{1, i}, nil_{i1}) <_H c_i <_H \rvm_j(k_{1, i}, nil_{i1})$. We denote set of such edges as $\rvf(H)$.

\item \textit{\mv}(multi-version) edges: This is based on version order. Consider a triplet with successful methods as  $\up_i(k_{1,i},u)$ $\rvm_j(k_1,u)$ $\up_k(k_{1,k},v)$ , where $u \neq v$. As we can observe it from $\rvm_j(k_1,u)$, $c_i <_H\rvm_j(k_1,u)$. if $k_{1,i} \ll k_{1,k}$ then there exist a multi-version edge from $v_j$ to $v_k$. Otherwise ($k_{1,k} \ll k_{1,i}$), there exist a multi-version edge from $v_k$ to $v_i$. We denote set of such edges as $\mv(H, \ll)$.
\vspace{-.3cm}
\end{enumerate}
}
\noindent We now show that if a version order $\ll$ exists for a history $H$ such that it is acyclic, then $H$ is \opq. 

\begin{figure}[H]
%\includegraphics[scale=0.7]{figs/ex1.pdf_t}
%\centerline{\scalebox{0.7}{\input{ex1.pstex_t}}}
\centerline{\scalebox{0.7}{\input{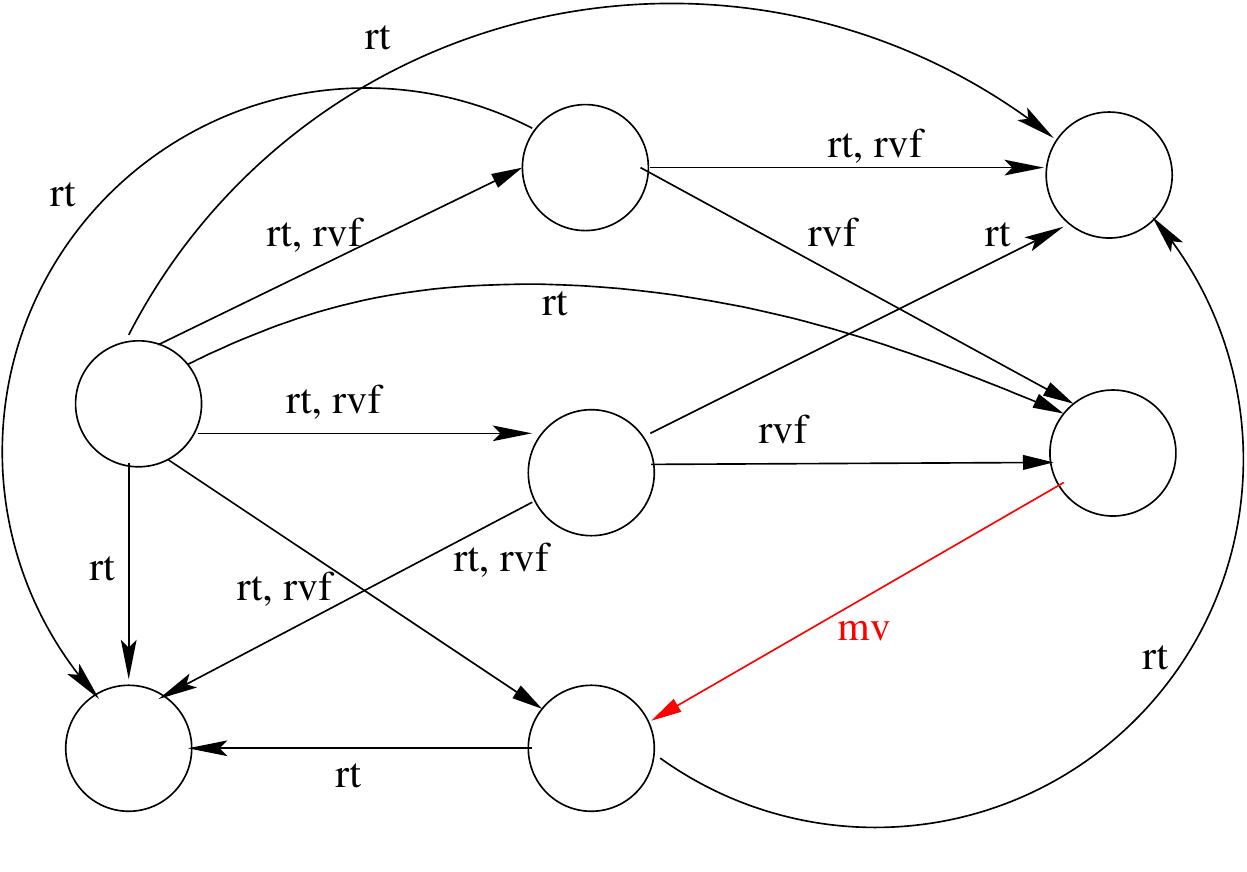_t}}}
\caption{$\opg{H3}{\ll_{H3}}$}
\label{fig:mvostm1}
\end{figure}

Using this construction, the $\opg{H3}{\ll_{H3}}$ for history $H3$ and $\ll_{H3}$ is given above is shown in \figref{mvostm1}. The edges are annotated. The only \mv{} edge from $T_4$ to $T_3$ is because of \tobj{s} $k_y, k_z$. $T_4$ lookups value $v_{12}$ for $k_z$ from $T_1$ whereas $T_3$ also inserts $v_{32}$ to $k_z$ and commits before $lu_4(k_{z,1}, v_{12})$.

Given a history $H$ and a version order $\ll$, consider the graph $\opg{\overline{H}}{\ll}$. While considering the $\rt{}$ edges in this graph, we only consider the real-time relation of $H$ and not $\overline{H}$. It can be seen that $\prec_H^{RT} \subseteq \prec_{\overline{H}}^{RT}$ but with this assumption, $\rt(H) = \rt(\overline{H})$. Hence,  we get the following property, 

\begin{property}
\label{prop:hoverh}
The graphs $\opg{H}{\ll}$ and $\opg{\overline{H}}{\ll}$ are the same for any history $H$ and $\ll$. 
\end{property}
%~
\begin{definition}
\label{def:seqver}
For a \tseq{} history $S$, we define a version order $\ll_S$ as follows: For two version $k_{x,i}, k_{x,j}$ created by committed transactions $T_i, T_j$ in $S$, $\langle k_{x,i} \ll_S k_{x,j} \Leftrightarrow T_i <_S T_j \rangle $. 
\end{definition}
%~
%Now, consider the following lemmas,
Now we show the correctness of our graph characterization using the following lemmas and theorem. 

\begin{lemma}
\label{lem:seracycle}
Consider a \legal{} \tseq{} history $S$. Then the graph $\opg{S, \ll_S}$ is acyclic.
\end{lemma}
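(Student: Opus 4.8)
The plan is to let the serial order of $S$ itself certify acyclicity. Since $S$ is \tseq{}, all its transactions are totally ordered by $\prec_S^{\tr}$, which is a strict total order (irreflexive and transitive). I would prove the single orientation claim: \emph{whenever $\opg{S}{\ll_S}$ contains an edge $v_a \to v_b$, the transactions satisfy $T_a \prec_S^{\tr} T_b$}. Granting this, acyclicity is immediate, since a directed cycle would produce $T_{a} \prec_S^{\tr} \cdots \prec_S^{\tr} T_{a}$, contradicting irreflexivity of the total order. So the whole argument reduces to checking the three edge types of the characterization.

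For a \rt{} edge $v_i \to v_j$, the commit of $T_i$ precedes the first event of $T_j$ in $S$; because $S$ is serial, real-time order coincides with $\prec_S^{\tr}$, giving $T_i \prec_S^{\tr} T_j$ directly. For an \rvf{} edge $v_i \to v_j$, transaction $T_j$ returns a version installed by the committed $T_i$, so $c_i <_S \rvm_j$. In a serial history the events of two distinct transactions are totally separated; if we had $T_j \prec_S^{\tr} T_i$, then every event of $T_j$ — in particular $\rvm_j$ — would precede $\fevt{T_i}$ and hence $c_i$, contradicting $c_i <_S \rvm_j$. Thus $T_i \prec_S^{\tr} T_j$, as needed.

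The only substantive case, and the one place where \legality{} is actually used, is the \mv{} edges. Consider the defining triple $\up_i(k_{1,i},u)$, $\rvm_j(k_{1,i},u)$, $\up_k(k_{1,k},v)$ with $u \neq v$, so $T_i,T_k$ are distinct committed transactions updating $k_1$ and $\rvm_j$ reads the version of $T_i$. If $k_{1,k} \ll_S k_{1,i}$, the edge is $v_k \to v_i$, and by \defref{seqver} this is exactly $T_k \prec_S^{\tr} T_i$, so the claim holds trivially. The interesting subcase is $k_{1,i} \ll_S k_{1,k}$, i.e.\ $T_i \prec_S^{\tr} T_k$ by \defref{seqver}, where the edge is $v_j \to v_k$ and I must establish $T_j \prec_S^{\tr} T_k$. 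Suppose for contradiction that $T_k \prec_S^{\tr} T_j$. Then, as in the \rvf{} argument, $c_k <_S \rvm_j$, while $T_i \prec_S^{\tr} T_k$ gives $c_i <_S c_k$; together $c_i <_S c_k <_S \rvm_j$ with $T_k$ an update \mth{} on $\langle ht, k_1\rangle$. But the fact that $\rvm_j$ returns the version of $T_i$ makes $T_i$ the last update \mth{} of $\rvm_j$, and \legality{} (the ``no intervening committed update'' clauses, Rule 2(b) and Rule 3(b)) forbids exactly such a $T_k$ committing strictly between $c_i$ and $\rvm_j$ — a contradiction. Hence $T_j \prec_S^{\tr} T_k$, and every \mv{} edge respects $\prec_S^{\tr}$ as well.

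I expect this last subcase to be the main obstacle: the \rt{} and \rvf{} edges follow by merely unfolding what ``serial'' means, whereas the \mv{} edge is the step that genuinely exploits legality of $S$ to rule out a committed writer slipping in between the installed version and its reader. The remaining ingredients — that distinct transactions in a serial history are totally separated in $<_S$, and that a strict total order has no cycles — are routine, so once the orientation claim is verified for all three edge types the conclusion follows.
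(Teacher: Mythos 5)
Your proposal is correct and follows essentially the same route as the paper's proof: the paper also orders the transactions by the serial order of $S$ (via a numeric $\ordfn$ function) and checks, edge type by edge type, that every \rt{}, \rvf{}, and \mv{} edge goes from a smaller to a larger transaction in that order, invoking \legality{} exactly where you do — to rule out a committed updater $T_k$ falling strictly between the version's creator $T_i$ and its reader $T_j$ in the second \mv{} subcase. Your write-up just makes the legality contradiction more explicit than the paper's one-line appeal to it.
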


\begin{proof}
We numerically order all the transactions in $S$ by their real-time order by using a function \textit{\ordfn}. For two transactions $T_i, T_j$, we define $\ord{T_i} < \ord{T_j} \Leftrightarrow T_i <_S T_j$. Let us analyze the edges of $\opg{S, \ll_S}$ one by one: 
\begin{itemize}
\item \rt{} edges: It can be seen that all the \rt{} edges go from a lower \ordfn{} transaction to a higher \ordfn{} transaction. 

\item \rvf{} edges: If $T_j$ lookups $k_x$ from $T_i$ in $S$ then $T_i$ is a committed transaction with $\ord{T_i} < \ord{T_j}$. Thus, all the \rvf{} edges from a lower \ordfn{} transaction to a higher \ordfn{} transaction.

\item \mv{} edges: Consider a successful \rvmt{} $\rvm_j(k_x, u)$ and a committed transaction $T_k$ writing $v$ to $k_x$ where $u \neq v$. Let $c_i$ be $\rvm_j(k_x, u)$'s \lastw. Thus, $\up_i(k_{x,i}, u) \in \evts{T_i}$. Thus, we have that $\ord{T_i} < \ord{T_j}$. Now there are two cases w.r.t $T_i$: (1) Suppose $\ord{T_k} < \ord{T_i}$. We now have that $T_k \ll T_i$. In this case, the mv edge is from $T_k$ to $T_i$. (2) Suppose $\ord{T_i} < \ord{T_k}$ which implies that $T_i \ll T_k$. Since $S$ is legal, we get that $\ord{T_j} < \ord{T_k}$. This case also implies that there is an edge from $\ord{T_j}$ to $\ord{T_k}$. Hence, in this case as well the \mv{} edges go from a transaction with lower \ordfn{} to a transaction with higher \ordfn{}. 

%\item \mv{} edges: Consider a successful \rvmt{} $\rvm_k(k_x, v)$ and a committed transaction $T_i$ writing $u$ to $k_x$ where $u \neq v$. Let $c_j$ be $\rvm_k(k_x, v)$'s \lastw. Thus, $\up_j(k_{x,j}, v) \in \evts{T_j}$. Thus, we have that $\ord{T_j} < \ord{T_k}$. Now there are two cases w.r.t $T_i$: (1) Suppose $\ord{T_i} < \ord{T_j}$. We now have that $T_i \ll T_j$. In this case, the mv edge is from $T_i$ to $T_j$. (2) Suppose $\ord{T_j} < \ord{T_i}$ which implies that $T_j \ll T_i$. Since $S$ is legal, we get that $\ord{T_k} < \ord{T_i}$. This case also implies that there is an edge from $\ord{T_k}$ to $\ord{T_i}$. Hence, in this case as well the \mv{} edges go from a transaction with lower \ordfn{} to a transaction with higher \ordfn{}. 
\end{itemize}

Thus, in all the three cases the edges go from a lower \ordfn{} transaction to higher \ordfn{} transaction. This implies that the graph is acyclic. 
\end{proof}

\begin{lemma}
\label{lem:eqv_hist_mvorder}
Consider two histories $H, H'$ that are equivalent to each other. Consider a version order $\ll_H$ on the \tobj{s} created by $H$. The mv edges $\mv(H, \ll_H)$ induced by $\ll_H$ are the same in $H$ and $H'$.
\end{lemma}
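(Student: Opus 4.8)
The plan is to show that the relation defining an mv edge depends only on the set of methods (equivalently, on $\evts{H}$) together with the version order $\ll_H$, and not on the total order $<_H$ itself. Since $H$ and $H'$ are equivalent they have identical event sets, hence identical method sets and identical sets of versions created by committed transactions; feeding the common version order $\ll_H$ into the same data must therefore yield the same edges.

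First I would record the consequences of equivalence. From $\evts{H} = \evts{H'}$ it follows that $\met{H} = \met{H'}$, that the two histories have the same transactions, and that the same committed transactions create the same versions $k_{x,i}$ of every \tobj. In particular $\ll_H$, which is a total order on the versions created by committed transactions of $H$, is simultaneously a version order for $H'$. Moreover, each return-value method carries the identity of the version it reads as part of its annotation (for instance $\rvm_j(k_{1,i},u)$ names version $i$ of the key), so the ``reads-from'' information is also identical across $H$ and $H'$.

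Next I would unfold the definition of an mv edge. Such an edge is generated by a triplet of successful methods $\up_i(k_{1,i},u)$, $\rvm_j(k_{1,i},u)$, $\up_k(k_{1,k},v)$ with $u \neq v$, and its orientation is $v_j \to v_k$ if $k_{1,i} \ll_H k_{1,k}$ and $v_k \to v_i$ otherwise. Every ingredient here---the existence of the three methods, the keys and versions $k_{1,i}, k_{1,k}$ they act on, and the values $u,v$---is determined purely by $\met{\cdot}$, and the only comparison used to orient the edge is made in $\ll_H$. Hence a triplet yields a given oriented edge in $H$ exactly when the identical triplet yields the identical oriented edge in $H'$. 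This gives $\mv(H,\ll_H) \subseteq \mv(H',\ll_H)$ and, by symmetry, the reverse inclusion, so the two edge sets coincide.

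The one point that needs care---and the only place the total order seems to enter---is the parenthetical observation $c_i <_H \rvm_j(k_{1,i},u)$ appearing in the definition. I would argue this is not an extra ordering condition that could differ between $H$ and $H'$, but a semantic fact already implied by $\rvm_j$ returning the value of version $i$: the reads-from annotation fixes which committed transaction supplied the version, independently of how the remaining events are linearly ordered. Consequently the triplet test is insensitive to the difference between $<_H$ and $<_{H'}$, which is exactly what makes the mv edges invariant under equivalence.
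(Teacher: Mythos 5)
Your proof is correct and follows essentially the same route as the paper's: both argue that the mv-edge relation is determined solely by the (shared) event/method set together with the version order, and not by the total order $<_H$. Your explicit handling of the parenthetical $c_i <_H \rvm_j(k_{1,i},u)$ condition --- observing that it is a consequence of the reads-from annotation rather than an independent ordering constraint --- is a detail the paper's terser proof leaves implicit, but it does not change the approach.
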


\begin{proof}
Since the histories are equivalent to each other, the version order $\ll_H$ is applicable to both of them. It can be seen that the \mv{} edges depend only on events of the history and version order $\ll$. It does not depend on the ordering of the events in $H$. Hence, the \mv{} edges of $H$ and $H'$ are equivalent to each other. 
\end{proof}

\noindent Using these lemmas, we prove the following theorem.

\begin{theorem}
\label{thm:opg}
A \valid{} history H is opaque iff there exists a version order $\ll_H$ such that $\opg{H}{\ll_H}$ is acyclic.
\end{theorem}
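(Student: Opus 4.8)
The statement is a biconditional, so I would prove the two implications separately, in each case leaning on the three preliminary results already established: Property~\ref{prop:hoverh} (which lets me move freely between $H$ and $\overline{H}$), Lemma~\ref{lem:seracycle} (a \legal{} \tseq{} history has an acyclic graph), and Lemma~\ref{lem:eqv_hist_mvorder} (mv edges are invariant under equivalence). The plan is to handle the ``only if'' direction first, since the lemmas were tailored for it, and then to invert that construction for the ``if'' direction.

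For the forward direction, suppose $H$ is \opq. Then there is a \tseq, \legal{} history $S$ equivalent to $\overline{H}$ with $\prec_H^{\tr} \subseteq \prec_S^{\tr}$. I would take $\ll_H := \ll_S$ as given by Definition~\ref{def:seqver} and argue that $\opg{H}{\ll_H}$ is acyclic by showing that every one of its edges is oriented consistently with the total order $<_S$. The mv edges of $H$ and $S$ coincide by Lemma~\ref{lem:eqv_hist_mvorder}, and these point forward in $<_S$ by the analysis inside the proof of Lemma~\ref{lem:seracycle}. For \rt{} edges, $\prec_H^{\tr} \subseteq \prec_S^{\tr}$ makes them forward immediately; for \rvf{} edges, equivalence preserves the return-value relation, so each \rvf{} edge of $H$ is also a \rvf{} edge of the \legal{} serial history $S$ and is therefore forward. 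Since every edge respects $<_S$, the graph contains no cycle, and by Property~\ref{prop:hoverh} the same holds for $\opg{\overline{H}}{\ll_H}$.

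For the converse, suppose some $\ll_H$ makes $\opg{H}{\ll_H}$ acyclic; by Property~\ref{prop:hoverh} so is $\opg{\overline{H}}{\ll_H}$. I would topologically sort the finitely many vertices of this acyclic graph and let $S$ be the serial history that places the transactions of $\overline{H}$ in that order, preserving each transaction's internal method order. By construction $S$ is \tseq{} and equivalent to $\overline{H}$, and because the \rt{} edges are respected by any topological sort we get $\prec_H^{\tr} \subseteq \prec_S^{\tr}$. The remaining obligation — and the real work — is to show $S$ is \legal. Fix a \rvmt{} $\rvm_j(ht,k,u)$; validity of $H$ supplies a committed transaction $T_i$ whose update on $k$ produced the returned value, giving a \rvf{} edge $T_i \to T_j$ and hence $T_i <_S T_j$. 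I would then show that $T_i$'s version is the last update on $k$ before $T_j$ in $S$: if some committed $T_k$ with $T_i <_S T_k <_S T_j$ wrote a different value $v \neq u$, the triplet $(\up_i, \rvm_j, \up_k)$ forces an mv edge that is either $v_j \to v_k$ or $v_k \to v_i$, i.e.\ either $T_j <_S T_k$ or $T_k <_S T_i$, each contradicting $T_i <_S T_k <_S T_j$. A $T_k$ writing the same value $u$ does not break \legality, so no violating interleaving exists; the case $u = null$ (delete, via Rule~3) and the intra-transaction case (Rule~1, a read served from $\llog_i$) are dispatched the same way. Thus $S$ witnesses opacity.

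The main obstacle is precisely this legality argument in the converse: it is here that the multi-version edges earn their keep, and I expect the delicacy to lie in the bookkeeping between the version order $\ll_H$ (which orients the mv edges) and the topological order $<_S$ (which orients everything), together with cleanly separating the three legality rules — in particular the $null$-returning deletes and the distinction between the first access and a repeated access to $\langle ht, k\rangle$ within a single transaction.
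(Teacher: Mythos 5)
Your proposal is correct and follows essentially the same route as the paper's proof: the ``only if'' direction takes $\ll_H=\ll_S$ from Definition~\ref{def:seqver} and shows every edge of $\opg{H}{\ll_H}$ is forward in $<_S$ (i.e.\ the graph sits inside the acyclic $G_S$ of Lemma~\ref{lem:seracycle}, using Lemma~\ref{lem:eqv_hist_mvorder} for the mv edges), and the ``if'' direction builds $S$ by topological sort and establishes legality by the same mv-edge contradiction on the triplet $(\up_i,\rvm_j,\up_k)$. The only differences are presentational (order of the two implications, and your explicit remark that a same-value intermediate writer is harmless, which the paper leaves implicit).
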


\begin{proof}
\textbf{(if part):} Here we have a version order $\ll_H$ such that $G_H=\opg{H}{\ll}$ is acyclic. Now we have to show that $H$ is opaque. Since the $G_H$ is acyclic, a topological sort can be obtained on all the vertices of $G_H$. Using the topological sort, we can generate a \tseq{} history $S$. It can be seen that $S$ is equivalent to $\overline{H}$. Since $S$ is obtained by a topological sort on $G_H$ which maintains the real-time edges of $H$, it can be seen that $S$ respects the \rt{} order of $H$, i.e $\prec_H^{RT} \subseteq \prec_S^{RT}$. 

Similarly, since $G_H$ maintains return value-from (\rvf{}) order of $H$, it can be seen that if $T_j$ lookups $k_x$ from $T_i$ in $H$ then $T_i$ terminates before $lu_j(k_x)$ and $T_j$ in $S$. Thus, $S$ is \valid. Now it remains to be shown that $S$ is \legal. We prove this using contradiction. Assume that $S$ is not legal. Thus, there is a successful \rvmt{} $\rvm_j(k_x, u)$ such that its \lastw{} in $S$ is $c_k$ and $T_k$ updates value $v (\neq u)$ to $k_x$, i.e $\up_k(k_{x,k}, v) \in \evts{T_k}$. Further, we also have that there is a transaction $T_i$ that inserts $u$ to $k_x$, i.e $\up_i(k_{x,i}, u) \in \evts{T_i}$. Since $S$ is \valid, as shown above, we have that $T_i \prec_{S}^{RT} T_k \prec_{S}^{RT} T_j$.

Now in $\ll_H$, if $k_{x,k} \ll_H k_{x,i}$ then there is an edge from $T_k$ to $T_i$ in $G_H$. Otherwise ($k_{x,i} \ll_H k_{x,k}$), there is an edge from $T_j$ to $T_k$. Thus, in either case, $T_k$ can not be in between $T_i$ and $T_j$ in $S$ contradicting our assumption. This shows that $S$ is legal.

%Similarly, since $G_H$ maintains return value-from (\rvf{}) order of $H$, it can be seen that if $T_j$ lookups $k_x$ from $T_i$ in $H$ then $T_i$ terminates before $lu_j(k_x)$ and $T_j$ in $S$. Thus, $S$ is \valid. Now it remains to be shown that $S$ is \legal. We prove this using contradiction. Assume that $S$ is not legal. Thus, there is a successful \rvmt{} $\rvm_k(k_x, v)$ such that its \lastw{} in $S$ is $c_i$ and $T_i$ updates value $u (\neq v)$ to $k_x$, i.e $\up_i(k_{x,i}, u) \in \evts{T_i}$. Further, we also have that there is a transaction $T_j$ that insert $v$ to $k_x$, i.e $\up_j(k_{x,j}, v) \in \evts{T_j}$. Since $S$ is \valid, as shown above, we have that $T_j \prec_{S}^{RT} T_i \prec_{S}^{RT} T_k$.

%Now in $\ll_H$, if $k_{x,i} \ll_H k_{x,j}$ then there is an edge from $T_i$ to $T_j$ in $G_H$. Otherwise ($k_{x,j} \ll_H k_{x,i}$), there is an edge from $T_k$ to $T_i$. Thus in either case $T_i$ can not be in between $T_j$ and $T_k$ in $S$ contradicting our assumption. This shows that $S$ is legal. 

\textbf{(Only if part):} Here we are given that $H$ is opaque and we have to show that there exists a version order $\ll$ such that $G_H=\opg{H}{\ll} (=\opg{\overline{H}}{\ll}$, \propref{hoverh}) is acyclic. Since $H$ is opaque there exists a \legal{} \tseq{} history $S$ equivalent to $\overline{H}$ such that it respects real-time order of $H$. Now, we define a version order for $S$, $\ll_S$ as in \defref{seqver}. Since the $S$ is equivalent to $\overline{H}$, $\ll_S$ is applicable to $\overline{H}$ as well. From \lemref{seracycle}, we get that $G_S=\opg{S}{\ll_S}$ is acyclic. Now consider $G_H = \opg{\overline{H}}{\ll_S}$. The vertices of $G_H$ are the same as $G_S$. Coming to the edges, 

\begin{itemize}
\item \rt{} edges: We have that $S$ respects real-time order of $H$, i.e $\prec_{H}^{RT} \subseteq \prec_{S}^{RT}$. Hence, all the \rt{} edges of $H$ are a subset of $S$. 

\item \rvf{} edges: Since $\overline{H}$ and $S$ are equivalent, the return value-from relation of $\overline{H}$ and $S$ are the same. Hence, the \rvf{} edges are the same in $G_H$ and $G_S$. 

\item \mv{} edges: Since the version-order and the \op{s} of the $H$ and $S$ are the same, from \lemref{eqv_hist_mvorder} it can be seen that $\overline{H}$ and $S$ have the same \mv{} edges as well.
\end{itemize}

Thus, the graph $G_H$ is a subgraph of $G_S$. Since we already know that $G_S$ is acyclic from \lemref{seracycle}, we get that $G_H$ is also acyclic. 
\end{proof}

\section{\emph{\opmvotm{s}} Design and Data Structure}
\label{sec:mvdesign}
This section describes the design and data structure of optimized \emph{\mvotm{s}} (or \emph{\opmvotm{s}}). Here, we propose hash-table and list based \emph{\opmvotm{s}} as \emph{\ophmvotm} and \emph{\oplmvotm} respectively. \emph{\opmvotm{s}} are generic for other data structure as well. \emph{\ophmvotm} is a \tab based \opmvotm that explores the idea of multiple versions in \otm{s} for \tab object to achieve greater concurrency. The design of \ophmvotm is similar to \hmvotm{} \cite{Juyal+:MVOSTM:SSS:2018} consisting of $B$ buckets. All the keys of the \tab in the range $\mathcal{K}$ are statically allocated to one of these buckets. 

Each bucket consists of linked-list of nodes along with two sentinel nodes \emph{head} and \emph{tail} with values -$\infty$ and +$\infty$ respectively. The structure of each node is as $\langle key, ~ lock, ~ \\ marked, ~ vl, ~ nnext \rangle$. The $key$ is a unique value from the set of all keys $\mathcal{K}$. All the nodes are stored in increasing order in each bucket as shown in \figref{1mvostmdesign} (a), similar to any linked-list based concurrent set implementation \cite{Heller+:LazyList:PPL:2007, Harris:NBList:DISC:2001}. In the rest of the document, we use the terms key and node interchangeably. To perform any operation on a key, the corresponding $lock$ is acquired. $marked$ is a boolean field which represents whether the key is deleted or not. The deletion is performed in a lazy manner similar to the concurrent linked-lists structure \cite{Heller+:LazyList:PPL:2007}. If the $marked$ field is true then key corresponding to the node has been logically deleted; otherwise, it is present. The $vl$ field of the node points to the version list (shown in \figref{1mvostmdesign} (b)) which stores multiple versions corresponding to the key. The last field of the node is $nnext$ which stores the address of the next node. It can be seen that the list of keys in a bucket is as an extension of \emph{\lazy} \cite{Heller+:LazyList:PPL:2007}. Given a node $n$ in the linked-list of bucket $B$ with key $k$, we denote its fields as $n.key$ (or $k.key$), $~ n.lock$ (or $k.lock$), $~ n.marked$ (or $k.marked$), $~ n.vl$ (or $k.vl$), $~ n.nnext$ (or $k.nnext$).

%\hmvotm is a \tab based \mvotm that explores the idea of using multiple versions in \otm{s} for \tab object to achieve greater concurrency. The design of \hmvotm is similar as \hotm{}\cite{Peri+:OSTM:Netys:2018} consisting of $m$ buckets. Each bucket consists of linked-list of nodes along with two sentinel nodes head and tail with values -$\infty$ and +$\infty$ respectively. The structure of each node is as $\langle key, ~ lock, ~ marked, ~ vl, ~ nnext \rangle$. The $key$ $\langle k_1, k_2, ...k_n \rangle$ are unique and store in increasing order in each bucket refer \figref{1mvostmdesign} (a). To perform any operation on the corresponding key $lock$ should be acquired. $marked$ is a Boolean field which represents the key is deleted or not. Here, we perform lazy deletion which is popular in literature lazy-lists \cite{Heller+:LazyList:PPL:2007, Herlihy:ArtBook:2012}. If the $marked$ field is true then key corresponding to the node has been logically deleted otherwise it's present. The $vl$ field of node points to a list as version list to store multiple version corresponding to each key that reduces the number of aborts and provides greater concurrency. The last field of the node is $nnext$ which stores the address of the next node. 

%that addresses a concurrent \textit{hash-table} as a shared data structure 

%\vspace{-.3cm}
\cmnt{
\begin{figure}
	\centering
	\captionsetup{justification=centering}
	\centerline{\scalebox{0.47}{\input{figs/mvostm10.pdf_t}}}
	\caption{\htmvotm design}
	\label{fig:1mvostmdesign1}
\end{figure}
%\vspace{-.3cm}
}
%\vspace{-.7cm}
\begin{figure}
	\centering
	\captionsetup{justification=centering}
	\centerline{\scalebox{0.4}{\input{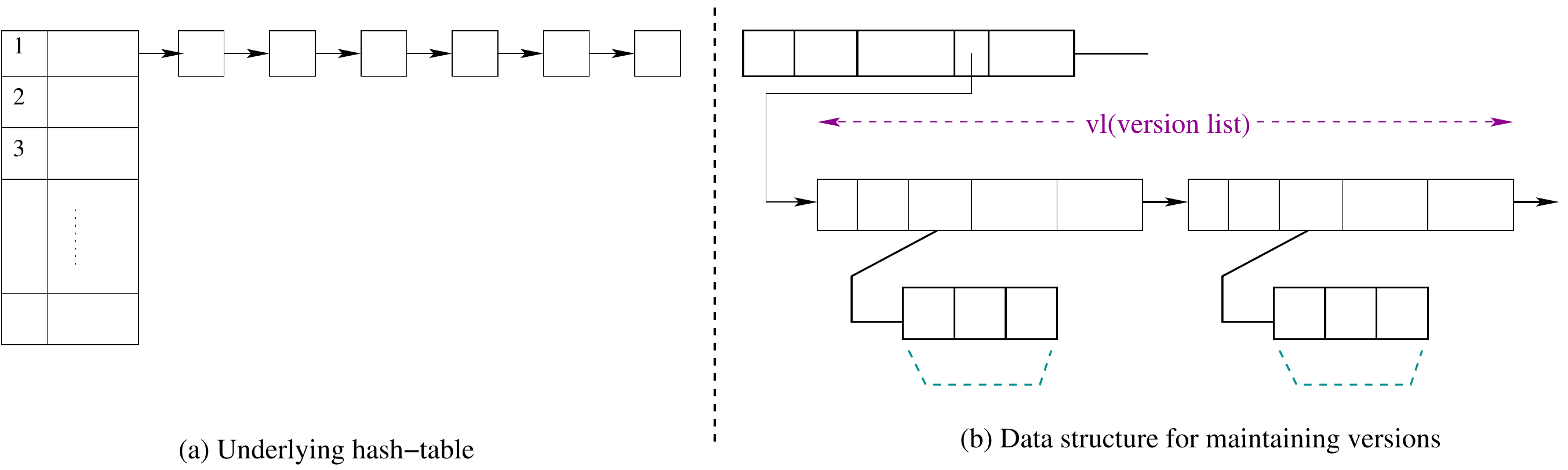_t}}}
	\caption{Optimized \emph{HT-MVOSTM} design}
	\label{fig:1mvostmdesign}
\end{figure}

%\vspace{-.7cm}
The structure of each version in the $vl$ of a key $k$ is $\langle ts, ~ val, ~ rvl, ~ max_{rvl}, ~ vnext \rangle$ as shown in \figref{1mvostmdesign} (b). The field $ts$ denotes the unique timestamp of the version. In our algorithm, every transaction is assigned a unique timestamp when it begins which is also its $id$. Thus $ts$ of this version is the timestamp of the transaction that created it. All the versions in the $vl$ of $k$ are sorted by $ts$. Since the timestamps are unique, we denote a version, $ver$ of a node $n$ with key $k$ having $ts$ $j$ as $n.vl[j].ver$ or $k.vl[j].ver$. The corresponding fields in the version as $k.vl[j].ts, ~ k.vl[j].val, ~ k.vl[j].rvl, ~ k.vl[j].max_{rvl}, \\~ k.vl[j].vnext$. 

The field $val$ contains the value updated by an update transaction. If this version is created by an insert \mth $\tins_i(ht, k, v)$ by transaction $T_i$, then $val$ will be $v$. On the other hand, if the \mth is $\tdel_i(ht, k, v)$ then $val$ will be $null$. In this case, as per the algorithm, the node of key $k$ will also be marked. \ophmvotm algorithm does not immediately physically remove deleted keys from the \tab. The need for this is explained below. Thus an \rvmt (\tdel or \tlook) on key $k$ can return $null$ when it does not find the key or encounters a $null$ value for $k$. 

The $rvl$ field stands for \emph{return value list} which is a list of all the transactions that executed \rvmt{} on this version, i.e., those transactions which returned $val$. The first optimization in \emph{\ophmvotm} to reduce the traversal time of $rvl$, we have used $max_{rvl}$  which contains the maximum $ts$ of the transaction that executed \rvmt{} on this version. The field $vnext$ points to the next available version of that key. 

In order to increase the efficiency and utilize the memory properly, We propose two variants of \emph{\ophmvotm} as follows: First, we apply garbage collection (or GC) on the versions and propose \emph{\ophmvotmgc}. It maintains unbounded versions in $vl$ (the length of the list) while deleting the unwanted versions using garbage collection scheme. Second, we propose \emph{\ophkotm} which maintains the bounded number of versions such as $K$ and improves the efficiency further. Whenever a new version $ver$ is created and is about to be added to $vl$, the length of $vl$ is checked. If the length becomes greater than $K$, the version with lowest $ts$ (i.e., the oldest) is replaced with the new version $ver$ and thus maintaining the length back to $K$. 

We propose \emph{\oplmvotm{s}} while considering the bucket size as 1 in \emph{\ophmvotm}. Along with this, we propose two variants of \emph{\oplmvotm}
as \emph{\oplmvotmgc} and \emph{\oplkotm} which applies the garbage collection scheme in unbounded versions and bounded $K$ versions for list based object respectively  similar to \emph{\ophmvotm}.
\ignore{
%The need of adding the $ts$ field in the version list to identify the cyclic conflict which violates \opty{} \cite{GuerKap:Opacity:PPoPP:2008}. %This is shown by the graph characterization in \secref{conflicts} of \figref{1mvostm1} if any cyclic conflict will occur then the generated history leads to non \opq{} history.    
%	\vspace{-.3cm}

	\begin{figure}
	\captionsetup{justification=centering}
		%\includegraphics[scale=0.7]{figs/ex1.pdf_t}
		%\centerline{\scalebox{0.7}{\input{ex1.pstex_t}}}
		\centerline{\scalebox{0.40}{\input{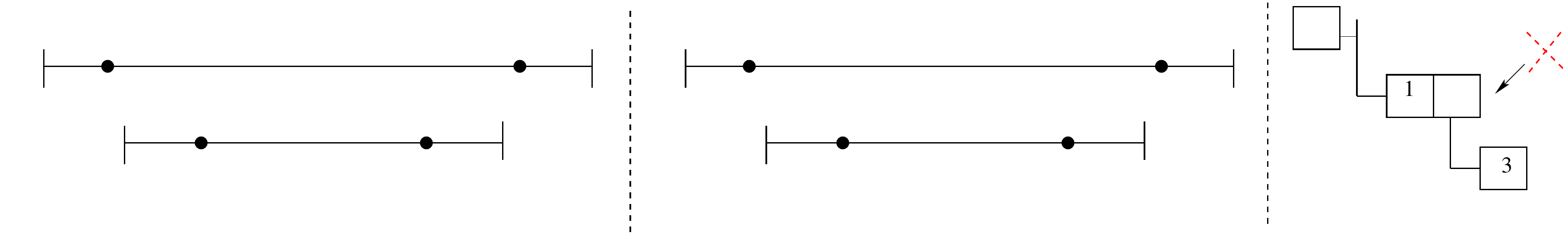_t}}}
		\caption{Need of maintaining deleted node in underlying DS to satisfy opacity}
		\label{fig:mvostm81}
	\end{figure}	
%	\vspace{-.3cm}

\textbf{Why do we need to store the deleted node?} This will be clear by the \figref{mvostm81}, where we have two concurrent transactions $T_2$ and $T_3$. History in the \figref{mvostm81} (a) is not opaque because we can't come up with any serial order. To make it serial (or opaque) the second method $ins_2(ht, k_{3,2})$ of transaction $T_2$ has to return abort. For that \hmvotm scheduler have to keep the information about the already executed conflicting method $del_3(ht, k_{3,0}, NULL)$ of transaction $T_3$ using $ts$. Therefore, $del_3(ht, k_{3,0}, NULL)$ of transaction $T_3$ add itself into $T_1.rvl$ (assume that transaction $T_1$ has already inserted a version on key $k_3$ refer \figref{mvostm81} (c). So in future if any lower time-stamp transaction less than $T_3$ will come then that lower transaction will ABORT (in this case transaction $T_2$ is aborting in (\figref{mvostm81} (b)) because higher time-stamp already present in the $rvl$ (\figref{mvostm81} (c)) of the same version. After aborting $T_2$ we will get the equivalent serial history $T_2$ followed by $T_3$ %\todo{Change according to updated fig,}.
}

\vspace{1mm}
\noindent
\textbf{Marked Version Nodes:} \ophmvotm stores keys even after they have been deleted (the version of the nodes which have $marked$ field as true). This is because some other concurrent transactions could read from a different version of this key and not the $null$ value inserted by the deleting transaction. Consider for instance the transaction $T_1$ performing $lu_1(ht, k_2, v_0)$ as shown in \figref{pop} (b). Due to the presence of previous version $v_0$, \ophmvotm returns this earlier version $v_0$ for $lu_1(ht, k_2, v_0)$ \mth. Whereas, it is not possible for \hotm to return the version $v_0$  because $k_1$ has been removed from the system by delete method of higher timestamp transaction $T_2$ than $T_1$. In that case, $T_1$ would have to be aborted. Thus as explained in \secref{intro}, storing multiple versions increases the concurrency. 

To store deleted keys along with the live keys (or unmarked node) in a \lazy will increase the traversal time to access unmarked nodes. Consider \figref{nostm2}, in which there are four keys $\langle k_2, k_4, k_8, k_{11}\rangle$ present in the list. Here $\langle k_2, k_4, k_8 \rangle$ are marked (or deleted) nodes while $k_{11}$ is unmarked. Now, consider accessing the key $k_{11}$ by \ophmvotm as a part of one of its methods. Then \ophmvotm would have to unnecessarily traverse the marked nodes to reach key $k_{11}$. %So, this motivates us to maintain a data structure \lsl{} corresponding to each bucket of \tab{} \cite{DBLP:journals/corr/abs-1709-00681}. Along with it also stores version list corresponding to each key that reduces number of aborts. 

%corresponding to each bucket of \tab{} where each key also maintains the version list to reduces number of aborts.  
%\vspace{-.7cm}
\begin{figure}
	\centering
	\begin{minipage}[b]{0.49\textwidth}
		\centering
		\captionsetup{justification=centering}
		\scalebox{.38}{\input{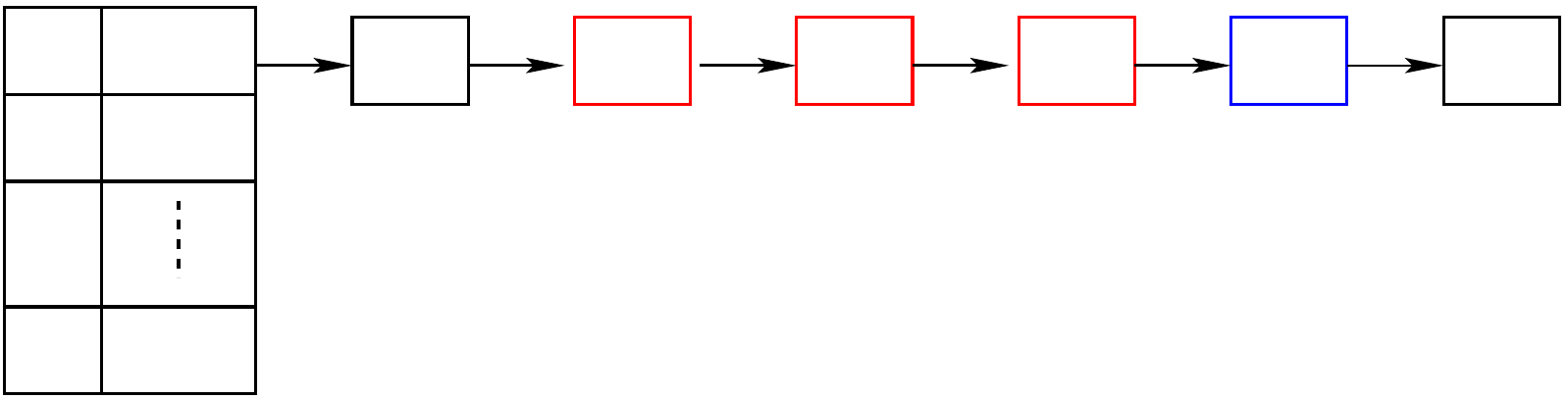_t}}
		% \vspace{-16cm}
		\caption{Searching $k_{11}$ over \emph{lazy-list}}
		\label{fig:nostm2}
	\end{minipage}   
	\hfill
	\begin{minipage}[b]{0.49\textwidth}
		\scalebox{.38}{\input{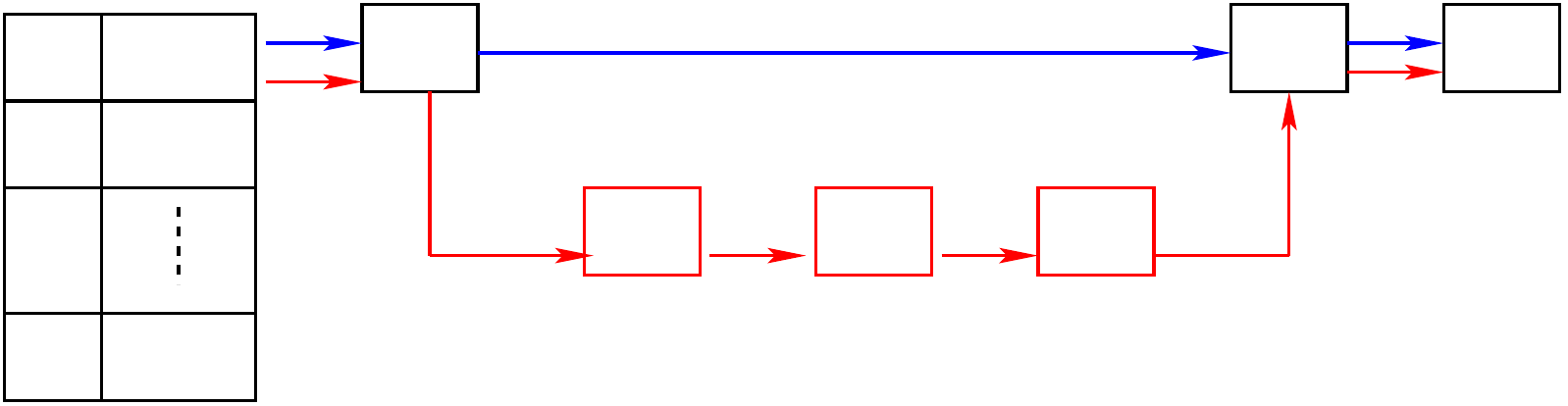_t}}
		\centering
		\captionsetup{justification=centering}
		% \vspace{-16cm}
		\caption{Searching $k_{11}$ over $\lsl{}$}
		\label{fig:nostm}
	\end{minipage}
\end{figure}
%\vspace{-.7cm}
This motivated us to modify the \lazy structure of nodes in each bucket to form a skip list based on red and blue links. We denote it as \emph{red-blue lazy-list} or \emph{\lsl}. This idea was earlier explored by Peri et al. in developing \otm{s} \cite{Peri+:OSTM:Netys:2018}. $\lsl{}$ consists of nodes with two links, red link (or \rn) and blue link (or \bn). The node which is not marked (or not deleted) are accessible from the head via \bn{}. While all the nodes including the marked ones can be accessed from the head via \rn. With this modification, let us consider the above example of accessing unmarked key $k_{11}$. It can be seen that $k_{11}$ can be accessed much more quickly through \bn as shown in \figref{nostm}. Using the idea of $\lsl{}$, we have modified the structure of each node as \emph{$\langle$ key, lock, marked, vl, \rn, \bn $\rangle$}. Further, for a bucket $B$, we denote its linked-list as $B.\lsl$. 

\ignore{

Now, if node corresponding to the key doesn't exist in the underlying DS then \textbf{How we will maintain the node time-stamp by \rvmt{}?} This case will occur when node corresponding to the key is not present in \bn{} as well as \rn{}. Then \rvmt{} will create the node corresponding to the key in \rn{} with marked field as true and append the $0^{th}$ version in its $vl$ and add itself into $0^{th}.rvl$. Inserting the $0^{th}$ version ensures that the transactions which contain only \rvmt{s} will never abort. 
%\vspace{-.7cm}
\begin{figure}
	\centering
	\captionsetup{justification=centering}
	\scalebox{.4}{\input{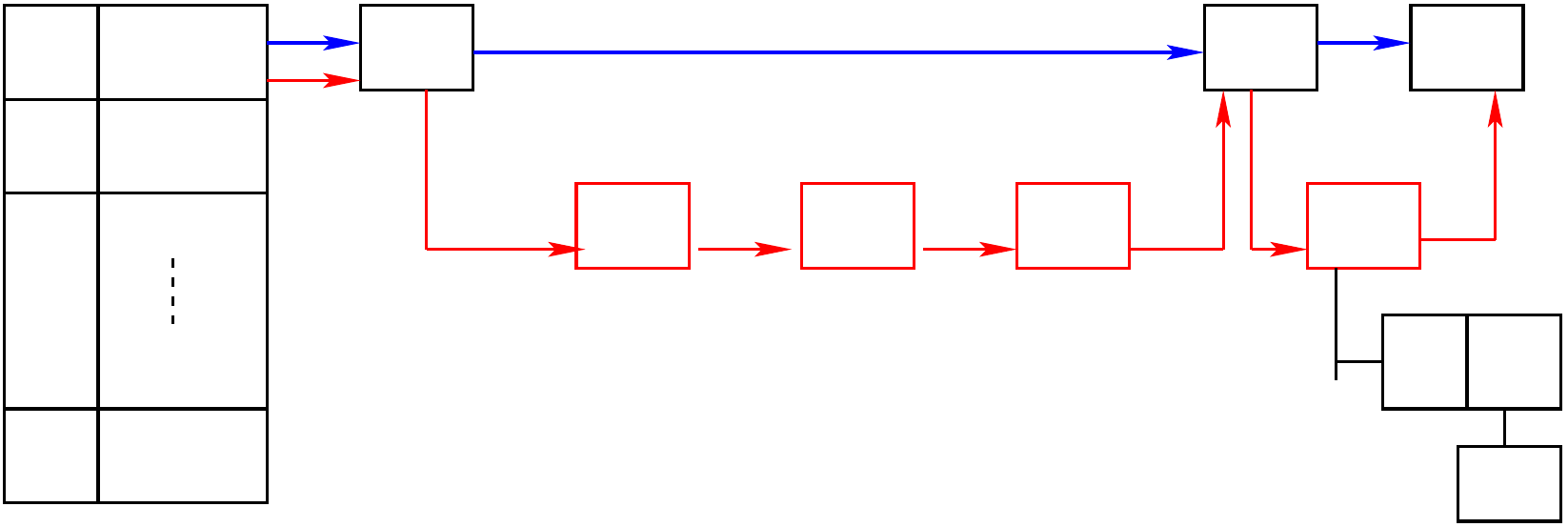_t}}
	% \vspace{-16cm}
	\caption{Execution under \lsl{}. $k_{20}$ is added in \lsl{} if not present.}
	\label{fig:nostm1}
\end{figure}
%\vspace{-.7cm}

Consider the \figref{nostm}, in which \npluk{} want to search a key $k_{20}$. Here, the key $k_{20}$ is not present in the \bn{} as well as \rn{}. So it will create the node corresponding to the key in \rn{} with marked field as true and append the $0^{th}$ version in its $vl$ and add itself into $0^{th}.rvl$ refer \figref{nostm1}. 

Each transaction maintains local log in the form of $L\_txlog$ where they store unique transaction id, status and $L\_list$. If transaction is currently executing or committed or aborted due to some inconsistency then status will be live, commit or abort, respectively. $L\_list$ is a vector which contains $\langle bucket\_id, key, val, preds, currs, op\_status, opn \rangle$. The $op\_status$ is an operation status which could be $OK$ or $FAIL$.
Each \mth{} identifies the location corresponding to the key in \bn{} and \rn{} from \lsl{} where they have to work and store the $preds$ and $currs$ in the form of an array. $opn$ stands for operation name which could be \npluk, \npins{}, \npdel{} which will work on $key$. 

}

\section{Working of \ophmvotm{}} 
\label{sec:pcode}

% Within a transaction, a process can invoke layer-1 \mth{s} (or \op{s}) on a \emph{\tab} \tobj. A \tab{}($ht$) consists of multiple key-value pairs of the form $\langle k, v \rangle$. The keys and values are respectively from sets $\mathcal{K}$ and $\mathcal{V}$. The \mth{s} that a thread can invoke are: (1) $\tbeg_i{}$: Begins a transaction and returns an unique id to the invoking thread (2) $\tins_i(ht, k, v)$: Transaction $T_i$ inserts a value $v$ onto key $k$ in $ht$ (3) $\tdel_i(ht, k, v)$: Transaction $T_i$ deletes the key $k$ from the \tab{} $ht$ \& returns the current value $v$ for $T_i$. If Key $k$ does not exist, it return $null$. (4) $\tlook_i(ht, k, v)$: returns the current value $v$ for key $k$ in $ht$  for $T_i$. Similar to \tdel, if the key $k$ does not exist then \tlook returns $null$ (5) $\tryc_i$ which tries to commit all the \op{s} of $T_i$  and (6) $\trya_i$ aborts $T_i$. We assume that each \mth{} consists of an \inv{} and \rsp{} event.

\ophmvotm exports \tbeg, \tins, \tdel, \tlook, and \tryc \mth{s} as explained in \secref{model}. Among them \tdel, \tlook are return-value methods (or \rvmt{s}) while \tins, \tdel are update methods (or \upmt{s}). We treat \tdel as both \rvmt as well as \upmt. The \rvmt{s} return the current value of the key. %The \upmt{s} update the value of the keys in the shared memory after validation as a part of the \tryc \mth. 
The \upmt{s}, update to the keys are first noted down in the local log, \emph{\llog}. Then in the \tryc \mth after successful validations of these updates are transferred to the shared memory.%\todo{Removed the figure as we lack space.} 
\ignore{
%\figref{nostm24} shows the life cycle of a transaction in \mvotm.  

\begin{figure}
	\centering	
	\captionsetup{justification=centering}
	\centerline{\scalebox{0.5}{\input{figs/pnetys8.pdf_t}}}
	\caption{Transaction life cycle of \mvotm}	
	\label{fig:nostm24}
\end{figure}
}
We now explain the working of each method as follows: %Additional details including pseudocode are in \apnref{rpcode} and in the accompanying technical report\cite{Juyal+:MVOSTM:Corr:2017}. %\todo{cite only at one place in the pcode} \cite{Juyal+:MVOSTM:Corr:2017}.
%\todo{SK: compressed. Review }
%Due to lack of space, the pseudocode of some of the \mth{s} are shown here. The remaining details are explained in the \apnref{rpcode} and in the accompanying technical report \todo{cite only at one place in the pcode} \cite{Juyal+:MVOSTM:Corr:2017}. 

%\textit{STM\_begin(), STM\_lookup(), STM\_insert(), STM\_delete()} and \textit{STM\_tryC()} methods. Each transaction is divided into two phases. First, \textbf{{\rvp{}}} in which initially each transaction invokes \textit{STM begin()} method by a thread to begin a new transaction $T_i$. It creates transaction local log and allocates unique id. In the case of \npins{} method actual insertion will happen in \cp{}(or \nptc{}). First, it will identify the node corresponding to the key in local log. If the node exists then it just update the local log with useful information like value, operation name and operation status for later use in \nptc{}. Otherwise, it will create a local log and update it. 
%\textit{STM \init()} This method invokes at the start of the STM system. Initialize the global counter as 1. 
%\vspace{-.5cm}

\vspace{1mm}
\noindent 
\textbf{\emph{t\_begin()}:} A thread invokes a new transaction $T_i$ using this method. The transaction $T_i$ local log $\llog_i$ is initialized at \Lineref{bg2}. This \mth returns a unique id to the invoking thread by incrementing an atomic counter at \Lineref{bg3}. This unique id is also the timestamp of the transaction $T_i$. For convenience, we use the notation that $i$ is the timestamp (or id) of the transaction $T_i$. 

\begin{algorithm}
	%\label{alg:llsearch} 
	\scriptsize
	\caption{\emph{t\_begin()}: It provides the local log and unique id to each transaction.}
	\setlength{\multicolsep}{0pt}
	%\begin{multicols}{2}
	\begin{algorithmic}[1]
		\makeatletter\setcounter{ALG@line}{0}\makeatother
		\Procedure{\emph{t\_begin{()}}}{}\label{lin:bg1}
		\State txLog $\gets$ new txLog(). \Comment{Initialize the local log of transaction} \label{lin:bg2}
		\State $t\_id$ $\gets$ get\&inc(\emph{counter}). \Comment{Get the unique transaction id ($t\_id$) while incrementing the \emph{counter} atomically} \label{lin:bg3}
		\State return $t\_id$.
		\EndProcedure
	\end{algorithmic}
	%\end{multicols}
\end{algorithm}

\vspace{1mm}
\noindent 
\textbf{\rvmt{s}}: It can be either $\tdel(ht, k, v)$ or $\tlook(ht, k, v)$. Both these \mth{s} return the current value of key $k$. \algoref{rvmt} gives the high level overview of these \mth{s}. First, the algorithm checks to see if the given key is already in the local log, $\llog_i$ of $T_i$ (\linref{rvm-chk_log}). If the key is already there then the current \rvmt is not the first method on $k$ and is a subsequent method of $T_i$ on $k$. So, we can return the value of $k$ from the $\llog_i$. 

If the key is not present in the $\llog_i$, then \ophmvotm searches into shared memory. Specifically, it searches the bucket to which $k$ belongs to. Every key in the range $\mathcal{K}$ is statically allocated to one of the $B$ buckets. So the algorithms search for $k$ in the corresponding bucket, say $B_k$ to identify the appropriate location, i.e., identify the correct \emph{predecessor} or $pred$ and \emph{current} or  $curr$ keys in the \lsl of $B_k$ without acquiring any locks similar to the search in \lazy \cite{Heller+:LazyList:PPL:2007}. Since each key has two links, \rn and \bn, the algorithm identifies four node references: two $pred$ and two $curr$ according to red and blue links. They are stored in the form of an array with $\bp$ and $\bc$ corresponding to blue links; $\rp$ and $\rc$ corresponding to red links. If both $\rp$ and $\rc$ nodes are unmarked then the $pred, curr$ nodes of both red and blue links will be the same, i.e., $\bp = \rp$ and $\rc = \bc$. Thus depending on the marking of $pred, curr$ nodes, a total of two, three or four different nodes will be identified. Here, the search ensures that $\bp.key \leq \rp.key < k \leq \rc.key \leq \bc.key$. 

Next, the re-entrant locks on all the $pred, curr$ keys are acquired in increasing order to avoid the deadlock. Then all the $pred$ and $curr$ keys are validated by \emph{rv\_Validation()} in \linref{rvm-chk_valid} as follows: (1) If $pred$ and $curr$ nodes of blue links are not marked, i.e, $(\neg \bp.marked) ~ \&\& ~ (\neg \bc.marked)$. (2) If the next links of both blue and red $pred$ nodes point to the correct $curr$ nodes: $(\bp.\bn = \bc) ~ \&\& ~ (\rp.\rn = \rc)$ at \Lineref{rvv1}. 

If any of these checks fail, then the algorithm retries to find the correct $pred$ and $curr$ keys. It can be seen that the validation check is similar to the validation in concurrent \lazy \cite{Heller+:LazyList:PPL:2007}.%\todo{SK: Removed ref to app} 
%Detailed reasons behind these checks along with pseudocode is given in \apnref{rpcode} and \cite{Juyal+:MVOSTM:Corr:2017}.\todo{cite only at one place in the pcode}

Next, we check if $k$ is in $B_k.\lsl$. If $k$ is not in $B_k$, then we create a new node $n$ for $k$ as: $\langle key=k, lock=false, marked = true, vl=ver, nnext=\phi \rangle$ and insert it into $B_k.\lsl$ such that it is accessible only via \rn. %since this node is marked (\linref{rvm-ver_abs}). 
This node will have a single version $ver$ as $\langle ts=0, val=null, rvl=i, max_{rvl} = i, vnext=\phi \rangle$. Here invoking transaction $T_i$ is creating a version with timestamp $0$ to ensure that \rvmt{s} of other transactions will never abort. As we have explained in \figref{pop} (b) of \secref{intro}, even after $T_2$ deletes $k_2$, the previous value of $v_0$ is still retained. Thus, when $T_1$ invokes $lu$ on $k_2$ after the delete on $k_2$ by $T_2$, \ophmvotm will return $v_0$ (as previous value). Hence, each \rvmt will find a version to read while maintaining the infinite version corresponding to each key $k$. $marked$ field sets to true because it access by $\rn$ only. In $rvl$ and $max_{rvl}$, $T_i$ adds the timestamp as $i$ in it and $vnext$ is initialized to empty value. Since $val$ is null and the $n$, this version and the node are not technically inserted into $B_k.\lsl$. %\todo{Done:Add the reason for this version.}

If $k$ is in $B_k.\lsl$ then, $k$ is the same as $\rc$ or $\bc$ or both. Let $n$ be the node of $k$ in $B_k.\lsl$. We then find the version of $n$, $ver_j$ which has the timestamp $j$ such that $j$ has the largest timestamp smaller than $i$ (timestamp of $T_i$). Add $i$ to $ver_j$'s $rvl$ (\linref{rvm-add_i}). $max_{rvl}$ maintains the maximum timestamp among all rv\_methods read from this version at \Lineref{rvm-max}. Then release the locks, update the local log $\llog_i$ in \linref{rvm-unlock} and return the value stored in $ver_j.val$ in \linref{rvm-ret}. %\todo{Removed rv\_validation from here due to lack of space.} 
%We considered \npluk{} and \npdel{} methods as a \rvmt{}. At \Lineref{rvm2} of algorithm 1, if \npluk{} is not the first method on a particular key means if its a subsequent method of the same transaction for that key then it will search into the local log and return the value and operation status based on the previous operation. If \npluk{} is the first method on that key then it will search into \emph{concurrent data structure (CDS)} to identify the location of the node corresponding to the key in \bn{} as well as \rn{} at \Lineref{rvm5}. Then at \Lineref{rvm7}, it will do the \emph{rv\_Validation()} after acquiring the locks on $preds$ and $currs$ in increasing order of keys to avoid deadlocks. If \emph{rv\_Validation()} succeed and node corresponding to the key is not present in the underlying DS then it will create the node and insert the $0^{th}$ version and add itself into $0^{th}$.$rvl$ refer \figref{mvostm81} and release the locks in same lock acquisition order. From \Lineref{rvm13} to \Lineref{rvm16}, it identifies the version from $CDS$ with largest TS as $T_j$ which is less than TS($T_i$) with the help of \emph{find\_lts()} method. Then add TS($T_i$) into $rvl$ of $T_j$ and finally release the locks and update the local log. \npdel{} will work same as a \npluk{}. The actual deletion will happen in the \nptc{}.
%\setlength{\textfloatsep}{0pt} 
%%%%%%%%%%%%%%%%%%%%%%%%%%%%%%%%%%%%%%%%%%%%%%%%%%rv_method()%%%%%%%%%%%%%
%\vspace{-.5cm}
\begin{algorithm}
	\scriptsize
	\caption{\emph{\rvmt:} It can be either $\tdel_i(ht, k, v)$ or $\tlook_i(ht, k, v)$ on key $k$ that maps to bucket $B_k$ of hash-table $ht$.} \label{algo:rvmt} 	
	%\setlength{\multicolsep}{0pt}
	%\begin{multicols}{2}
	\begin{algorithmic}[1]
		\makeatletter\setcounter{ALG@line}{5}\makeatother
		\Procedure{$\rvmt_i$}{$ht, k, v$}		
		\If{($k \in \llog_i$)} \label{lin:rvm-chk_log}
			\State Update the local log and return $val$. \label{lin:rvm-pres_log}
		\Else \label{lin:rvm-els_log}
			\State Search in \lsl to identify the $preds[]$ and $currs[]$ for $k$ using \bn and \rn in bucket $B_k$. \label{lin:rvm-search}
			\State Acquire the locks on $preds[]$ and $currs[]$ in increasing order. \label{lin:rvm-locks}
			\If{($!rv\_Validation()$)} \label{lin:rvm-chk_valid}
				\State Release the locks and goto \linref{rvm-search}. \label{lin:rvm-retry}
			\EndIf %\label{lin:rvm-end_valid}
			\If{($k  ~ \notin ~ B_k.\lsl$)} \label{lin:rvm-chk_k} 
				\State Create a new node $n$ with key $k$ as: $\langle$ \emph{key = k, lock = false, marked = true, vl = ver, nnext = }$\phi \rangle$. \label{lin:rvm-chk_k1} 
				\State /*The $vl$ consists of a single element $ver$ with $ts$ as 0*/
				\State Create the version $ver$ as: $\langle \emph{$ts=0, val=null, rvl=i, max_{rvl}=i, vnext=\phi$}\rangle$.  
				\State Insert $n$ into $B_k.\lsl$ such that it is accessible only via \rn{s}. \Comment{$n$ is marked} \label{lin:rvm-ver_abs}  	\State Release the locks; update the $\llog_i$ with $k$.
				\State return $null$.
			\EndIf %\label{lin:rvm-end-chk_k} 
			\State Identify the version $ver_j$ with $ts=j$ such that $j$ is the largest timestamp smaller than $i$.
%			\If{($ver_j$ == $null$)}
%			\State goto \Lineref{rvm-chk_k1}.
%			\EndIf
			%\State $ver_j = find\_lts(k)$.\label{lin:corr-ver} 
			\State Add $i$ into the $rvl$ of $ver_j$. \label{lin:rvm-add_i} 
			\If{($ver_j.max_{rvl}$ $<$ $i$)}
			\State Set $ver_j.max_{rvl}$ to $i$. \label{lin:rvm-max} 
			\EndIf
			\State $retVal = ver_j.val$.
			\State Release the locks; update the $\llog_i$ with $k$ and $retVal$. \label{lin:rvm-unlock} 
		\EndIf %\label{lin:rvm-end_log}
		\State return $retVal$. \label{lin:rvm-ret}
		\EndProcedure
	\end{algorithmic}
%\end{multicols}
\end{algorithm}
%\vspace{-.7cm}

\ignore {
The \emph{rv\_Validation()} is called by the \rvmt{} and \upmt{}. It will identify the conflicts among the concurrent methods of different transactions. It will be more clear by the \figref{mvostm9}, where two concurrent conflicting methods of different transactions are working on the same key $k_3$. Initially, at stage $s_1$ in \figref{mvostm9} (c) both the conflicting methods $ins_1(k_3)$ and $lu_2(k_3)$ optimistically (without acquiring locks) identify the same $preds$ and $currs$ for key $k_3$ from underlying $CDS$ in \figref{mvostm9} (a). At stage $s_2$ in \figref{mvostm9} (c), method $ins_1(k_3)$ of transaction $T_1$ acquired the lock on $preds$ and $currs$ and inserted the node into it (\figref{mvostm9} (b)). After successful insertion by $T_1$, $preds$ and $currs$ will change for $lu_2(k_3)$ at stage $s_3$ in \figref{mvostm9} (c). It will caught via \emph{rv\_Validation} method at \Lineref{rvv1} when $(\bp.\bn \neq \bc)$ for $lu_2(k_3)$. After that again it will find the new $preds$ and $currs$ for $lu_2(k_3)$ and eventually it will commit.

%%%%%%%%%%%%%%%%%%%%%%%%%%%%%%%%%%%%%%%%%%%%%%%%%%rv_validation()%%%%%%%%%%%%%
\begin{algorithm}
	%\label{alg:llsearch} 
	\scriptsize
	\caption{\emph{rv\_Validation()} }
	\setlength{\multicolsep}{0pt}
	%\begin{multicols}{2}
	\begin{algorithmic}[1]
		\makeatletter\setcounter{ALG@line}{14}\makeatother
		\Procedure{rv\_validation{()}}{}
		\If{$((\bp.marked) || (\bc.marked) ||(\bp.\bn) \neq \bc || (\rp.\rn) \neq {\rc})$}\label{lin:rvv1}
		\State return $false$. \label{lin:rvv2}
		\Else \label{lin:rvv3}
		\State return $true$. \label{lin:rvv4}
		\EndIf 
		\EndProcedure
	\end{algorithmic}
	%\end{multicols}
\end{algorithm}

\cmnt{
\begin{figure}[tbph]
\captionsetup{justification=centering}
	%\includegraphics[scale=0.7]{figs/ex1.pdf_t}
	%\centerline{\scalebox{0.7}{\input{ex1.pstex_t}}}
	\centerline{\scalebox{0.47}{\input{figs/mvostm9.pdf_t}}}
	\caption{Method validation}
	\label{fig:mvostm91}
\end{figure}
}

\begin{figure}[tbph]
\captionsetup{justification=centering}
	%\includegraphics[scale=0.7]{figs/ex1.pdf_t}
	%\centerline{\scalebox{0.7}{\input{ex1.pstex_t}}}
	\centerline{\scalebox{0.42}{\input{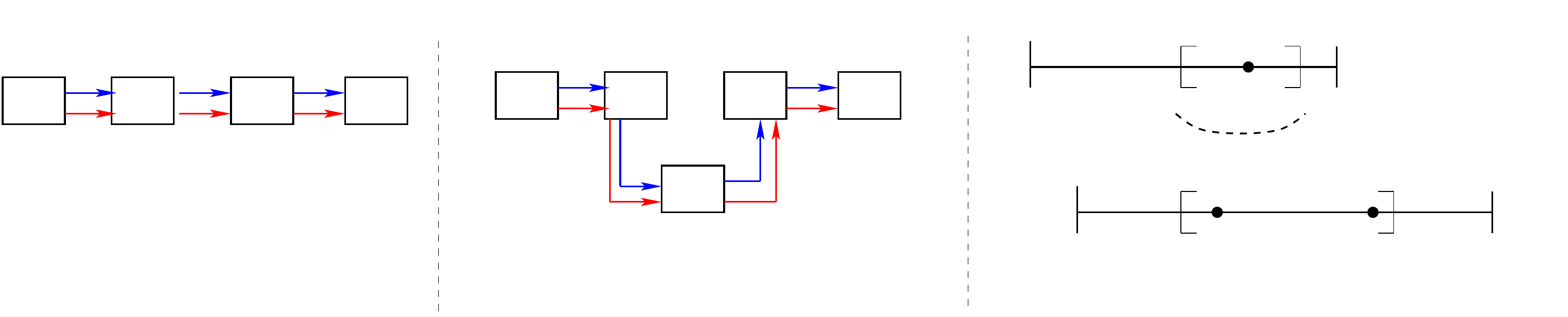_t}}}
	\caption{Interference validation (or rv\_Validation)}
	\label{fig:mvostm9}
\end{figure}
%\vspace{-.7cm}
%%%%%%%%%%%%%%%%%%%%%%%%%%%%%%%%%%%%%%%%%%%%%%%%%%try_C()%%%%%%%%%%%%%
}	

\ignore {

We are considering the chaining \tab{} as a underlying data structure where chaining is done via lazy-list refer \figref{1mvostmdesign} a). Each bucket of the \tab{} is having two sentinel nodes: $head$ and $tail$. $Head$ and $tail$ are initialized as -$\infty$ and +$\infty$ respectively. Keys $\langle k_1, k_2, ...k_n \rangle$ are added in increasing order in the list between the sentinel nodes. %as represented in \figref{mvostmdesign} a). %We assume that all the keys are ordered as $k_1, k_2, ...k_n$ and belong to the set $\mathcal{T}$. 

Each key is maintaining the multiple versions in increasing order of timestamp (\figref{1mvostmdesign} b)). %as shown in \figref{mvostmdesign} b). 
For each key $k_1$ of transaction $T_i$, we maintain $k_1.vl$ (version list) which is a list consisting of version tuples in the form $\langle ts, val, mark, rvl, vnext \rangle$.  Description of each field as: $ts$ stands for timestamp which is unique for each transaction, $val$ is the value written by any transaction corresponding to the key, $mark$ is the Boolean variable which can be true or false (if the method corresponding to the key is \npdel{} then the value of $mark$ field will be true, ($T$) and if the method corresponding to the key is \npins{} then the value of $mark$ field will be false, ($F$)), $rvl$ represents $return$-$value$ $list$ which is having all the transactions who has performed \rvmt{} on the same key $k_1$ and $vnext$ is having the information about next available version of the same key $k_1$. The \mvotm system consists of the following main methods: \emph{STM init()}, \emph{STM begin()}, $\npins{}$, $\npluk{}$, $\npdel{}$ and $\nptc{}$.
}

\vspace{1mm}
\noindent
\textbf{\tins{()}}: This is another optimization done in \emph{\ophmvotm{s}} to identify the early abort which prevents the work done by aborted transactions and saves time. The actual effect of the \tins{()} comes after the successful tryC method. First, t\_insert() searches the key $k$ in the local log, $txLog_i$ of $T_i$ at \Lineref{insert2}. If $k$ does not exist in the $txLog_i$ then it identifies the appropriate location ($pred$ and $curr$) of key $k$ using \bn{} and \rn{} (\Lineref{insert3}) in the \lsl of $B_k$ without acquiring any locks similar to \rvmt{} explained above. 

Next, it acquires the re-entrant locks on all the $pred$ and $curr$ keys in increasing order. After that, all the $pred$ and $curr$ keys are validated by \emph{tryC\_Validation} in \Lineref{insert5} as follows: (1) It does the \emph{rv\_Validation()} as explained above in the \rvmt{}. (2) If key $k$ exists in the $B_k.\lsl$ and let $n$ as a node of $k$. Then algorithm identifies the version of $n$, $ver_j$ which has the timestamp $j$ such that $j$ has the largest timestamp smaller than $i$ (timestamp of $T_i$) at \Lineref{trycv5}. If $max_{rvl}$ of $ver_j$ is greater than timestamp $i$ at \Lineref{trycv6} then it returns $Abort$ in \Lineref{insert6}.

\emph{tryC\_Validation()} in \tins{()} identifies the early abort of invalid transaction. The advantage of doing the early validation to save the significant computation of long running transaction which will abort in the future. Consider \figref{insertadv} where two transaction $T_1$ and $T_2$ working on key $k_5$. In \figref{insertadv} (a), $T_1$ aborts in tryC (delayed validation) because higher timestamp $T_2$ committed. But in \figref{insertadv} (b), $T_1$ validates the t\_insert() instantly by looking into the $max_{rvl}$ of $k_5$ as shown in \figref{insertadv} (c) and save its computation and returns abort. 

\begin{algorithm}
	\scriptsize
	\caption{\emph{t\_insert():} Actual insertion happens in the tryC.} \label{algo:insert} 	
	%\setlength{\multicolsep}{0pt}
	%\begin{multicols}{2}
	\begin{algorithmic}[1]
		\makeatletter\setcounter{ALG@line}{32}\makeatother
		\Procedure{$t\_insert()$}{}\label{lin:insert1}		
		\If{($k \notin \llog_i$)} \label{lin:insert2}
		\State Search in \lsl to identify the $preds[]$ and $currs[]$ for $k$ using \bn and \rn in bucket $B_k$. \label{lin:insert3}
		\State Acquire the locks on $preds[]$ and $currs[]$ in increasing order. \label{lin:insert4}
		\If{($!tryC\_Validation()$)} \label{lin:insert5}
		\State return $Abort$.\label{lin:insert6} \Comment{Release the locks}
		\EndIf %\label{lin:rvm-end_valid}
		\State Release the locks.
		\Else \label{lin:insert7}
		\State Update the local log.  \label{lin:insert8}
		\EndIf %\label{lin:rvm-end_log}
%		\State return $$. \label{lin:rvm-ret}
		\EndProcedure
	\end{algorithmic}
	%\end{multicols}
\end{algorithm}
%\vspace{-.7cm}

\begin{figure}
	\captionsetup{justification=centering}
	%\includegraphics[scale=0.7]{figs/ex1.pdf_t}
	%\centerline{\scalebox{0.7}{\input{ex1.pstex_t}}}
	\centerline{\scalebox{0.37}{\input{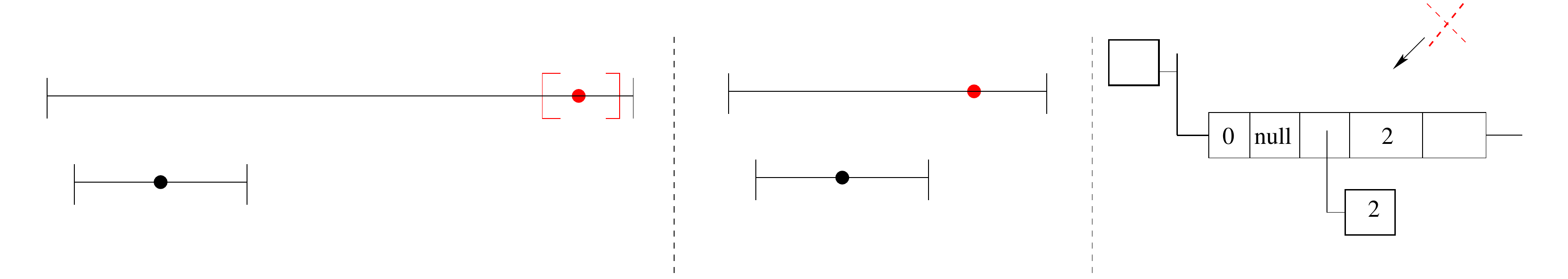_t}}}
	\caption{Advantage of early validation in t\_insert()}
	\label{fig:insertadv}
\end{figure}

\vspace{1mm}
\noindent
\textbf{\upmt{s}}: It can be either $\tins(ht, k, v)$ or $\tdel(ht, k, v)$.
Both the \mth{s} create a version corresponding to the key $k$. The actual effect of \tins{} and \tdel{} in shared memory will take place in \tryc{}. \algoref{mtryc} represents the high level overview of \tryc{}. 

%In the transaction local log, $\llog_i$ consist of \upmt{} only. 
Initially, to avoid deadlocks, the algorithm sorts all the $keys$ in increasing order which are present in the local log, $\llog_i$. In \tryc{}, $\llog_i$ consists of \upmt{s} (\tins{} or \tdel{}) only. For all the \upmt{s} ($opn_i$) it searches the key $k$ in the shared memory corresponding to the bucket $B_k$. It identifies the appropriate location ($pred$ and $curr$) of key $k$ using \bn{} and \rn{} (\Lineref{mtryc3}) in the \lsl of $B_k$ without acquiring any locks similar to \rvmt{} explained above.

Next, it acquires the re-entrant locks on all the $pred$ and $curr$ keys in increasing order. After that, all the $pred$ and $curr$ keys are validated by \emph{tryC\_Validation} in \Lineref{mtryc5} as explained in t\_insert().

% follows: (1) It does the \emph{rv\_Validation()} as explained above in the \rvmt{}. (2) If key $k$ exists in the $B_k.\lsl$ and let $n$ as a node of $k$. Then algorithm identifies the version of $n$, $ver_j$ which has the timestamp $j$ such that $j$ has the largest timestamp smaller than $i$ (timestamp of $T_i$) at \Lineref{trycv5}. If $max_{rvl}$ of $ver_j$ is greater than timestamp $i$ at \Lineref{trycv6} then algorithm returns $Abort$ in \Lineref{mtryc6}.

\begin{algorithm}
	
	\scriptsize
	\caption{\emph{tryC($T_i$)}: Validate the \upmt{s} of the transaction and then commit.}
	\setlength{\multicolsep}{0pt}
	\label{algo:mtryc}
	\begin{algorithmic}[1]
		\makeatletter\setcounter{ALG@line}{44}\makeatother
		\Procedure{$tryC{(T_i)}$}{}
		%\State Get the local log list for corresponding transaction;
		\State /*Operation name ($opn$) which could be either \tins or \tdel*/
		\State /*Sort the $keys$ of $\llog_i$ in increasing order.*/
		\ForAll{($opn_i$ $\in$ $\llog_i$)} \label{lin:mtryc1}
		\If{(($opn_i$ == \tins{}) $||$ ($opn_i$ == \tdel{}))}\label{lin:mtryc2}
		\State Search in $\lsl$ to identify the $preds[]$ and $currs[]$ for $k$ using \bn and \rn in bucket $B_k$. \label{lin:mtryc3}
		\State Acquire the locks on $preds[]$ and $currs[]$ in increasing order. \label{lin:mtryc4}
		\If{($!tryC\_Validation()$)} \label{lin:mtryc5}
		\State return $Abort$.\label{lin:mtryc6} \Comment{Release the locks}
		\EndIf\label{lin:mtryc7}
		\EndIf\label{lin:mtryc8}
		\EndFor\label{lin:mtryc9}
		\ForAll{($opn_i$ $\in$ $\llog_i$)}\label{lin:mtryc10}
		\State $intraTransValidation()$  modifies the $preds[]$ and $currs[]$ of current operation which would have been \blank{.7cm} updated by the previous operation of the same transaction.\label{lin:mtryc11}
		\If{(($opn_i$ == \tins{}) \&\& ($k  ~ \notin ~ B_k.\lsl$))}\label{lin:mtryc12}
		\State Create new node $n$ with $k$ as: $\langle$ \emph{key = k, lock = false, marked = false, vl = ver, nnext = $\phi$} $\rangle$. \label{lin:mtryc13}
		\State Create two versions $ver$ as: $\langle$ \emph{ts=0, val=null, rvl=$\phi$, $max_{rvl}$ = $\phi$, vnext=$i$} $\rangle$ for $T_0$ and $\langle$ \emph{ts=i, val=v, \blank{1.1cm}rvl=$\phi$, $max_{rvl}$ = $\phi$, vnext=$\phi$} $\rangle$ for $T_i$.
		\State Insert node $n$ into $B_k.\lsl$ such that it is accessible via \rn{} as well as \bn{} \Comment{$lock$ sets $true$}.  \label{lin:mmtryc14}
		\ElsIf{($opn_i$ == \tins{})}\label{lin:mtryc14}
		\State Add the version $ver$ as: $\langle$ \emph{ts=i, val=v, rvl=$\phi$, $max_{rvl}$=$\phi$, vnext=$\phi$} $\rangle$ into $B_k.\lsl$ such that it is \blank{1.1cm} accessible via \rn as well as \bn.\label{lin:mtryc15}
		\EndIf\label{lin:mtryc16}
		\If{($opn_i$ == \tdel{})}\label{lin:mtryc17}
		\State Add the version $ver$ as: $\langle$ \emph{ts=i, val=null, rvl=$\phi$, $max_{rvl}$=$\phi$, vnext=$\phi$} $\rangle$ into $B_k.\lsl$ such that \blank{1.1cm}it is accessible only via \rn.\label{lin:mtryc18}
		\EndIf\label{lin:mtryc19}
		\State Update the $preds[]$ and $currs[]$ of $opn_i$ in $\llog_i$.\label{lin:mtryc20}
		
		\EndFor
		%\EndFor
		\State Release the locks; return $Commit$.\label{lin:mtryc21}
		\EndProcedure
	\end{algorithmic}
	%\end{multicols}
\end{algorithm}

If \emph{tryC\_Validation} is successful then each \upmt{s} exist in $\llog_i$ will take the effect in the shared memory after doing the \emph{intraTransValidation()} in \Lineref{mtryc11}. If two $\upmt{s}$ of the same transaction have at least one common shared node among its recorded $pred$ and $curr$ keys, then the previous $\upmt{}$ effect may overwrite if the current $\upmt{}$ of $pred$ and $curr$ keys are not updated according to the updates are done by the previous $\upmt{}$. Thus to solve this we have \emph{intraTransValidation()} that modifies the $pred$ and $curr$ keys of current operation based on the previous operation in \Lineref{mtryc11}.

%\vspace{-.5cm}

%\vspace{-.5cm}
 
Next, we check if \upmt{} is \tins{} and $k$ is in $B_k.\lsl$. If $k$ is not in $B_k$, then create a new node $n$ for $k$ as $\langle key=k, lock=false, marked = false, vl=ver, nnext=\phi \rangle$. This node will have two versions $ver$ as $\langle ts=0, val=null, rvl=\phi, max_{rvl} = \phi, vnext=i \rangle$ for $T_0$ and $\langle ts=i, val=v, rvl=\phi, max_{rvl} = \phi, vnext=\phi \rangle$ for $T_i$. $T_i$ is creating a version with timestamp $0$ to ensure that \rvmt{s} of other transactions will never abort. For second version, $i$ is the timestamp of the transaction $T_i$ invoking this method; $marked$ field sets to false because the node is inserted in the \bn{}. $rvl$, $max_{rvl}$, and $vnext$ are initialized to empty values. We set the $val$ as $v$ and insert $n$ into $B_k.\lsl$ such that it is accessible via \rn{} as well as \bn{} and set the lock field to be $true$ (\linref{mmtryc14}). If $k$ is in $B_k.\lsl$ then, $k$ is the same as $\rc$ or $\bc$ or both. Let $n$ be the node of $k$ in $B_k.\lsl$.
Then, we create the version $ver$ as: $\langle ts=i, val=v, rvl=\phi, max_{rvl}=\phi, vnext=\phi \rangle$ and insert the version into $B_k.\lsl$ such that it is accessible via \rn{} as well as \bn{} (\linref{mtryc15}).

Subsequently, we check if \upmt{} is \tdel{} and $k$ is in $B_k.\lsl$. Let $n$ be the node of $k$ in $B_k.\lsl$.
Then create the version $ver$ as $\langle ts=i, val=null, rvl=\phi, max_{rvl}=\phi, vnext=\phi \rangle$ and insert the version into $B_k.\lsl$ such that it is accessible only via \rn{} (\linref{mtryc18}). 

Finally, at \Lineref{mtryc20} it updates the $pred$ and $curr$ of $opn_i$ in local log, $\llog_i$. At \Lineref{mtryc21} releases the locks on all the $pred$ and $curr$ in increasing order of keys to avoid deadlocks and return $Commit$.  

We illustrate the helping methods of \rvmt{}, t\_insert(), and \upmt{} in detail as follows:

\noindent
\textbf{rv\_Validation():} It is called by the \rvmt{}, t\_insert(), and \upmt{}. It identifies the conflicts among the concurrent methods of different transactions. Consider an example shown in \figref{mvostm9}, where two concurrent conflicting methods of different transactions are working on the same key $k_4$. Initially, at stage $s_1$ in \figref{mvostm9} (c) both the conflicting method optimistically (without acquiring locks) identify the same $pred$ and $curr$ keys for key $k_4$ from $B_k.\lsl$ in \figref{mvostm9} (a). At stage $s_2$ in \figref{mvostm9} (c), method $ins_1(ht, k_4, v_1)$ of transaction $T_1$ acquired the lock on $pred$ and $curr$ keys and inserted the node into $B_k.\lsl$ as shown in \figref{mvostm9} (b). After successful insertion by $T_1$, $pred$ and $curr$ have been changed for $lu_2(ht, k_4)$ at stage $s_3$ in \figref{mvostm9} (c). So, the above modified information is delivered by \emph{rv\_Validation} method at \Lineref{rvv1} when $(\bp.\bn \neq \bc)$ for $lu_2(ht, k_4)$. After that again it will find the new $pred$ and $curr$ for $lu_2(ht, k_4, v_1)$ and eventually it will commit. %Where $\bp$ and $\rp$ are the predecessor nodes in $\bn$ and $\rn$, respectively. $\bc$ and $\rc$ are the successor nodes in $\bn$ and $\rn$ respectively.
%\vspace{-.7cm}
%%%%%%%%%%%%%%%%%%%%%%%%%%%%%%%%%%%%%%%%%%%%%%%%%%rv_validation()%%%%%%%%%%%%%
\begin{algorithm}
	%\label{alg:llsearch} 
	\scriptsize
	\caption{\emph{rv\_Validation()}: Validate against the conflicting method of different transactions.}
	\setlength{\multicolsep}{0pt}
	%\begin{multicols}{2}
	\begin{algorithmic}[1]
		\makeatletter\setcounter{ALG@line}{72}\makeatother
		\Procedure{$rv\_validation{()}$}{}
		\If{$((\bp.marked) || (\bc.marked) ||(\bp.\bn) \neq \bc || (\rp.\rn) \neq {\rc})$}\label{lin:rvv1}
		\State return $false$. \label{lin:rvv2}
		\Else \label{lin:rvv3}
		\State return $true$. \label{lin:rvv4}
		\EndIf 
		\EndProcedure
	\end{algorithmic}
	%\end{multicols}
\end{algorithm}
%\vspace{-.7cm}
\begin{figure}
	\captionsetup{justification=centering}
	%\includegraphics[scale=0.7]{figs/ex1.pdf_t}
	%\centerline{\scalebox{0.7}{\input{ex1.pstex_t}}}
	\centerline{\scalebox{0.4}{\input{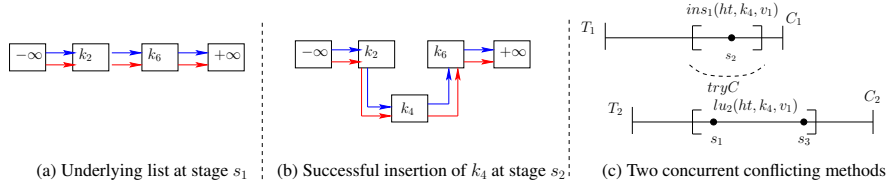}}}
	\caption{Illustration of rv\_Validation()}
	\label{fig:mvostm9}
\end{figure}
%\vspace{-.7cm}

\begin{figure}[H] 
	\captionsetup{justification=centering}
	%\centering
	%\includegraphics[scale=.5]{figs/fig3.png}
	%\includegraphics[scale=.5]{figs/fig3.pdf_t}
	\centerline{\scalebox{0.45}{\input{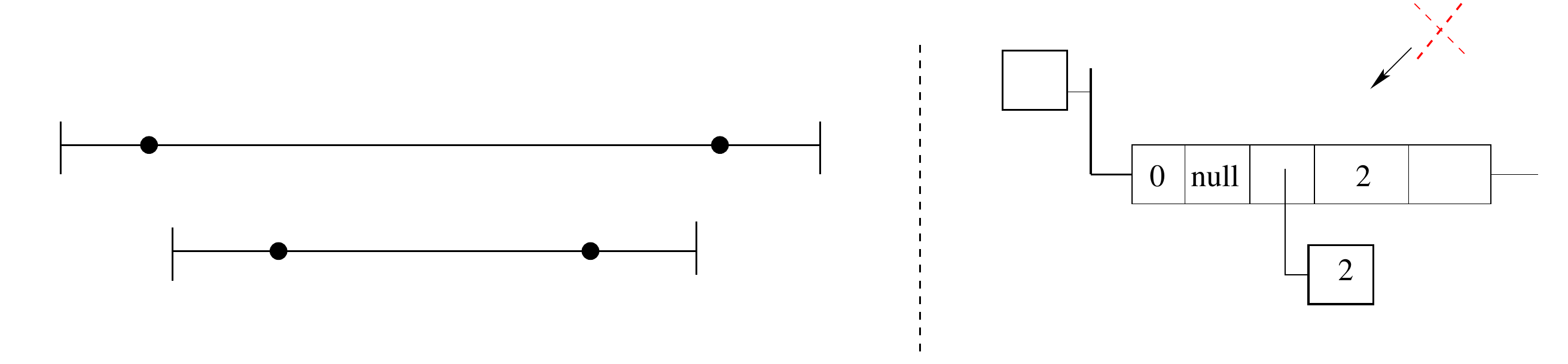_t}}}
	\caption{Illustration of tryC\_Validation()}
	\label{fig:mvostm12}
\end{figure}

\noindent
\textbf{tryC\_Validation():} It is called by t\_insert(), and \upmt{} in \tryc{}. First, it does the \emph{rv\_Validation()} in \Lineref{trycv1}. If its successful and key $k$ exists in the $B_k.\lsl$ and let $n$ as a node of $k$. Then algorithm identifies the version of $n$, $ver_j$ which has the timestamp $j$ such that $j$ has the largest timestamp smaller than $i$ (timestamp of $T_i$) at \Lineref{trycv5}. If $max_{rvl}$ of $ver_j$ is greater than the timestamp of $i$ then the algorithm returns false (in \Lineref{trycv8}) and eventually, return $Abort$ in \Lineref{insert6} or \Lineref{mtryc6}. Consider an example as shown in \figref{mvostm12} (a), where second method $ins_1(ht, k_5)$ of transaction $T_1$ returns $Abort$ because higher timestamp of transaction $T_2$ is already present in the $max_{rvl}$ of version $T_0$ identified by $T_1$ in \figref{mvostm12} (b).

%If its successful and key $k$ is part of $B_k.\lsl$ then it identifies the version $ver_j$ with $ts=j$ such that $j$ is the largest timestamp smaller than $i$ in \Lineref{trycv5}. From \Lineref{trycv6} to \Lineref{trycv10}, if there exist any higher timestamp transaction in the $ver_j.rvl$ of the $T_j$ return false and eventually return ABORT.
%%%%%%%%%%%%%%%%%%%%%%%%%%%%%%%%%%%%%%%%%%%%%%%%%%tryC_validation()%%%%%%%%%%%%%
\begin{algorithm}[H]
	%\label{alg:llsearch} 
	\scriptsize
	\caption{\emph{tryC\_Validation()}: It maintains the order among the transactions.}
	\setlength{\multicolsep}{0pt}
	%\begin{multicols}{2}
	\begin{algorithmic}[1]
		\makeatletter\setcounter{ALG@line}{79}\makeatother
		\Procedure{$tryC\_validation{()}$}{}
		\If{($!rv\_Validation()$)}\label{lin:trycv1}
		\State Release the locks and retry.\label{lin:trycv2}
		\EndIf\label{lin:trycv3}
		\If{(k $\in$ $B_k.\lsl$)}\label{lin:trycv4}
		\State Identify the version $ver_j$ with $ts=j$ such that $j$ is the largest timestamp smaller than $i$.\label{lin:trycv5}
		\If{($ver_j.max_{rvl}$ $>$ $i$)}\label{lin:trycv6}
%		\ForAll {$T_k$ in $ver_j.rvl$}\label{lin:trycv6}
%		\If{(TS($(T_k)$ $>$ TS($T_i$)))}\label{lin:trycv7}
		\State return $false$.\label{lin:trycv8}
		\EndIf \label{lin:trycv9}
%		\EndFor\label{lin:trycv10}
		\EndIf\label{lin:trycv11}
		\State return $true$.\label{lin:trycv12}
		\EndProcedure
	\end{algorithmic}
	%\end{multicols}
\end{algorithm}
%\vspace{-.7cm}
\begin{algorithm}[H]
	\label{alg:intra} 
	\scriptsize
	\caption{\emph{intraTransValidation()}: Help the upcoming method of the same transaction.}
	\setlength{\multicolsep}{0pt}
	%\begin{multicols}{2}
	\begin{algorithmic}[1]
		\makeatletter\setcounter{ALG@line}{91}\makeatother
		\Procedure{$intraTransValidation{()}$}{}
		\If{$((\bp.marked) || (\bp.\bn \neq \bc ))$} \label{lin:intra1}
		\If{($opn_k$ == Insert)}\label{lin:intra2}
		\State /*Modify the pred of current transaction $T_i$ with the help of previous transaction $T_k$*/
		\State $\bp_{i}$ = $\bp_{k}$.\bn.\label{lin:intra3} \Comment{Set the $T_i$ \bp{} as $T_k$ \bc{}}
		\Else\label{lin:intra4}
		\State $\bp_{i}$ = $\bp_{k}$.\label{lin:intra5} \Comment{Set the $T_i$ \bp{} as $T_k$ \bp{}}
		\EndIf\label{lin:intra6}
		\EndIf \label{lin:intra9}
		\If{(\rp.\rn $\neq$ \rc)}\label{lin:intra7}
		\State $\rp_{i}$ = $\rp_{k}.\rn$.\label{lin:intra8} \Comment{Set the $T_i$ \rp{} as $T_k$ \rc}
		\EndIf\label{lin:intra10}
		\EndProcedure
	\end{algorithmic}
	%\end{multicols}
\end{algorithm}

\noindent
\textbf{intraTransValidation():} It is called by \upmt{} in \tryc{}. If two $\upmt{s}$ of the same transaction have at least one common shared node among its recorded $pred$ and $curr$ keys, then the previous $\upmt{}$ effect may overwrite if the current $\upmt{}$ of $pred$ and $curr$ keys are not updated according to the updates done by the previous $\upmt{}$. Thus to solve this we have \emph{intraTransValidation()} that modifies the $pred$ and $curr$ keys of current operation based on the previous operation from \Lineref{intra1} to \Lineref{intra10}. Consider an example as shown in \figref{mvostm11}, where two \upmt{s} of transaction $T_1$ are $ins_{11}(ht, k_4, v_1)$ and $ins_{12}(ht, k_6, v_2)$ in \figref{mvostm11} (c). At stage $s_1$ in \figref{mvostm11} (c) both the \upmt{s} identify the same $pred$ and $curr$ from underlying DS as $B_k.\lsl$ shown in \figref{mvostm11} (a). After the successful insertion done by first \upmt{} at stage $s_2$ in \figref{mvostm11} (c), key $k_4$ is part of $B_k.\lsl$ (\figref{mvostm11} (b)). At stage $s_3$ in \figref{mvostm11} (c), $ins_{12}(ht, k_6, v_2)$ identified $(\bp.\bn \neq \bc)$ in \emph{intraTransValidation()} at \Lineref{intra1}. So it updates the $\bp$ in \Lineref{intra3} for correct updation in $B_k.\lsl$. 
%\vspace{-.7cm}

%\vspace{-.7cm}

%if we do not update the $pred$ and $curr$ for $ins_{12}(k_5)$ then it overwrites the previous method updates. To resolve this issue we are doing the intraTransValidation() after each \upmt{} to assign the appropriate $pred$ and $curr$ for the upcoming \upmt{} of same transaction.

%%%%%%%%%%%%%%%%%%%%%%%%%%%%%%%%%%%%%%%%%%%%%%%%%%intraTransValidation()%%%%%%%%%%%%%

%\vspace{-.7cm}
\begin{figure}[H]
	\captionsetup{justification=centering}
	%\centering
	%\includegraphics[scale=.5]{figs/fig3.png}
	%\includegraphics[scale=.5]{figs/fig3.pdf_t}
	\centerline{\scalebox{0.43}{\input{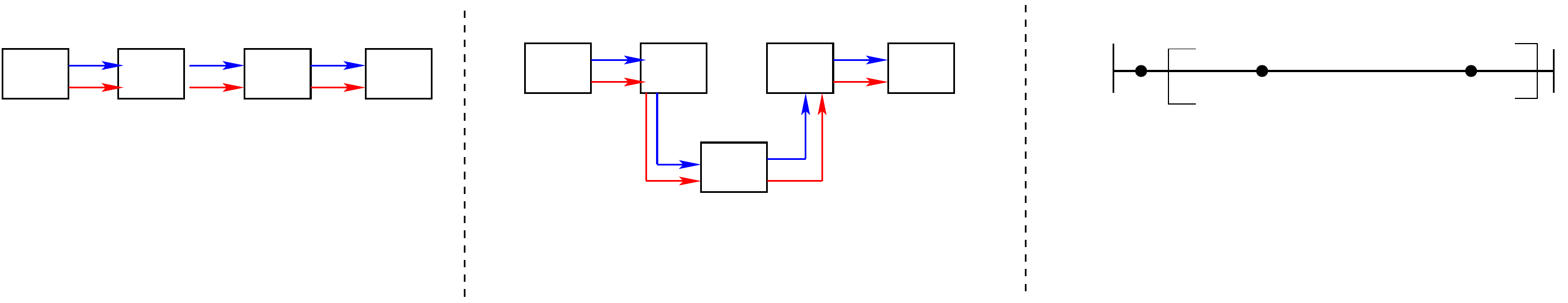_t}}}
	\caption{Illustration of intraTransValidation()}
	\label{fig:mvostm11}
\end{figure}
%\vspace{-.7cm}

%\todo{remove??}
%Detailed pseudocode and correctness proof are omitted here due to lack of space. For more details please refer the \apnref{appendix} and in the accompanying technical report \cite{Juyal+:MVOSTM:Corr:2017}.
%\noindent
%\textbf{Correctness of \hmvotm:}  

%\todo{}
%\mvpness (explained in \secref{model}) was originally defined for \rwtm{s} \cite{Perel+:2011:SMV:DISC}. When extended to \tab based STM system, \mvpness  implies that no \tlook \mth ever returns abort. \thmref{prog} shows that in \hmvotm with unbounded versions, no \rvmt returns abort. This implies that \tlook will not return abort in a \hmvotm system with unbounded versions. 

\cmnt{
\begin{algorithm}
	%\label{alg:llsearch} 
	\scriptsize
	\caption{\emph{tryC($T_i$)} }
	\setlength{\multicolsep}{0pt}
	%\begin{multicols}{2}
	\begin{algorithmic}[1]
		\makeatletter\setcounter{ALG@line}{19}\makeatother
		\Procedure{tryC{($T_i$)}}{}
		%\State Get the local log list for corresponding transaction;
		\ForAll{($opn_i$ $\in$ local\_log())} \label{lin:mtryc1}
		\If{(($m_{ij}(k)$ == \npins{}) $||$ ($m_{ij}(k)$ == \npdel{}))}\label{lin:mtryc2}
		\State Search into the $CDS$ to identify the $preds$ and $currs$ for key in \bn{} and \rn.\label{lin:mtryc3}
		\State Acquire the locks in increasing order. \label{lin:mtryc4}
		\If{($!tryC\_Validation()$)} \label{lin:mtryc5}
		\State return $Abort$.\label{lin:mtryc6}
		\EndIf\label{lin:mtryc7}
		\EndIf\label{lin:mtryc8}
		\EndFor\label{lin:mtryc9}
		\ForAll{($opn_i$ $\in$ local\_log())}\label{lin:mtryc10}
		\State $intraTransValidation():$ Modify the preds and currs of consecutive operation which will work on same region.\label{lin:mtryc11}
		\If{(($m_{ij}(k)$ == \npins{}) \&\& (k $\notin$ CDS)}\label{lin:mtryc12}
		\State Create the new node into \bn{} and add the $0^{th}$, $T_i$ versions in it.\label{lin:mtryc13}
		\ElsIf{($m_{ij}(k)$ == \npins{})}\label{lin:mtryc14}
		\State Add the node in \bn{} with $T_i$ version.\label{lin:mtryc15}
		\EndIf\label{lin:mtryc16}
		\If{(($m_{ij}(k)$ == \npdel{})}\label{lin:mtryc17}
		\State Move the node from \bn{} to \rn{} and add the $T_i$ version in it with value NULL.\label{lin:mtryc18}
		\EndIf\label{lin:mtryc19}
		%\State Update the local log;
		\EndFor\label{lin:mtryc20}
		%\EndFor
		\State Release the locks.\label{lin:mtryc21}
		\EndProcedure
	\end{algorithmic}
	%\end{multicols}
\end{algorithm}
%\vspace{-.3cm}
}
 %Otherwise it will perform the above steps for remaining \upmt{s}.
	
%On successful validation of all the \upmt{s}, the actually effect will be taken place. If the \upmt{} is insert and node corresponding to the key is part of underlying DS then it creates the new version tuple and add it in increasing order of version list. Otherwise it will create the node with the help of \nplslins{} and insert the version tuple. If the \upmt{} is delete and node corresponding to the key is part of underlying DS then it creates the new version tuple and set its mark field as TRUE and add it in increasing order of version list. Otherwise it will create the node with the help of \nplslins{} and insert the version tuple with mark field TRUE. After successful completion of each \upmt{} it will release all the locks in the same order of lock acquisition. For more details of each \emph{MV-OSTM} method please refer the \apnref{rpcode}.

\noindent
\cmnt{
\section{Pcode of MV-OSTM}
\label{sec:mvdspcode}
The \mvotm system consists of the following methods: $\init()$, $\tbeg()$, $\tlook()$, $\tins()$, $\tdel()$ and $\tryc()$. 
%\vspace{-.3cm}
\\
%%%%%%%%%%%%%%%%%%%%%%%%%%%%%%%%%% STM_init()
\textbf{Pcode convention:} In pcode all the global and local variables are denoted as G\_ and L\_ respectively. We have used $\downarrow$ for input parameters and $\uparrow$ for output parameters (or return value). $\Phi_{lp}$ denotes the linearization point (LP) of corresponding to the each method.   \\
\noindent
\textbf{STM \init():} This method invokes at the start of the STM system. Initialize the global counter ($\cnt$) as 1 at \Lineref{init1} and return it.

\noindent
\textbf{STM \textit{begin()}:} It invoked by a thread to being a new transaction $T_i$. It creates transaction local log and allocate unique id at \Lineref{begin3} and \Lineref{begin5} respectively. %Return the transaction id ($L\_t\_id$) at \Lineref{begin8}.

\noindent
\textbf{STM \textit{insert()}:} Optimistically, actual insertion will happen in the \nptc{} method. First it will identify the node corresponding to the key in local log. If node exist then it just update the local log with useful information like value, operation name and operation status for node corresponding to the key at \Lineref{insert8}, \Lineref{insert9} and \Lineref{insert10} respectively for later use in \nptc{}. Otherwise, it will create a local log and update it.%Transaction id ($L\_t\_id$), object id (L\_obj\_id), node corresponding to the key (L\_key) and value(L\_val) are the inputs of this function.  

\noindent
\textbf{STM \textit{lookup()}:} If \npluk{} is not the first method on a particular key means if its a subsequent method of the same transaction on that key then first it will search into local log from \Lineref{lookup3} to \Lineref{lookup14}. If previous method on the same key of same transaction was insert or lookup (from \Lineref{lookup7} to \Lineref{lookup9}) then \npluk{} will return the value and operation status based on previous operation value and operation status. If previous method on the same key of same transaction was delete (from \Lineref{lookup11} to \Lineref{lookup13}) then \npluk{} will return the value and operation status as NULL and FAIL respectively. 

If \npluk{} is the first method on that key (from \Lineref{lookup16} to \Lineref{lookup22}) then it will identify the location of node corresponding to the key in underlying DS with the help of \nplsls{} inside the \npcld{} method at \Lineref{lookup17}.
\setlength{\textfloatsep}{0pt}
%\vspace{-.3cm}
%%%%%%%%%%%%%%%%%%%%%%%%%%%%%%%%%%%%%%%%%%%%%%%%%%%%%%%%%%%%%%%%%%%%%%%%%%
%-----------------------------------------STM\_LOOKUP-------------------------------------%------
%----
%%%%%%%%%%%%%%%%%%%%%%%%%%%%%%%%%%%%%%%%%%%%%%%%%%%%%%%%%%%%%%%%%%%%%%%%%%
%\begin{spacing}{2}
\begin{algorithm}
%\algsetup{linenosize=\tiny}

	\caption{\tabspace[0.2cm] STM $lookup_{i}()$ %: If the transaction to which this operation belongs has locally done an operation on the same key then returns apt value and status(wrt the previous local operation). Else do the \nplsls{} to find the correct location of the key and validate it.
		%\emph{DESCP}\tabspace: If the Tx has locally done an operation returns apt value and status else does TO \tabspace[2.2cm] and method validation traversal phase in only.\\
		%\emph{IN}\tabspace \tabspace[0.72cm]: $obj\_id$, $key$\\
		%\emph{OUT}\tabspace \tabspace[0.37cm]: $value$, $op\_status$     
	}
	\scriptsize
\setlength{\multicolsep}{0pt}
\begin{multicols}{2}

	\label{algo:lookup}
%	\setlength{\multicolsep}{0pt}
%	\begin{multicols}{2}
	
	\begin{algorithmic}[1]
	\makeatletter\setcounter{ALG@line}{0}\makeatother
		\Procedure{STM lookup}{$L\_t\_id \downarrow, L\_obj\_id \downarrow, L\_key \downarrow, L\_val \uparrow, L\_op\_status \uparrow$} \label{lin:lookup1}
		%\State $op\_status$ $\gets$ RETRY \label{lin:lookup2};
		\State    /*First identify the node corresponding to the key into local log*/\label{lin:lookup2}
		\If{$($\txlfind$)$} \label{lin:lookup3}
	    \State    /*Getting the previous operation's name*/\label{lin:lookup4}
			\State $L\_opn$ $\gets$ \llgopn{} \label{lin:lookup5}; %\Comment{$\Phi_{lp}$}
			\State    /*If previous operation is insert/lookup then get the value/op\_status based on the previous operations value/op\_status*/\label{lin:lookup6}
			\If{$(($\textup{INSERT} $=$ \textup{$L\_opn$} $)||($ \textup{LOOKUP} $=$ \textup{$L\_opn$}$))$} \label{lin:lookup7}
	
				\State $L\_val$ $\gets$ \llgval{} \label{lin:lookup8};
				\State $L\_op\_status$ $\gets$  $L\_rec.L\_getOpStatus$($L\_obj\_id \downarrow, L\_key \downarrow$) \label{lin:lookup9};
			\State    /*If previous operation is delete then set the value as NULL and op\_status as FAIL*/\label{lin:lookup10}
				\ElsIf{$($\textup{DELETE} $=$ \textup{$L\_opn$}$)$} \label{lin:lookup11}
					\State $L\_val$ $\gets$ NULL \label{lin:lookup12}; 
					\State $L\_op\_status$ $\gets$ FAIL \label{lin:lookup13}; 
				\EndIf \label{lin:lookup14}
		%	\EndIf \label{lin:lookup11}
		\Else \label{lin:lookup15}
	    \State /* Common function for \rvmt{}, if node corresponding to the key is not part of local log*/\label{lin:lookup16}
		\State \cld{};\label{lin:lookup17}
\cmnt{		
            %\State $op\_status $ $\gets$ lslSearch($obj\_id$ $\downarrow$, $key$ $\downarrow$, $preds[]$ $\uparrow$, $currs[]$ $\uparrow$, $value_{BL}$ $\uparrow$, $RV$ $\downarrow$);	
            \State    /*if key is not present in local log then search in underlying DS with the help of list\_lookup*/
			\State \lsls{} \label{lin:lookup13}; 
			%State    /*if list\_lookup return op\_status as ABORT then method will return ABORT*/
			%If{$(op\_status_{ll}$ $=$ \textup{ABORT}$)$} \label{lin:lookup14}
%			\State release all the locks
				%State \handlea{} \label{lin:lookup15}; 
			%Else \label{lin:lookup16}
%			\State    /*check node corresponding to the key is part of underlying DS or not*/	
%			\If{$((G\_curr.key) == L\_key)$} \label{lin:lookup17}
					\State /*From $G\_k.vls$, identify the right $version\_tuple$*/ 
                    \State \find($L\_t\_id \downarrow,L\_key \downarrow, closest\_tuple \uparrow)$;	
                    \State /*Adding $L\_t\_id$ into $j$'s $rvl$*/
                    \State Append $L\_t\_id$ into $rvl$; 
 %                   \State \emph{unlock $x$};
%                    \State return $(v)$; \Comment{v is the value returned}
                \If{$(closest\_tuple.m = TRUE)$}
                    
                    \State $L\_op\_status$ $\gets$ FAIL \label{lin:lookup18};
                    \State $L\_val$ $\gets$ NULL \label{lin:lookup20};
                \Else
                    \State $L\_op\_status$ $\gets$ OK;
                    \State $L\_val$ $\gets$ $closest\_tuple.v$;
                \EndIf    
%            \Else
%                    \State /*Adding $i$ into $0$'s $rvl$*/
%                    \State Append $i$ into $rvl$; 
%                    \State $L\_op\_status$ $\gets$ FAIL;
%                    \State $L\_val$ $\gets$ $NULL$;
					%\EndIf \label{lin:lookup30}

			  \State $G\_pred.unlock()$;//$\Phi_{lp}$
                \State $G\_curr.unlock()$;
\State    /*new log entry created to help upcoming method on the same key of the same tx*/
						\State $L\_rec$ $\gets$ Create new $L\_rec\langle L\_obj\_id, L\_key \rangle$\label{lin:lookup31}; 
						%\State ll.setPreds&Currs($obj\_id$ $\downarrow$, $key$ $\downarrow$, $preds[]$ $\downarrow$, $currs[]$ $\downarrow$);
						\State \llsval{$L\_val \downarrow$}
						\State \llspc{} \label{lin:lookup32};
}						  
			
		%\EndIf \label{lin:lookup38}
					
		\EndIf \label{lin:lookup18}
		\State /*Update the local log*/ \label{lin:lookup19}
				\State \llsopn{$LOOKUP \downarrow$} \label{lin:lookup20};
			\State \llsopst{$L\_op\_status \downarrow$} \label{lin:lookup21};
			
			\State return $\langle L\_val, L\_op\_status\rangle$\label{lin:lookup22}; 
				
	\EndProcedure \label{lin:lookup23}
	\end{algorithmic}
	
	\end{multicols}
	
\end{algorithm}
\cmnt{
\begin{figure}
	\centering
	\begin{minipage}[b]{0.49\textwidth}
		\scalebox{.46}{\input{figs/mvostm5.pdf_t}}
		\centering
%		\captionsetup{justification=centering}
		% \vspace{-16cm}
		\caption{History is not opaque}
		\label{fig:mvostm5}
	\end{minipage}
	\hfill
	\begin{minipage}[b]{0.49\textwidth}
		\centering
%		\captionsetup{justification=centering}
		\scalebox{.46}{\input{figs/mvostm6.pdf_t}}
		% \vspace{-16cm}
		\caption{Opaque history}
		\label{fig:mvostm6}
	\end{minipage}   
\end{figure}
}

%\vspace{-.3cm}

%\vspace{-.3cm}
\noindent
\textbf{STM \textit{tryC()}:} The actual effect of \upmt{} (\npins{} and \npdel{}) will take place in \nptc{} method. From \Lineref{tryc5} to \Lineref{tryc15} will identify and validate the $G\_pred$ and $G\_curr$ of each \upmt{} of same transaction. At \Lineref{tryc9} it will validate if there exist any higher timestamp transaction in the $rvl$ of the $closest\_tuple$ of $G\_curr$ then returns ABORT at \Lineref{tryc11}. Otherwise it will perform the above steps for remaining \upmt{s}.

On successful validation of all the \upmt{s}, the actually effect will be taken place from \Lineref{tryc17} to \Lineref{tryc43}. If the \upmt{} is insert and node corresponding to the key is part of underlying DS then it creates the new version tuple and add it in increasing order of version list from \Lineref{tryc22} to \Lineref{tryc24}. Otherwise it will create the node with the help of \nplslins{} and insert the version tuple from \Lineref{tryc25} to \Lineref{tryc29}. 

If the \upmt{} is delete and node corresponding to the key is part of underlying DS then it creates the new version tuple and set its mark field as TRUE and add it in increasing order of version list from \Lineref{tryc31} to \Lineref{tryc34}. Otherwise it will create the node with the help of \nplslins{} and insert the version tuple with mark field TRUE from \Lineref{tryc35} to \Lineref{tryc39}. 
 
After successful completion of each \upmt{}, it will validate the $G\_pred$ and $G\_curr$ of upcoming \upmt{} of the same transaction with the help of \npintv{} at \Lineref{tryc42}. Eventually, it will release all the locks at \Lineref{tryc45} in the same order of lock acquisition.
\vspace{-.3cm}
}

\section{Correctness of \opmvotm}
\label{sec:cmvostm}
In this section, we will prove that our implementation satisfies opacity. Consider the history $H$ generated by \emph{OPT-MVOSTM} algorithm. Recall that only the \emph{t\_begin}, \rvmt{}, t\_insert(), \upmt{} (or $tryC$) access shared memory. 

Note that $H$ is not necessarily sequential: the transactional \mth{s} can execute in an overlapping manner. To reason about correctness, we have to prove $H$ is opaque. Since we defined opacity for histories which are sequential, we order all the overlapping \mth{s} in $H$ to get an equivalent sequential history. We then show that this resulting sequential history satisfies \mth{}.

We order overlapping \mth{s} of $H$ as follows: (1) two overlapping \emph{t\_begin} \mth{s} based on the order in which they obtain lock over the $counter$; (2) two \rvmt{s} accessing the same key $k$ by their order of unlocking over $\langle \bp, \rp, \rc, \\\bc \rangle$ of $k$; (3) an \rvmt{} $rvm_i(k)$ and a $t\_insert_j()$, of a transaction $T_j$ accessing the same key $k$, are ordered by their order of unlocking over $\langle \bp, \rp, \\\rc, \bc \rangle$ of $k$; (4) an \rvmt{} $rvm_i(k)$ and a $\tryc_j$, of a transaction $T_j$ which has written to $k$, are similarly ordered by their order of unlocking over $\langle \bp, \rp, \rc, \bc \rangle$ of $k$; (5) two t\_insert{()} methods accessing the same key $k$ by their order of unlocking over $\langle \bp, \rp, \rc, \bc \rangle$ of $k$; (6) a $t\_insert_i()$ and a $\tryc_j$, of a transaction $T_j$ which has written to $k$, are similarly ordered by their order of unlocking over $\langle \bp, \rp, \rc, \bc \rangle$ of $k$; (7) similarly, two \tryc{} \mth{s} based on the order in which they unlock over $\langle \bp, \rp, \rc, \bc \rangle$ of same key $k$.

Combining the real-time order of events with above-mentioned order, we obtain a partial order which we denote as \emph{$\locko_H$}. (It is a partial order since it does not order overlapping \rvmt{s} on different $keys$ or an overlapping \rvmt{} and a \tryc{} which do not access any common $key$).

In order for $H$ to be sequential, all its \mth{s} must be ordered. Let $\alpha$ be a total order or \emph{linearization} of \mth{s} of $H$ such that when this order is applied to $H$, it is sequential. We denote the resulting history as $H^{\alpha} = \seq{\alpha}{H}$. 
% ~
We now argue about the \validity{} of histories generated by the algorithm. 
\begin{lemma}
	\label{lem:histvalid}
	Consider a history $H$ generated by the \opmvotm algorithm. Let $\alpha$ be a linearization of $H$ which respects $\locko_H$, i.e. $\locko_H \subseteq \alpha$. Then $H^{\alpha} = \seq{\alpha}{H}$ is \valid. 
\end{lemma}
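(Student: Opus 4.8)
The plan is to show that every \rvmt{} of $H^{\alpha}$ returns a value that was installed by some transaction which updated the corresponding key and committed earlier in $H^{\alpha}$. Since $\alpha$ respects $\locko_H$ (so in particular the real-time and unlock orderings of $H$ are preserved), I would fix an arbitrary \rvmt{} $\rvm_i(ht, k, v)$ of $H^{\alpha}$ and split on whether it is the first \mth{} of $T_i$ to touch $k$. When $k$ is already in $\llog_i$ the value is served from the local log (\Lineref{rvm-chk_log}); I would reduce this subsequent-access case to the first-access case by following the chain of local-log entries of $T_i$ on $k$ back to the unique \mth{} of $T_i$ that actually read the shared \lsl{} (or to an \upmt{} of $T_i$ itself, whose contribution is handled by the legality argument rather than validity). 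So the heart of the argument is the first-access case.

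For the first-access case, \algoref{rvmt} shows that a first \rvmt{} on $k$ either (i) finds no node for $k$ in $B_k.\lsl$, in which case it installs a marked node carrying a single version with $ts=0$, $val=null$ (\Lineref{rvm-chk_k1}, \Lineref{rvm-ver_abs}) and returns $null$; or (ii) finds the node of $k$, selects the version $ver_j$ whose timestamp $j$ is the largest one smaller than $i$, registers itself in $ver_j.rvl$ (\Lineref{rvm-add_i}), and returns $ver_j.val$ (\Lineref{rvm-ret}). In case (i) the returned $null$ is exactly the value of the initial committed transaction $T_0$ on $k$, so validity holds immediately. In case (ii) I must argue that $ver_j$ was created by a transaction $T_j$ that updated $k$ and committed before $\rvm_i$.

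The key structural invariant to establish is that every version present in the shared version list of a key was installed by a transaction that subsequently committed (or is the $ts=0$ version attributable to $T_0$). This follows from the shape of \tryc{} in \algoref{mtryc}: its first loop performs all validations and, on any failure, returns $Abort$ (\Lineref{mtryc6}) before a single version is written; only after all validations succeed does the second loop install versions (\Lineref{mmtryc14}, \Lineref{mtryc15}, \Lineref{mtryc18}) and then release the locks and return $Commit$ (\Lineref{mtryc21}). Hence a transaction never installs a version and then aborts, so any observable version belongs to a committed updating transaction. Combined with case (ii), this identifies $v = ver_j.val$ as a value written by the committed transaction $T_j$ on $k$.

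It remains to order $\tryc_j$ before $\rvm_i$ in $H^{\alpha}$, and this is the main obstacle, because the timestamp fact $j < i$ only fixes \emph{which} version is read (and is what the later legality proof will exploit), not the commit order. The ordering instead comes from the locking discipline: $T_j$ installs $ver_j$ while holding the locks on the relevant $\langle \bp, \rp, \rc, \bc \rangle$ of $k$, and $\rvm_i$ reads $ver_j$ only while holding those same locks after its own \emph{rv\_Validation()} succeeds. Since $\rvm_i$ observes $ver_j$, $T_j$ must have released these locks before $\rvm_i$ acquired them; as $\locko_H \subseteq \alpha$ orders the two critical sections by their unlock order, $\tryc_j$ precedes $\rvm_i$ in $H^{\alpha}$. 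Thus $T_j$ is a previously committed transaction that updated $k$ and wrote $v$, giving validity of $\rvm_i$, and since $\rvm_i$ was arbitrary, $H^{\alpha}$ is \valid. Making the step ``observing $ver_j$ implies $T_j$ unlocked first'' fully rigorous, and treating the $ts=0$ version uniformly as $T_0$'s committed write, are the places where the proof must be most careful.
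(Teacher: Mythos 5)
Your argument is correct and follows essentially the same route as the paper's own proof: the core step in both is that the returned value sits in a version installed by a committed transaction's \tryc{}, which must have locked and released $\langle \bp, \rp, \rc, \bc \rangle$ of $k$ before the \rvmt{} acquired them, so $\locko_H \subseteq \alpha$ forces $\tryc_j(ok)$ to precede $rvm_i(k,v)$ in $H^{\alpha}$. Your additional case analysis (local-log reuse, the $ts=0$/$T_0$ null case, and the invariant that versions are only installed after all validations succeed) just makes explicit what the paper's one-paragraph proof leaves implicit.
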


\begin{proof}
	Consider a successful \rvmt{} $rvm_i(k)$ that returns value $v$. The \rvmt{} first obtains the lock on $\langle \bp, \rp, \rc, \bc \rangle$ of key $k$. Thus the value $v$ returned by the \rvmt{} must have already been stored in $k$'s version list by a transaction, say $T_j$ when it successfully returned OK from its \tryc{} \mth{}. For this to have occurred, $T_j$ must have successfully locked and released $\langle \bp, \rp, \rc, \\\bc \rangle$ of $k$ prior to $T_i$'s locking \mth. Thus from the definition of $\locko_H$, we get that $\tryc_j(ok)$ occurs before $rvm_i(k,v)$ which also holds in $\alpha$. 
	
	%If $T_j$ is $T_0$, then by our assumption we have that $T_j$ committed before the start of any \mth{} in $H$. Hence, this automatically implies that in both cases $H^{\alpha}$ is \valid.  
\end{proof}

It can be seen that for proving correctness, any linearization of a history $H$ is sufficient as long as the linearization respects $\locko_H$. The following lemma formalizes this intuition, 

\begin{lemma}
	\label{lem:histseq}
	Consider a history $H$. Let $\alpha$ and $\beta$ be two linearizations of $H$ such that both of them respect $\locko_H$, i.e. $\locko_H \subseteq \alpha$ and $\locko_H \subseteq \beta$. Then, $H^{\alpha} = \seq{\alpha}{H}$ is opaque if $H^{\beta} = \seq{\beta}{H}$ is opaque. 
\end{lemma}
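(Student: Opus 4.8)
The plan is to reduce the statement to the graph characterization of \thmref{opg} and then show that the two sequential histories induce literally the same opacity graph. First I would record two structural facts. Since $\alpha$ and $\beta$ are linearizations of the \emph{same} concurrent history $H$, the histories $H^{\alpha}$ and $H^{\beta}$ contain exactly the same events, so they are equivalent, i.e. $\evts{H^{\alpha}} = \evts{H^{\beta}}$; consequently their completions $\overline{H^{\alpha}}$ and $\overline{H^{\beta}}$ are equivalent as well, since the same set of live transactions receives the same trailing $\trya$, and in particular the two graphs will be built on the same vertex set $\txns{H}$. Moreover, because both $\alpha$ and $\beta$ respect $\locko_H$, \lemref{histvalid} tells us that $H^{\alpha}$ and $H^{\beta}$ are both \valid{}.

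Assuming $H^{\beta}$ is opaque, \thmref{opg} supplies a version order $\ll$ with $\opg{H^{\beta}}{\ll}$ acyclic. As $H^{\alpha}$ and $H^{\beta}$ are equivalent, the same $\ll$ is a legitimate version order for $H^{\alpha}$, so I can form $\opg{H^{\alpha}}{\ll}$ and compare its three kinds of edges with those of $\opg{H^{\beta}}{\ll}$. The \mv{} edges coincide immediately by \lemref{eqv_hist_mvorder}, which says the multi-version edges depend only on the events and on $\ll$, not on how the events are ordered. The \rvf{} edges coincide because the return-value-from relation is determined solely by which committed transaction supplied the value that each \rvmt{} returns; this data is identical in equivalent histories, and validity of both histories (via \lemref{histvalid}) guarantees that in each case the corresponding $\tryc$ precedes the \rvmt{}, so $\rvf(H^{\alpha}) = \rvf(H^{\beta})$.

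The delicate case is the \rt{} edges, and this is where I expect the real work to lie. Here I would argue that the transactional real-time relation is invariant under the choice of linearization: both $\alpha$ and $\beta$ extend $\locko_H$, and $\locko_H$ by construction already contains the real-time order of the events of $H$. Hence whenever the commit of $T_i$ precedes the invocation of $T_j$ in $H$, this precedence is inherited by both $H^{\alpha}$ and $H^{\beta}$, while a pair of transactions that overlaps in $H$ is never forced into a real-time relation by a $\locko_H$-respecting total order. Since the \rt{} edges are read off the underlying real-time relation of $H$ (the same convention used around \propref{hoverh}, where the real-time relation of $H$ rather than of $\overline{H}$ is used), this yields $\rt(H^{\alpha}) = \rt(H^{\beta})$.

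With all three edge sets equal, $\opg{H^{\alpha}}{\ll}$ and $\opg{H^{\beta}}{\ll}$ are the same graph, so $\opg{H^{\alpha}}{\ll}$ is acyclic, and \thmref{opg} then gives that $H^{\alpha}$ is opaque. I would close by noting that the argument is symmetric in $\alpha$ and $\beta$, so the stated ``if'' is actually an ``iff'', even though only this direction is used in the sequel. The main obstacle, as flagged, is the clean justification that the \rt{} edges are preserved across linearizations; the \mv{} and \rvf{} edges follow routinely from the earlier lemmas once equivalence and validity are in hand.
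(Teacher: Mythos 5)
Your proof is correct, but it takes a genuinely different route from the paper's. The paper argues directly from the definition of opacity: it takes the legal \tseq{} witness $S$ for the opaque linearization, observes that $\overline{H^{\alpha}}$ and $\overline{H^{\beta}}$ are equivalent so $S$ is equivalent to both, and then notes that $\prec_{H^{\alpha}}^{RT} = \prec_{\locko_H}^{RT} = \prec_{H^{\beta}}^{RT}$, so the same $S$ witnesses opacity of the other linearization; legality of $S$ is a property of $S$ alone and transfers for free. You instead detour through the graph characterization of \thmref{opg}, extracting a version order $\ll$ from the opaque history and showing $\opg{H^{\alpha}}{\ll}$ and $\opg{H^{\beta}}{\ll}$ are literally the same graph by matching the \rt{}, \rvf{}, and \mv{} edge sets (the latter via \lemref{eqv_hist_mvorder}). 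Both arguments ultimately rest on the same two facts --- equality of the event sets and equality of the real-time relations inherited from $\locko_H$ --- so neither is more powerful, but the paper's version is shorter and avoids invoking \thmref{opg} as a black box, while yours makes explicit exactly which structural data each kind of edge depends on. One small caution: your intermediate claim that a $\locko_H$-respecting total order ``never forces'' overlapping transactions into a real-time relation is not literally true of the linearized history itself (a total order on methods can place all of $T_i$ before all of $T_j$ even when they overlap in $H$); what saves you, and what you correctly fall back on, is the paper's convention near \propref{hoverh} that the \rt{} edges are always read off the real-time relation of the original $H$, which is the same content as the paper's chain $\prec_{H^{\alpha}}^{RT} = \prec_{\locko_H}^{RT} = \prec_{H^{\beta}}^{RT}$. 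Your closing remark that the statement is really an ``iff'' matches the paper's symmetry observation.
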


\begin{proof} From \lemref{histvalid}, we get that both $H^{\alpha}$ and $H^{\beta}$ are \valid{} histories. Now let us consider each case \\
	\textbf{If:}  Assume that $H^{\alpha}$ is opaque. Then, we get that there exists a legal \tseq{} history $S$ that is equivalent to $\overline{H^{\alpha}}$. From the definition of $H^{\beta}$, we get that $\overline{H^{\alpha}}$ is equivalent to $\overline{H^{\beta}}$. Hence, $S$ is equivalent to $\overline{H^{\beta}}$ as well. We also have that, $\prec_{H^{\alpha}}^{RT} \subseteq \prec_{S}^{RT}$. From the definition of $\locko_H$, we get that $\prec_{H^{\alpha}}^{RT} = \prec_{\locko_H}^{RT} = \prec_{H^{\beta}}^{RT}$. This automatically implies that $\prec_{H^{\beta}}^{RT} \subseteq \prec_{S}^{RT}$. Thus $H^{\beta}$ is opaque as well. 
	
	~ \\
	\textbf{Only if:} This proof comes from symmetry since $H^{\alpha}$ and $H^{\beta}$ are not distinguishable. 
\end{proof}

This lemma shows that, given a history $H$, it is enough to consider one sequential history $H^{\alpha}$ that respects $\locko_H$ for proving correctness. If this history is opaque, then any other sequential history that respects $\locko_H$ is also opaque.

Consider a history $H$ generated by \opmvotm{} algorithm. We then generate a sequential history that respects $\locko_H$. For simplicity, we denote the resulting sequential history of \opmvotm{} as $H_{to}$. Let $T_i$ be a committed transaction in $H_{to}$ that writes to $k$ (i.e. it creates a new version of $k$). 

To prove the correctness, we now introduce some more notations. We define $\stl{T_i}{k}{H_{to}}$ as a committed transaction $T_j$ such that $T_j$ has the \emph{smallest timestamp larger (or stl)} than $T_i$ in $H_{to}$ that writes to $k$ in $H_{to}$. Similarly, we define $\lts{T_i}{k}{H_{to}}$ as a committed transaction $T_k$ such that $T_k$ has the \emph{largest timestamp smaller (or lts)} than $T_i$ that writes to $k$ in $H_{to}$. \cmnt{ We denote $\vli{ts}{x}{H_{to}}$, as the $v\_tuple$ in $x.vl$ after executing all the events in $H_{to}$ created by transaction $T_{ts}$. If no such $v\_tuple$ exists then it is nil. }Using these notations, we describe the following properties and lemmas on $H_{to}$,

\begin{property}
	\label{prop:uniq-ts}
	Every transaction $T_i$ is assigned a unique numeric timestamp $i$.
\end{property}

\begin{property}
	\label{prop:ts-inc}
	If a transaction $T_i$ begins after another transaction $T_j$ then $j < i$.
\end{property}

\begin{lemma}
	\label{lem:readsfrom}
	If a transaction $T_k$ looks up key $k_x$ from (a committed transaction) $T_j$ then $T_j$ is a committed transaction updating to $k_x$ with $j$ being the largest timestamp smaller than $k$. Formally, $T_j = \lts{T_k}{k_x}{H_{to}}$. 
\end{lemma}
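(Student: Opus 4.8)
The plan is to trace the version that $T_k$'s \rvmt{} actually selects in \algoref{rvmt}, and then argue by contradiction that this selection coincides with $\lts{T_k}{k_x}{H_{to}}$ in the final sequentialized history. First I would unwind the meaning of ``$T_k$ looks up $k_x$ from $T_j$'': by the code of the \rvmt{}, after locating the node for $k_x$ in $B_k.\lsl$ and passing \emph{rv\_Validation()}, $T_k$ returns the value stored in the version $ver_j$ whose timestamp $j$ is the largest timestamp strictly smaller than $k$ present in $k_x$'s version list at the instant it performs the selection (the step immediately preceding \Lineref{rvm-add_i}). Since every version ever placed in a version list is created either by the initial transaction $T_0$ or in the \tryc{} of an update transaction that returns commit, the mere presence of $ver_j$ already shows that $T_j$ is a committed transaction that writes (inserts or deletes) $k_x$ and that $j<k$. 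This yields the easy half of the statement.

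The substance of the lemma is the \emph{maximality} of $j$: no committed transaction of $H_{to}$ writes $k_x$ with a timestamp strictly between $j$ and $k$. I would prove this by contradiction. Among all committed transactions that write $k_x$ with timestamp in the open interval $(j,k)$, pick the one, say $T_m$, with the smallest timestamp $m$; minimality then guarantees that no committed transaction writes $k_x$ with a timestamp in $(j,m)$. Two timing observations drive the rest. Because $T_k$ selected $ver_j$ as the largest timestamp below $k$ at its read instant, no version with timestamp in $(j,k)$ was present then, so $T_m$'s commit (which installs the version $ver_m$ with $j<m<k$) must occur \emph{after} $T_k$'s \rvmt{} completes. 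Moreover, when reading $ver_j$, $T_k$ writes its own timestamp into $ver_j.rvl$ and raises $ver_j.max_{rvl}$ to at least $k$ (\Lineref{rvm-max}); since $max_{rvl}$ is nondecreasing, $ver_j.max_{rvl}\ge k$ holds from that point onward.

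The contradiction then comes from \emph{tryC\_Validation()}. When $T_m$ validates its write to $k_x$ during its own \tryc{}, it locates the version of $k_x$ with the largest timestamp smaller than $m$. By minimality of $m$ there is no committed writer of $k_x$ with timestamp in $(j,m)$, and $ver_j$ (committed before $T_k$'s read, hence before $T_m$'s commit) is present; therefore the version located is exactly $ver_j$. Its $max_{rvl}$ is at least $k>m$, so the test at \Lineref{trycv6} succeeds and \emph{tryC\_Validation()} returns $false$ at \Lineref{trycv8}, forcing $T_m$ to abort --- contradicting $T_m\in\comm{H_{to}}$. Hence $j$ is the largest timestamp below $k$ of any committed writer of $k_x$, i.e. $T_j=\lts{T_k}{k_x}{H_{to}}$. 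The main obstacle I anticipate is precisely this last step: ruling out that some intermediate version shields $T_m$ from observing $ver_j.max_{rvl}$. Choosing $T_m$ with \emph{minimal} timestamp, rather than an arbitrary transaction in $(j,k)$, is what makes the located version provably equal to $ver_j$ and closes the gap.
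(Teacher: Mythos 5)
Your proposal is correct, and it reaches the lemma by a genuinely different route from the paper. The paper's own proof is a short contradiction argument pitched purely at the level of timestamps: it supposes $T_k$ reads $k_x$ from some other committed writer $T_i$, appeals to \propref{ts-inc} to obtain an ordering of $i,j,k$ that clashes with the statement, and stops; it never engages the algorithm, and in particular never explains why no committed writer of $k_x$ can carry a timestamp strictly between $j$ and $k$ --- which is the only nontrivial content of the lemma. Your proof supplies exactly that mechanism: the version-selection rule of the \rvmt{}, the monotone update of $max_{rvl}$ at \Lineref{rvm-max}, and the abort test at \Lineref{trycv6} of \emph{tryC\_Validation()}, glued together by choosing the offending writer $T_m$ with \emph{minimal} timestamp in $(j,k)$ so that the version it inspects during its own validation is provably $ver_j$ and hence carries $max_{rvl}\ge k>m$, forcing $T_m$ to abort. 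What the paper's argument buys is brevity (at the price of being essentially a restatement of the conclusion); what yours buys is an actual derivation from the code, including the ``shielding'' subtlety that you correctly close with the minimality choice. One small tightening: the claim that $T_m$'s version installation occurs after $T_k$'s read is best justified via the lock-protected critical sections on $k_x$'s $preds$/$currs$, which serialize the two accesses (if $T_m$'s locked section preceded $T_k$'s, $T_k$ would have selected $ver_m$), rather than via completion of the whole \rvmt{}; the substance of your argument is unaffected.
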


\begin{proof}
    We prove it by contradiction. So, assume that transaction $T_k$ looks up key $k_x$ from $T_i$ that has committed before $T_j$ so, from \propref{ts-inc}, $i<k$ and $k<j$ i.e. $i$ is not largest timestamp smaller than $k$. But given statement in this lemma is $i<j<k$ which contradicts our assumption. Hence, $T_k$ looks up key $k_x$ from $T_j$ which is the largest timestamp smaller than $k$.
\end{proof}

\cmnt{
	\begin{definition}
		\label{def:fLP}
		%For each successful method, $LP$ is the first unlocking point of that method. 
		Linearization Point (LP) is the first unlocking point of each successful method.
	\end{definition}
}

\begin{lemma}
	\label{lem:readswrite}
	Suppose a transaction $T_k$ looks up $k_x$ from (a committed transaction) $T_j$ in $H_{to}$, i.e. $\{\up_j(k_{x,j}, v), \rvm_k(k_{x,i}, v)\} \in \evts{H_{to}}$. Let $T_i$ be a committed transaction that updates to $k_x$, i.e. $\up_i(k_{x,i}, u) \in \evts{T_i}$. Then, the timestamp of $T_i$ is either less than $T_j$'s timestamp or greater than $T_k$'s timestamp, i.e. $i<j \oplus k<i$ (where $\oplus$ is XOR operator). 
\end{lemma}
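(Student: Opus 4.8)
The plan is to derive this directly from \lemref{readsfrom}. By that lemma, since $T_k$ looks up $k_x$ from the committed transaction $T_j$, we obtain $T_j = \lts{T_k}{k_x}{H_{to}}$; that is, among all committed transactions that update $k_x$, $T_j$ is the one whose timestamp is the \emph{largest timestamp smaller than} $k$. In particular this already gives $j < k$, which will be used to make the two disjuncts mutually exclusive.

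Next I would recast the goal using the uniqueness of timestamps. By \propref{uniq-ts} the timestamps are pairwise distinct, so $i$, $j$, $k$ are totally comparable, and whenever this lemma is applied to form a multi-version edge the transaction $T_i$ is a genuinely different writer from the version that is read (it creates a version with a different value, matching the triplet in the \mv{} edge definition, where $u \neq v$), so $i \neq j$ and $i \neq k$. Since $j < k$, the disjuncts $i < j$ and $k < i$ cannot both hold, as that would force $k < i < j$ and contradict $j < k$. Hence the exclusive-or $i < j \oplus k < i$ reduces to showing that at least one of them holds, equivalently that $i \notin (j,k)$.

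The main (and essentially only substantive) step is to rule out $j < i < k$ by contradiction. Suppose $j < i < k$. Then $T_i$ is a committed transaction updating $k_x$ whose timestamp is strictly between $j$ and $k$, and in particular smaller than $k$. This contradicts the defining property of $T_j = \lts{T_k}{k_x}{H_{to}}$ established above, namely that $j$ is the largest timestamp smaller than $k$ among committed updaters of $k_x$. Therefore $j < i < k$ is impossible, so $i < j$ or $k < i$, and combined with the mutual exclusivity already noted this yields $i < j \oplus k < i$.

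I do not expect a real obstacle here: the content is in effect a restatement of the maximality encoded in \lemref{readsfrom}. The only care needed is boundary bookkeeping — confirming that the degenerate cases $i = j$ (where $T_i$ is the read-from transaction itself) and $i = k$ (where the reader is also the writer) do not arise in the intended application and so need not be covered by the exclusive-or, which keeps the statement well posed.
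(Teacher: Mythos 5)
Your proof is correct, and it takes a recognizably different route from the paper's. The paper proves the same contrapositive step --- ruling out $j<i<k$ --- but does so by appealing directly to the implementation: it asserts that if $j<i<k$ then the \rvmt{}/\tryc{} logic (in effect, the $max_{rvl}$ validation) would have forced $T_i$ to abort or forced $T_k$ to read from $T_i$, both of which contradict the hypotheses. You instead treat the lemma as a pure corollary of \lemref{readsfrom}: since $T_j = \lts{T_k}{k_x}{H_{to}}$ is by definition the committed writer of $k_x$ with the \emph{largest} timestamp below $k$, a committed writer $T_i$ with $j<i<k$ would contradict that maximality outright. This is logically sound and arguably cleaner, but it shifts the entire algorithmic burden onto \lemref{readsfrom}; that lemma must then be read as a statement about the final history $H_{to}$ strong enough to exclude writers that commit \emph{after} $T_k$'s lookup yet carry an intermediate timestamp, which is precisely what the paper's direct appeal to the validation code is guarding against. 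Your handling of the boundary cases $i=j$ and $i=k$ is also more careful than the paper's, which silently reads the negation of the exclusive-or as the strict chain $j<i<k$; as you note, the lemma as literally stated fails for $i=j$, and the restriction to a distinct writer (the $u\neq v$ condition in the \mv{} edge triplet) is what keeps the statement well posed in its only application.
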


\begin{proof}
	We will prove this by contradiction. Assume that $i<j \oplus k<i$ is not true. This implies that, $j<i<k$. But from the implementation of \rvmt{} and \tryc{} \mth{s}, we get that either transaction $T_i$ is aborted or $T_k$ looks up $k$ from $T_i$ in $H$. Since neither of them are true, we get that $j<i<k$ is not possible. Hence, $i<j \oplus k<i$. 
\end{proof}

To show that $H_{to}$ satisfies opacity, we use the graph characterization developed above in \secref{gcofo}. For the graph characterization, we use the version order defined using timestamps. Consider two committed transactions $T_i, T_j$ such that $i < j$. Suppose both the transactions write to key $k$. Then the versions created are ordered as $k_i \ll k_j$. We denote this version order on all the $keys$ created as $\ll_{to}$. Now consider the opacity graph of $H_{to}$ with version order as defined by $\ll_{to}$, $G_{to} = \opg{H_{to}}{\ll_{to}}$. In the following lemmas, we will prove that $G_{to}$ is acyclic.

\begin{lemma}
	\label{lem:edgeorder}
	All the edges in $G_{to} = \opg{H_{to}}{\ll_{to}}$ are in timestamp order, i.e. if there is an edge from $T_j$ to $T_i$ then the $j < i$. 
\end{lemma}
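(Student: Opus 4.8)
The plan is to prove the statement one edge-type at a time, showing that each of the three kinds of edges in $G_{to}$ runs from a transaction of smaller timestamp to one of larger timestamp. The key enabling observation is that the version order $\ll_{to}$ is defined precisely so that $k_i \ll_{to} k_j \Leftrightarrow i < j$; thus the version order is already aligned with the timestamp order, and it suffices to track, for each edge from $T_a$ to $T_b$, that $a < b$.

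First I would dispatch the \rt{} edges. An \rt{} edge from $T_a$ to $T_b$ exists only when $T_a$ commits before $T_b$ begins; in particular $T_a$ begins before $T_b$ begins, so \propref{ts-inc} gives $a < b$ at once. Next, the \rvf{} edges: such an edge from $T_a$ to $T_b$ arises when $T_b$ performs an \rvmt on a key $k_x$ and returns the version written by the committed transaction $T_a$. By \lemref{readsfrom}, $T_a$ is exactly $\lts{T_b}{k_x}{H_{to}}$, the writer of $k_x$ of largest timestamp smaller than $b$; in particular $a < b$.

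The main work, and the main obstacle, is the \mv{} edges, where the disjointness lemma \lemref{readswrite} is essential. Following the construction, such an edge comes from a triplet $\up_i(k_{x,i},u)$, $\rvm_j(k_{x,i},u)$, $\up_k(k_{x,k},v)$ with $u \neq v$, so that $T_j$ reads the version of $k_x$ written by $T_i$ while $T_k$ is another committed writer of $k_x$. Instantiating \lemref{readswrite} with the reader $T_j$, the read-from transaction $T_i$, and the extra writer $T_k$ yields $(k < i) \oplus (j < k)$. There are two subcases. If $k_{x,i} \ll_{to} k_{x,k}$, the edge goes from $T_j$ to $T_k$; since $k_{x,i} \ll_{to} k_{x,k}$ means $i < k$, the alternative $k < i$ is false, so the XOR forces $j < k$, as required. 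Otherwise $k_{x,k} \ll_{to} k_{x,i}$, the edge goes from $T_k$ to $T_i$, and this version order directly gives $k < i$. Hence every \mv{} edge is timestamp-increasing as well.

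Combining the three cases, every edge of $G_{to}$ increases the timestamp of its endpoints, which is the claim. The delicate point to get right is the bookkeeping in the \mv{} case: one must carefully match the roles of reader, read-from writer, and third writer in the triplet to the variables of \lemref{readswrite}, and then use the definition of $\ll_{to}$ to discharge exactly one branch of the XOR. (An implicit ingredient, which I would note, is that \lemref{readsfrom} already guarantees $i < j$, so that the reader is consistently of larger timestamp than the version it reads; this keeps the subcase analysis from degenerating.)
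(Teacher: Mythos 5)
Your proof is correct and follows essentially the same route as the paper's: the same three-way case split on edge types, with \propref{ts-inc} handling the \rt{} edges, \lemref{readsfrom} handling the \rvf{} edges, and the combination of the definition of $\ll_{to}$ with \lemref{readswrite} resolving the two subcases of the \mv{} edges. Your version is merely a bit more explicit about instantiating the variables of \lemref{readswrite} and discharging one branch of the XOR, which matches what the paper does implicitly.
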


\begin{proof}
	To prove this, let us analyze the edges one by one, 
	\begin{itemize}
		\item \rt{} edges: If there is an \rt{} edge from $T_j$ to $T_i$, then $T_j$ terminated before $T_i$ started. Hence, from \propref{ts-inc} we get that $j < i$.
		
		\item \rvf{} edges: This follows directly from \lemref{readsfrom}.
		
		\item \mv{} edges: The \mv{} edges relate a committed transaction $T_k$ updates to a key $k$, $up_k(k,v)$; a successful \rvmt{} $rvm_j(k,u)$ belonging to a transaction $T_j$ looks up $k$ updated by a committed transaction $T_i$, $up_i(k, u)$. Transactions $T_i, T_k$ create new versions $k_i, k_k$ respectively. According to $\ll_{to}$, if $k_k \ll_{to} k_i$, then there is an edge from $T_k$ to $T_i$. From the definition of $\ll_{to}$ this automatically implies that $k < i$.
		
		On the other hand, if $k_i \ll_{to} k_k$ then there is an edge from $T_j$ to $T_k$. Thus, in this case, we get that $i < k$. Combining this with \lemref{readswrite}, we get that $j < k$.

		%	\item \mv{} edges: The \mv{} edges relate a committed transaction $T_i$ updates to a key $k$, $up_i(k,u)$; a successful \rvmt{} $rvm_k(k,v)$ belonging to a transaction $T_k$ lookup $k$ updated by a committed transaction $T_j$, $up_j(k, v)$ and transaction $T_j$. Transactions $T_j, T_i$ create new versions $k_i, k_j$ respectively. According to $\ll_{to}$, if $k_i \ll_{to} k_j$, then there is an edge from $T_i$ to $T_j$. From the definition of $\ll_{to}$ this automatically implies that $i < j$.
		
		%	On the other hand, if $k_j \ll_{to} k_i$ then there is an edge from $T_k$ to $T_i$. Thus in this case, we get that $j < i$. Combining this with \lemref{readswrite}, we get that $k < i$. 
	\end{itemize}
	Thus in all the cases, we have shown that if there is an edge from $T_j$ to $T_i$ then the $j < i$.
\end{proof}

\begin{theorem}
	\label{thm:to-opaque}
	Any history $H_{to}$ generated by \opmvotm is \opq. 
\end{theorem}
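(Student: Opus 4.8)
The plan is to invoke the graph characterization of opacity established in \thmref{opg}, which reduces proving opacity of a \valid{} history to exhibiting a single version order whose associated opacity graph is acyclic. Essentially all of the substantive work has already been carried out in the preceding lemmas, so I expect this final step to be a short assembly rather than a genuinely new argument. Note first that $H_{to}$ was chosen as one particular linearization of the algorithm's history respecting $\locko_H$; by \lemref{histseq} this choice is without loss of generality, since if $H_{to}$ is opaque then every sequentialization respecting $\locko_H$ is opaque as well.

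First I would fix the version order $\ll_{to}$ already defined via timestamps, which is well-defined because timestamps are unique by \propref{uniq-ts}, and form the graph $G_{to} = \opg{H_{to}}{\ll_{to}}$. The key observation is that \lemref{edgeorder} shows that every edge of $G_{to}$ — whether an \rt{}, \rvf{}, or \mv{} edge — runs from a transaction with a smaller timestamp to one with a larger timestamp. Acyclicity is then immediate: a directed cycle $T_{i_1} \to T_{i_2} \to \cdots \to T_{i_m} \to T_{i_1}$ would force $i_1 < i_2 < \cdots < i_m < i_1$, which is impossible. Hence $G_{to}$ is acyclic.

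Finally I would combine acyclicity with validity. \lemref{histvalid} guarantees that $H_{to}$ is \valid, and the previous paragraph shows that the version order $\ll_{to}$ makes $\opg{H_{to}}{\ll_{to}}$ acyclic. The ``if'' direction of \thmref{opg} then yields directly that $H_{to}$ is \opq, which completes the proof. The main difficulty in the whole development is not in this theorem but in \lemref{edgeorder}, specifically the case analysis showing that the \mv{} edges respect the timestamp order (which in turn relies on \lemref{readswrite} and the legality enforced by the \npctov{} check on $max_{rvl}$); once that lemma is in hand, the present theorem reduces to the trivial fact that a strict-order-respecting digraph is acyclic, together with a single appeal to the graph characterization.
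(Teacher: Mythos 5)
Your proposal is correct and follows essentially the same route as the paper's own proof: establish validity via \lemref{histvalid}, derive acyclicity of $\opg{H_{to}}{\ll_{to}}$ from the timestamp-ordering of edges in \lemref{edgeorder} (a cycle would force $c_1 < c_1$), and conclude opacity from \thmref{opg}. The only additions are your explicit appeals to \lemref{histseq} and \propref{uniq-ts}, which are harmless and consistent with the paper's development.
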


\begin{proof}
	From the definition of $H_{to}$ and \lemref{histvalid}, we get that $H_{to}$ is \valid. We show that $G_{to} = \opg{H_{to}}{\ll_{to}}$ is acyclic. We prove this by contradiction. Assume that $G_{to}$ contains a cycle of the form, $T_{c1} \rightarrow T_{c2} \rightarrow .. T_{cm} \rightarrow T_{c1}$. From \lemref{edgeorder} we get that, $c1 < c2 < ... < cm < c1$ which implies that $c1 < c1$. Hence, a contradiction. This implies that $G_{to}$ is acyclic. Thus from \thmref{opg}, we get that $H_{to}$ is opaque.
\end{proof}

Now, it is left to show that our algorithm is \emph{live}, i.e., under certain conditions, every operation eventually completes. We have to show that the transactions do not deadlock. This is because all the transactions lock all $\langle \bp, \rp, \rc, \bc \rangle$ of $keys$ in a predefined order. As discussed earlier, the STM system orders all $\langle \bp, \rp, \rc, \bc \rangle$ of $keys$. We denote this order as \aco and denote it as $\prec_{ao}$. Thus $k_1 \prec_{ao} k_2 \prec_{ao} ... \prec_{ao} k_n$. 

From \aco, we get the following property

\begin{property}
	\label{prop:aco}
	Suppose transaction $T_i$ accesses shared objects $p$ and $q$ in $H$. If $p$ is ordered before $q$ in \aco, then $lock(p)$ by transaction $T_i$ occurs before $lock(q)$. Formally, $(p \prec_{ao} q) \Leftrightarrow (lock(p) <_H lock(q))$. 
\end{property}

\begin{theorem}
	\label{thm:prog}
	\opmvotm with unbounded versions ensures that \rvmt{s} do not abort.
\end{theorem}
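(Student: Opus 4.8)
The plan is to prove the claim by a direct structural inspection of the \rvmt{} code in \algoref{rvmt}, showing that it contains no abort branch, and that the one step which could conceivably fail---locating a version to read---always succeeds when versions are unbounded. First I would observe that every control path through \algoref{rvmt} terminates in one of the following: returning a value read from the local log (\Lineref{rvm-pres_log}), retrying the search after a failed \emph{rv\_Validation()} (\Lineref{rvm-retry}), creating a fresh marked node and returning $null$ (\Lineref{rvm-ver_abs}), or reading the value of some version $ver_j$ and returning it (\Lineref{rvm-ret}). Crucially, none of these is an $\mathcal{A}$ (abort) response: in particular the failed validation at \Lineref{rvm-chk_valid} leads only to re-execution of the search loop, never to an abort, and---unlike \tins{} and the \upmt{s}---the \rvmt{} never invokes \emph{tryC\_Validation()}, the only routine that can force an abort via $max_{rvl}$.

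Next I would isolate the single place where an \rvmt{} needs a value to exist, namely the step that identifies the version $ver_j$ with the largest timestamp strictly smaller than $i$ (just before \Lineref{rvm-add_i}). The heart of the argument is the invariant: for every key $k$ reachable in $B_k.\lsl$ (via \rn or \bn), its version list $k.vl$ always contains a version with $ts = 0$. I would establish this by examining every way a node for $k$ enters the list: (i) the first \tins{} on $k$ creates the node together with the $T_0$ version $\langle ts=0, val=null, \ldots \rangle$ at \Lineref{mtryc13}; (ii) if $k$ is absent when an \rvmt{} runs, the \rvmt{} itself installs a marked node carrying exactly the single version $\langle ts=0, val=null, \ldots \rangle$ at \Lineref{rvm-ver_abs} and returns $null$ without aborting. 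In both cases a $ts=0$ version is present from the moment the node is created. Since a \tdel{} only appends a new version (with $val=null$) and marks the node rather than removing any version, and since \emph{with unbounded versions no version is ever evicted from $k.vl$}, this $ts=0$ version is never destroyed.

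Finally I would combine these facts. By \propref{uniq-ts} each transaction $T_i$ is assigned a unique timestamp drawn from the atomic counter, so $0 < i$ for every $T_i$ that executes an \rvmt{}; hence the $ts=0$ version is always a candidate strictly below $i$, and the required $ver_j$ (the largest such) is well defined whenever $k$ is present. Consequently the read at \Lineref{rvm-ret} always returns a legitimate value, and together with the absent-key branch at \Lineref{rvm-ver_abs} this shows that every \rvmt{} returns either a value or $null$ and never $\mathcal{A}$, which is the desired conclusion.

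I expect the main obstacle to be the invariant of the second paragraph rather than the (essentially syntactic) code walk of the first. Establishing that a $ts=0$ version is always readable requires ruling out, under concurrency, any execution in which the version a low-timestamp \rvmt{} needs has disappeared: one must argue that deletions and markings preserve old versions, and that the \rvmt{}'s own locking of $\langle \bp, \rp, \rc, \bc \rangle$ together with \emph{rv\_Validation()} guarantees it operates on a consistent, stable node. This is precisely the property that fails for the bounded variant \opkotm{}, where eviction of the oldest version can leave a transaction with no version below its timestamp and thus force an abort; the proof therefore hinges essentially on the ``unbounded versions'' hypothesis and would not go through otherwise.
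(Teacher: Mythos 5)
Your proposal is correct and rests on the same core observation as the paper's proof, namely that with unbounded versions every \rvmt{} is guaranteed to find a version to read (and the code contains no abort branch), so it never returns $\mathcal{A}$. The paper states this in two sentences as ``self-explanatory,'' whereas you supply the supporting details it leaves implicit --- the code walk of \algoref{rvmt}, the invariant that every reachable node carries a $ts=0$ version installed either at \Lineref{mtryc13} or at \Lineref{rvm-ver_abs}, and the observation that this is exactly what breaks for \opkotm{} --- which is a welcome strengthening but not a different argument.
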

\begin{proof}
	This is self-explanatory with the help of \opmvotm{} algorithm because each $key$ is maintaining multiple versions in the case of unbounded versions. So \rvmt{} always finds a correct version to read it from. Thus, \rvmt{s} do not $abort$. 
\end{proof}

%\noindent \thmref{prog} gives us a nice property a transaction with \tlook only \mth{s} will not abort. 
%\vspace{-7mm}
\section{Experimental Evaluation}
\label{sec:exp}
%\vspace{1mm}
This section describes the experimental analysis of proposed \emph{\opmvotm{s}} with state-of-the-art STMs. We have three main goals in this section: (1) Analyze the performance benefits of the optimized multi-version object based STMs (or \emph{\opmvotm{s}}) over multi-version object based STMs (or \emph{\mvotm{s}}). (2) Evaluate the benefit of \emph{\opmvotm{s}} over the single-version object based STMs (or \emph{OSTMs}), and (3) Analyze the benefit of \emph{\opmvotm{s}} over  multi-version read-write STMs. We implement hash-table object and list object as \ophmvotm and \oplmvotm described in \secref{pcode}. We also consider the extension of this optimized multi-version object STMs to reduce memory usage. Specifically, we consider a variant that implements garbage collection with unbounded versions and another variant where the number of versions never exceeds a given threshold $K$ for both \emph{\ophmvotm{s}} and \emph{\oplmvotm{s}}. 

%\todo{SK: Remove}
%In this section, we study the performance characteristics of our proposed algorithm hash-table based MVOSTM (HT-MVOSTM) against HT-OSTM \cite{Peri+:OSTM:Netys:2018}, ESTM \cite{Felber+:ElasticTrans:2017:jpdc}, HT-MVTO \cite{Kumar+:MVTO:ICDCN:2014}, HT-MVOSTM with garbage collection (HT-MVOSTM-GC), HT-KOSTM(bounded version OSTM) and HT-RWSTM which is synchronized by basic time-stamp ordering \cite{WeiVoss:TIS:2002:Morg}. We have also compared proposed list-MVOSTM against list-OSTM \cite{Peri+:OSTM:Netys:2018}, list-MVTO \cite{Kumar+:MVTO:ICDCN:2014}, NOrec STM list (NTM) \cite{Dalessandro+:NoRec:PPoPP:2010}, Boosting list (BST) \cite{HerlihyKosk:Boosting:PPoPP:2008}, Lock-free transactional list (LFT) \cite{ZhangDech:LockFreeTW:SPAA:2016}, list-MVOSTM with garbage collection(list-MVOSTM-GC) and list-KOSTM(bounded version OSTM).

%Our experimental goals are : (G1) To study the performance of HT-MVOSTM and list-MVOSTM against the state-of-the-art object and hash-table/list based  STM algorithms over different workloads. (G2) To analyze the number of aborted transactions on both settings. (G3) To analyze the behavior of garbage collection on unbounded OSTM(MVOSTM) against bounded OSTM(KOSTM).
\vspace{1mm}
\noindent
\textbf{Experimental system:} The Experimental system is a large-scale 2-socket Intel(R) Xeon(R) CPU E5-2690 v4 @ 2.60GHz with 14 cores per socket and two hyper-threads (HTs) per core, for a total of 56 threads. Each core has a private 32KB L1 cache and 256 KB L2 cache (which is shared among HTs on that core). All cores on a socket share a 35MB L3 cache. The machine has 32GB of RAM and runs Ubuntu 16.04.2 LTS. All code was compiled with the GNU C++ compiler (G++) 5.4.0 with the build target x86\_64-Linux-gnu and compilation option -std=c++1x -O3.

\vspace{1mm}
\noindent
\textbf{STM implementations:} We have taken the implementation of NOrec-list \cite{Dalessandro+:NoRec:PPoPP:2010}, Boosting-list \cite{HerlihyKosk:Boosting:PPoPP:2008}, Trans-list \cite{ZhangDech:LockFreeTW:SPAA:2016},  ESTM \cite{Felber+:ElasticTrans:2017:jpdc}, and RWSTM directly from the TLDS framework\footnote{https://ucf-cs.github.io/tlds/}. And the implementation of MVOSTM \cite{Juyal+:MVOSTM:SSS:2018}, OSTM \cite{Peri+:OSTM:Netys:2018} and MVTO \cite{Kumar+:MVTO:ICDCN:2014} from our PDCRL library\footnote{https://github.com/PDCRL/}. We implemented our algorithms in C++. Each STM algorithm first creates N-threads, each thread, in turn, spawns a transaction. Each transaction exports \tbeg{}, \tins{}, \tlook{}, \tdel{} and \tryc{} methods as described in \secref{model}.
%Each transaction has two phases, the first phase, i.e., the \emph{return value method execution} followed by the second phase, i.e., the \emph{update method execution} phase defined in detail in \secref{mvdesign}.

\vspace{1mm}
\noindent
\textbf{Methodology:\footnote{Code is available here: https://github.com/PDCRL/MVOSTM/OPT-MVOSTM}} We have considered three types of workloads: ($W1$) Li - Lookup intensive (90\% lookup, 8\% insert, and 2\% delete), ($W2$) Mi - Mid intensive (50\% lookup,  25\% insert, and 25\% delete), and ($W3$) Ui - Update intensive (10\% lookup,  45\% insert, and 45\% delete). The experiments are conducted by varying number of threads from 2 to 64 in power of 2, with 1000 keys randomly chosen. We assume that the \tab{} of \ophmvotm has five buckets and each of the bucket (or list in case of \oplmvotm) can have a maximum size of 1000 keys. Each transaction, in turn, executes 10 operations which include \tlook, \tdel{}, and \tins{} operations. %For accurate results, 
We take an average over 10 results as the final result for each experiment.

\vspace{1mm}
\noindent
\textbf{Results:} \figref{ht-time} represents the performance benefit of all the variants of proposed optimized \mvotm with all variants of \mvotm for hash-table objects. It shows \ophkotm performs best among all the algorithms (OPT-HT-MVOSTM-GC, OPT-HT-MVOSTM, HT-KOSTM, HT-MVOSTM-GC, HT-MVOSTM) by a factor of 1.02, 1.11, 1.05, 1.07, 1.22 for workload W1, 1.06, 1.09, 1.07, 1.08, 1.15 for workload W2, and 1.01, 1.03, 1.02, 1.03, 1.08 for workload W3 respectively. Along with this, \figref{ht-abort} shows the abort count respective algorithms on workload W1, W2, and W3. This represents for less number of threads, the number of aborts are almost same for all the algorithms. But while increasing the number of threads, the number of aborts are least in \ophkotm as compare to others.   So, we compare the performance of \ophkotm with the state-of-the-art STMs as shown in \figref{ht-time-all}. \ophkotm
outperforms all the algorithms (HT-OSTM, ESTM, RWSTM, HT-MVTO, HT-KSTM) by a factor of 3.62, 3.95, 3.44, 2.75, 1.85 for W1,  1.44, 2.36, 4.45, 9.84, 7.42 for W2, and 2.11, 4.05, 7.84, 12.94, 10.70 for W3 respectively. The corresponding number of aborts are represented in \figref{ht-abort-all}. Number of aborts are minimum for \ophkotm as compare to other state-of-the-art STMs. Especially, the number of aborts for \ophkotm is almost negligible as compared to HT-OSTM on lookup-intensive workload (W1) because \ophkotm finds a correct version to looks up as shown in \figref{ht-abort-all} (a). %So, experimental results prove that maintaining multiple version corresponding to each key in \ophkotm gives best speedup for while reducing the number of aborts.   

The observation of optimized list based \mvotm{} is similar as 
optimized hash-table based \mvotm{}. \figref{list-time} represents the performance benefit of all the variants of proposed optimized \mvotm with all variants of \mvotm for list objects. It shows \oplkotm performs best among all the algorithms (OPT-list-MVOSTM-GC, OPT-list-MVOSTM, list-KOSTM, list-MVOSTM-GC, list-MVOSTM) by a factor of 1.14, 1.24, 1.21, 1.20, 1.35 for W1, 1.06, 1.07, 1.12, 1.13, 1.20 for W2, and  1.09, 1.19, 1.11, 1.17, 1.31 for W3 respectively. Along with this, \figref{list-abort} shows the minimum abort count  by \oplkotm as compare to other algorithms on workload W1, W2, and W3. Hence, we choose the best-proposed algorithm \oplkotm and compare with the state-of-the-art list based STMs. 

\begin{figure}[H]
	\centering
	%\centering
	\captionsetup{justification=centering}
	\includegraphics[width=12cm]{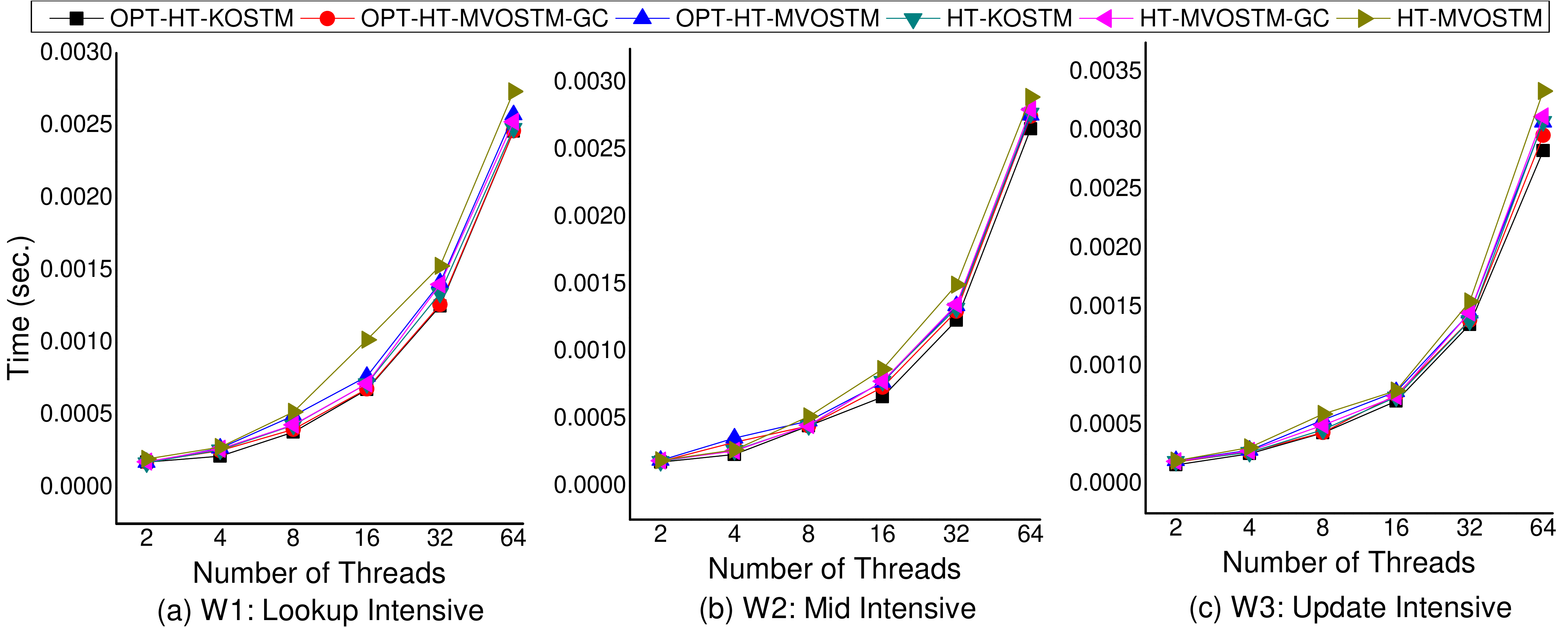}
	\centering
	\caption{Time comparison among variants of \emph{\ophmvotm{s}} and \emph{\hmvotm{s}} on hash-table}\label{fig:ht-time}
\end{figure}
\vspace{-.5cm}
\begin{figure}[H]
	\centering
	%\centering
	\captionsetup{justification=centering}
	\includegraphics[width=12cm]{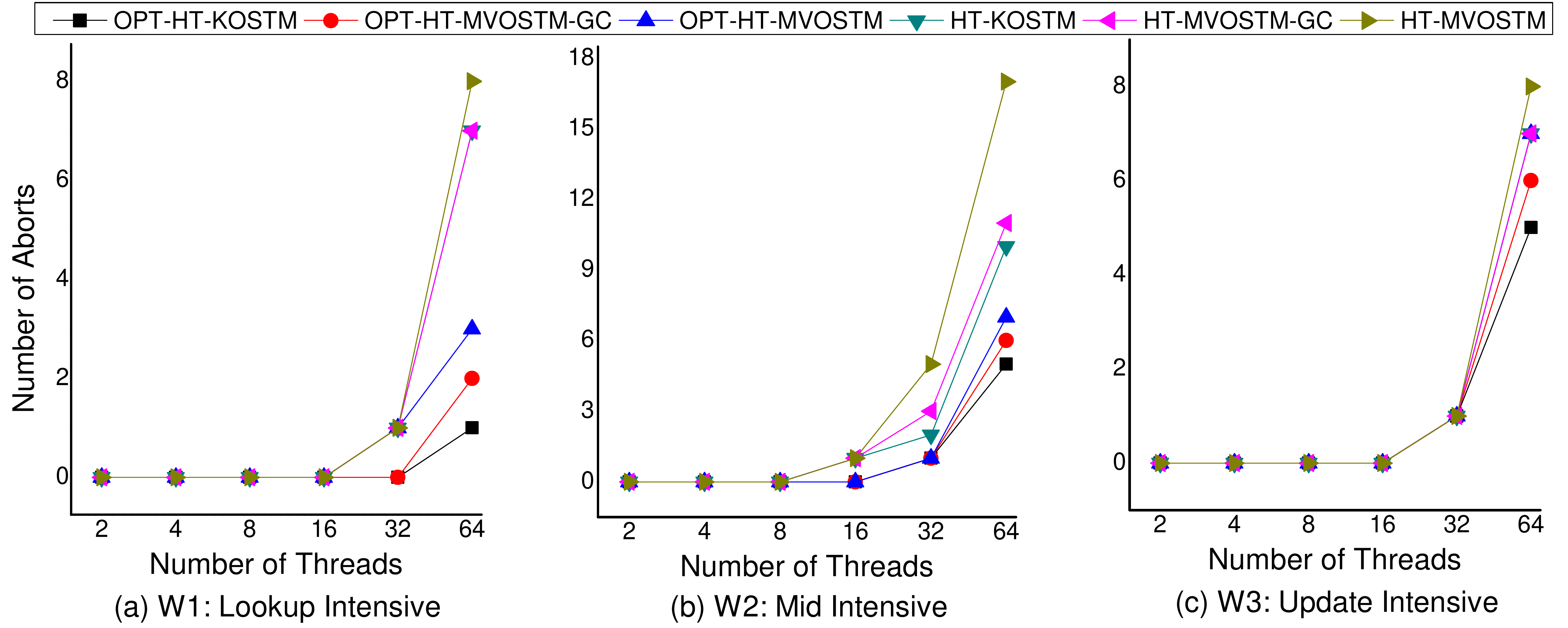}
	\centering
	\caption{Abort count among variants of \emph{\ophmvotm{s}} and \emph{\hmvotm{s}} on hash-table}\label{fig:ht-abort}
\end{figure}
\vspace{-.5cm}

\figref{list-time-all} represents \oplkotm
outperforms all the algorithms (list-OSTM, Trans-list, Boosting-list, NOrec-list, list-MVTO, list-KSTM)  by a factor of  2.56, 25.38, 23.57, 27.44, 13.34, 5.99 for W1, 1.51, 20.54, 24.27, 29.45, 24.89, 19.78 for W2, and 2.91, 32.88, 28.45, 40.89, 173.92, 124.89 for W3 respectively. Similarly, \figref{list-abort-all} depicts that \oplkotm obtained the least number of aborts as compare to others on the respective workloads.

\cmnt{
(OPT-HT-MVOSTM-GC, OPT-HT-MVOSTM, HT-KOSTM, HT-MVOSTM-GC, HT-MVOSTM) for W1 1.02, 1.11, 1.05, 1.07, 1.22 for W2 1.06, 1.09, 1.07, 1.08, 1.15 for W3 1.01, 1.03, 1.02, 1.03, 1.08. 

(HT-OSTM, ESTM, RWSTM, HT-MVTO, HT-KSTM) for W1 3.62, 3.21, 3.44, 2.61, 1.85 for W2 1.44, 2.36, 4.45, 9.84, 7.42 for W3 2.11, 4.05, 17.84, 65.94, 56.70.

(OPT-list-MVOSTM-GC, OPT-list-MVOSTM, list-KOSTM, list-MVOSTM-GC, list-MVOSTM) for W1 1.14, 1.24, 1.21, 1.20, 1.35 for W2 1.06, 1.07, 1.12, 1.13, 1.20 for W3 1.09, 1.19, 1.11, 1.17, 1.31.

(list-OSTM, Trans-list, Boosting-list, NOrec-list, list-MVTO, list-KSTM) for W1 2.56, 25.38, 23.57, 27.44, 11.34, 5.99 for W2 1.51, 20.54, 24.27, 29.45, 24.89, 19.78 for W3 1.91, 32.88, 38.45, 43.89, 137.92, 104.89.

(OPT-HT-MVOSTM-GC, OPT-HT-MVOSTM, HT-KOSTM, HT-MVOSTM-GC, HT-MVOSTM) for W1 1.07, 1.16, 1.15, 1.15, 1.21 for W2 1.01, 1.08, 1.06, 1.07, 1.19 for W3 1.01, 1.03, 1.02, 1.03, 1.08.

(OPT-list-MVOSTM-GC, OPT-list-MVOSTM, list-KOSTM, list-MVOSTM-GC, list-MVOSTM) for W1 1.01, 1.05, 1.05, 1.04, 1.11 for W2 1.02, 1.1, 1.1, 1.11 1.19 for W3 1.01, 1.03, 1.05, 1.08, 1.13.
}
\cmnt{
\figref{htw1} shows \hmvotm outperforms all the other algorithms(HT-MVTO, HT-KSTM, RWSTM, ESTM, HT-OSTM) by a factor of 2.6, XXX, 3.1, 3.8, 3.5 for workload type $W1$ and by a factor of 10, YYY, 19, 6, 2 for workload type $W2$ and by a factor of 10.1, ZZZZ, 4.85, 3, 1.4 for workload type $W3$ respectively. As shown in \figref{htw1}, List based MVOSTM (\lmvotm)  performs even better compared with the existing state-of-the-art STMs (list-MVTO, list-KSTM, NOrec-list, Boosting-list, Trans-list, list-OSTM) by a factor of 12, XXX, 24, 22, 20, 2.2 for workload type $W1$ and by a factor of  169, YYY, 35, 24, 28, 2 for workload type $W2$ and by a factor of 26.8, ZZZ, 29.4, 25.9, 20.9, 1.58 for workload type $W3$ respectively.
As shown in \figref{htw2} for both types of workloads, HT-MVOSTM and list-MVOSTM have the least number of aborts.}
%\vspace{1mm}
%\\
%\newpage 

\begin{figure}[H]
	\centering
	%\centering
	\captionsetup{justification=centering}
	\includegraphics[width=12cm]{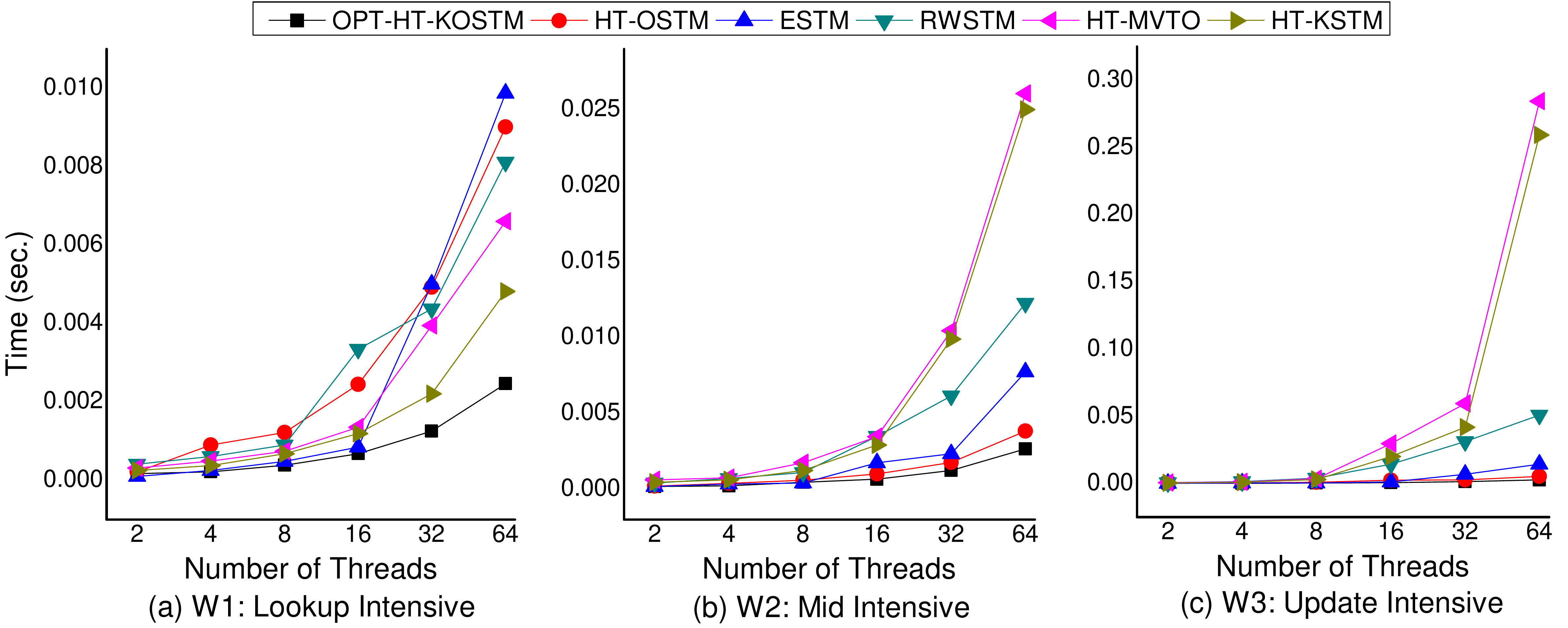}
	\centering
	\caption{Time comparison of \ophkotm and State-of-the-art hash-table based STMs}\label{fig:ht-time-all}
\end{figure}
\begin{figure}[H]
	\centering
	%\centering
	\captionsetup{justification=centering}
	\includegraphics[width=12cm]{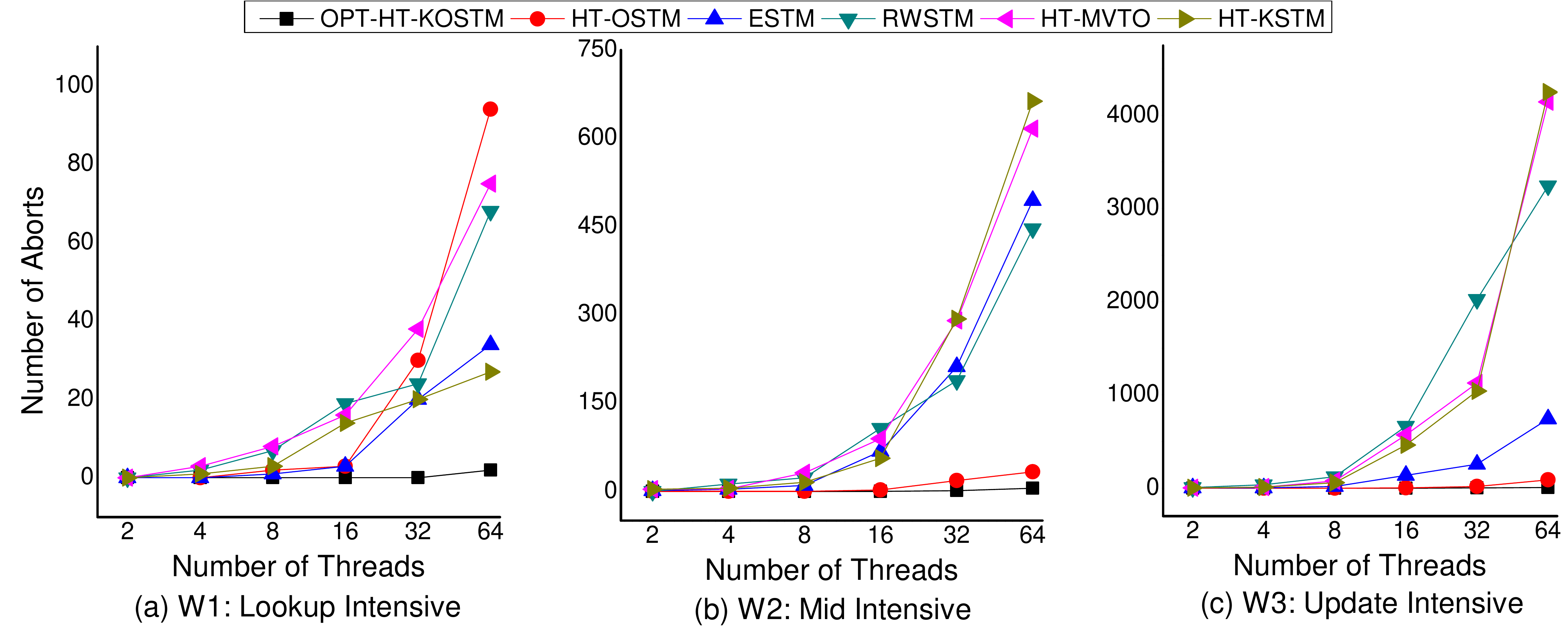}
	\centering
	\caption{Abort count of \ophkotm and State-of-the-art hash-table based STMs}\label{fig:ht-abort-all}
\end{figure}

\begin{figure}[H]
	\centering
	%\centering
	\captionsetup{justification=centering}
	\includegraphics[width=12cm]{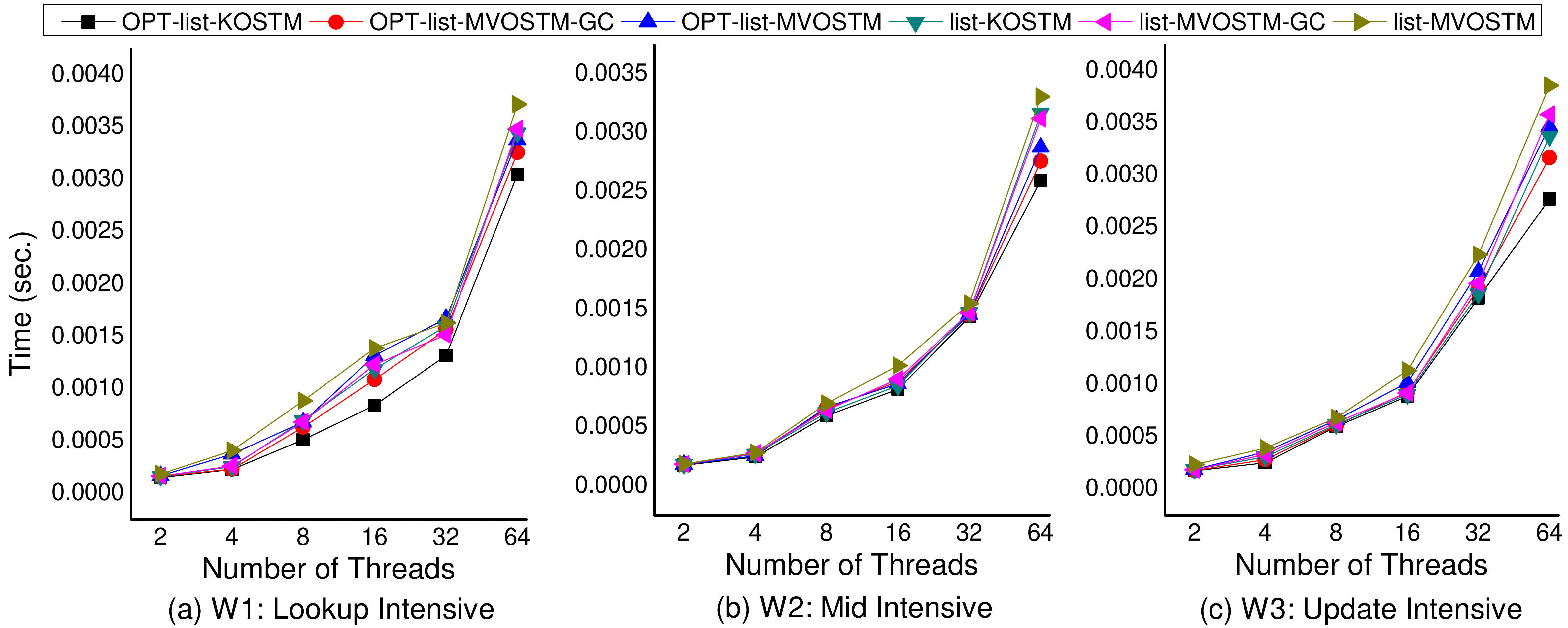}
	\centering
	\caption{Time comparison among variants of \emph{\oplmvotm{s}} and \emph{\lmvotm{s}} on list}\label{fig:list-time}
\end{figure}

\begin{figure}[H]
	\centering
	%\centering
	\captionsetup{justification=centering}
	\includegraphics[width=12cm]{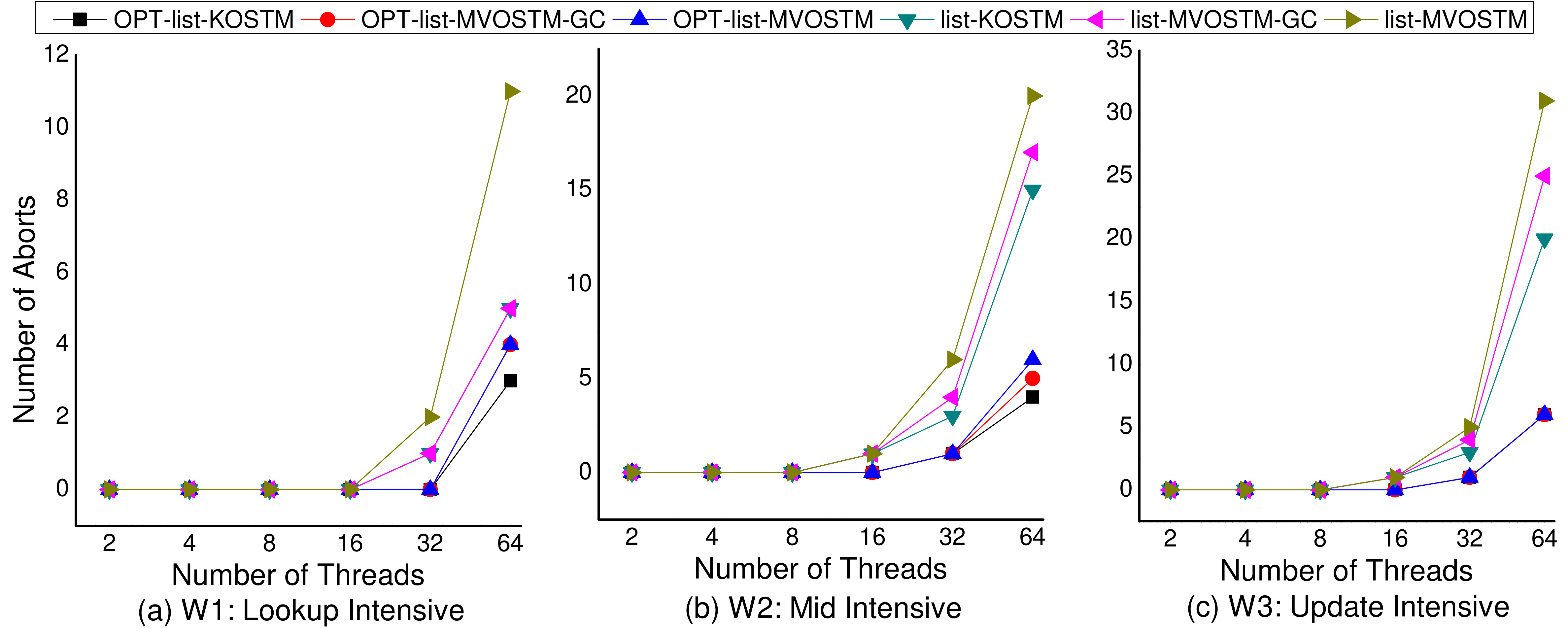}
	\centering
	\caption{Abort count among variants of \emph{\oplmvotm{s}} and \emph{\lmvotm{s}} on list}\label{fig:list-abort}
\end{figure}

\begin{figure}[H]
	\centering
	%\centering
	\captionsetup{justification=centering}
	\includegraphics[width=12cm]{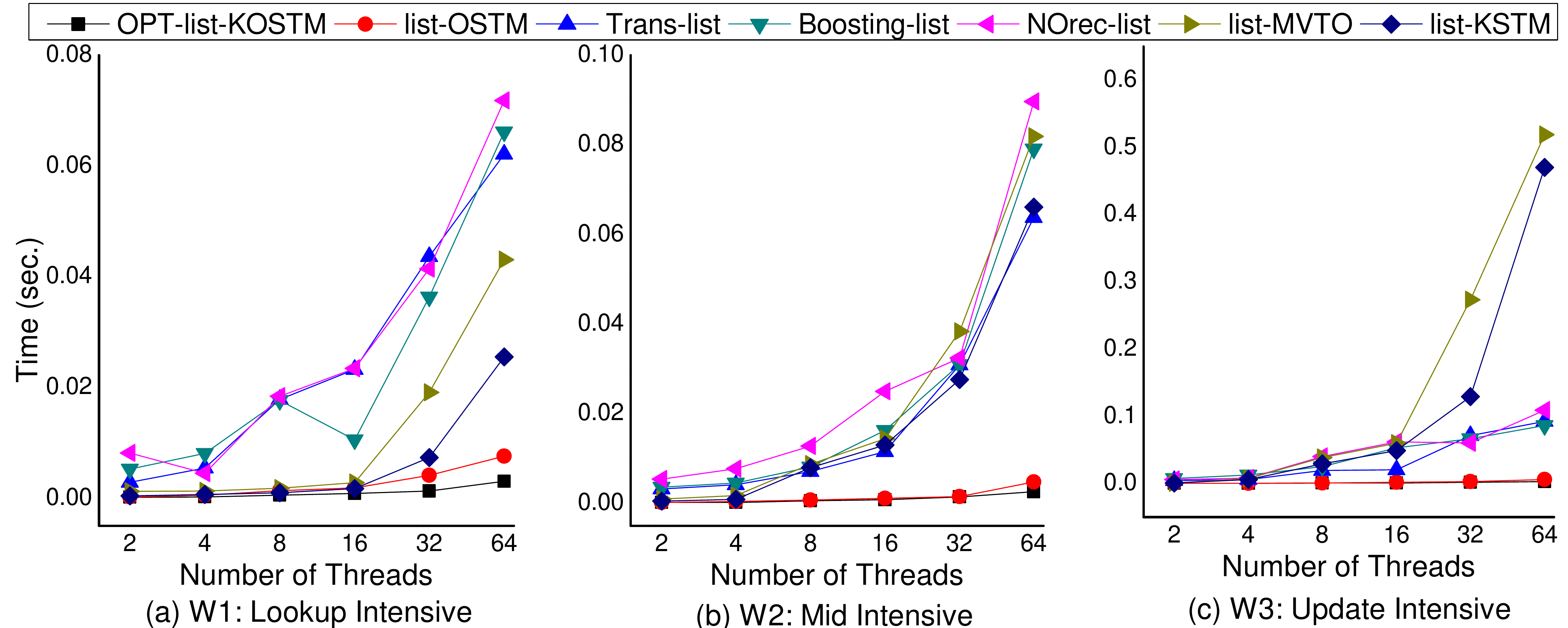}
	\centering
	\caption{Time comparison of \oplkotm and State-of-the-art list based STMs}\label{fig:list-time-all}
\end{figure}

\begin{figure}[H]
	\centering
	%\centering
	\captionsetup{justification=centering}
	\includegraphics[width=12cm]{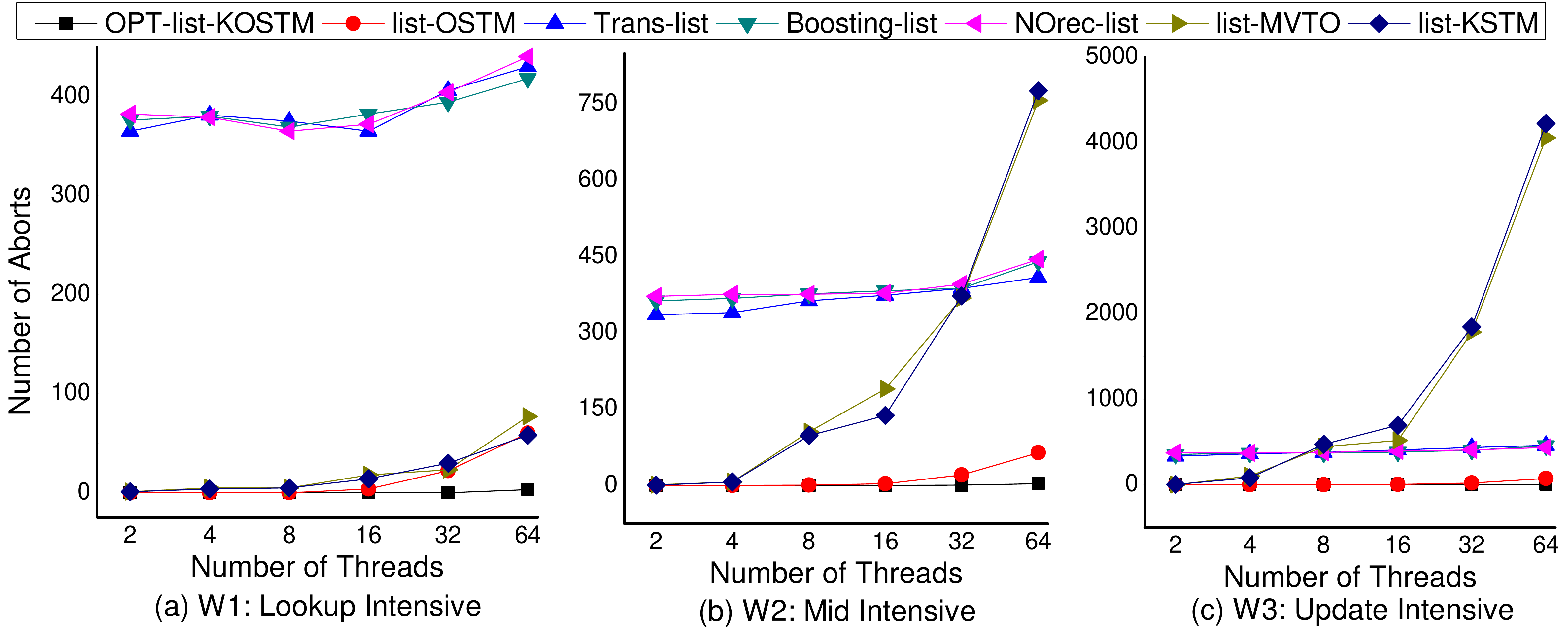}
	\centering
	\caption{Abort count of \oplkotm and State-of-the-art list based STMs}\label{fig:list-abort-all}
\end{figure}

\begin{figure}[H]
	\centering
	%\centering
	\captionsetup{justification=centering}
	\includegraphics[width=12cm]{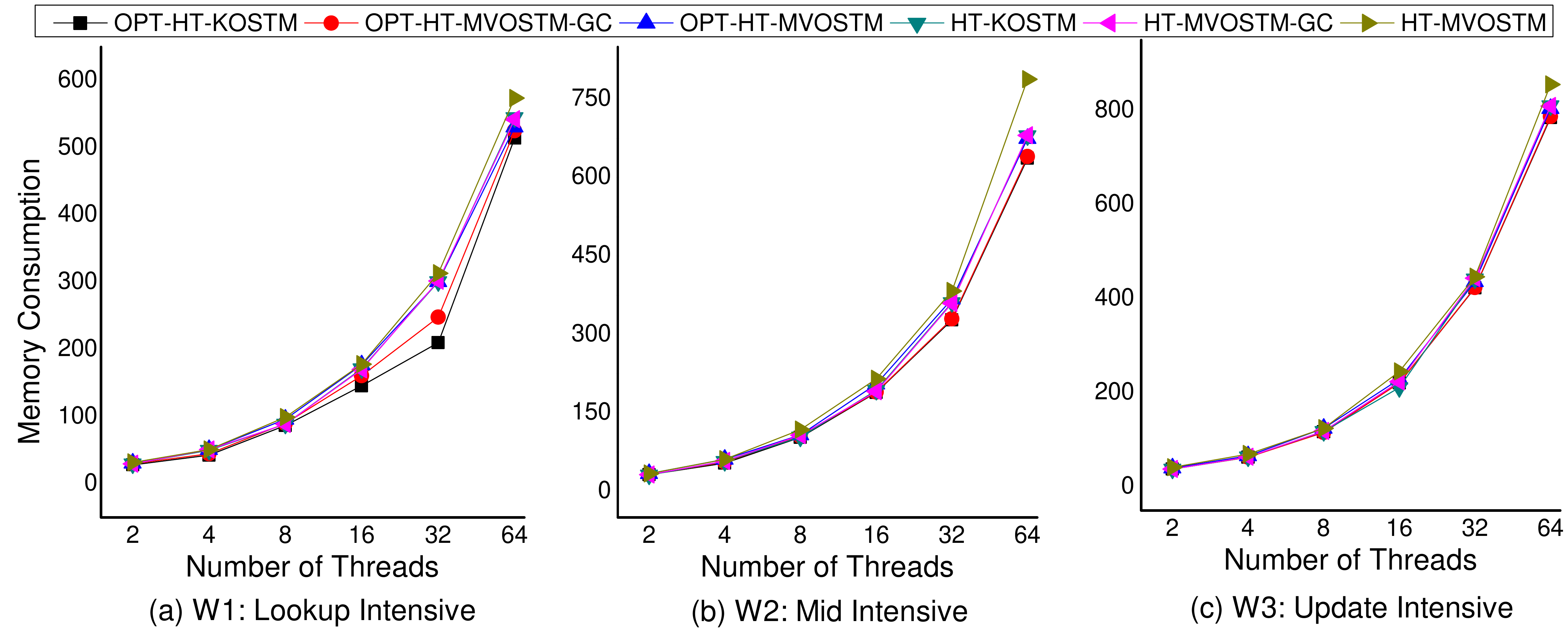}
	\centering
	\caption{Memory consumption among variants of \emph{\ophmvotm{s}} and \emph{\hmvotm{s}} on hash-table }\label{fig:ht-memory}
\end{figure}

\begin{figure}[H]
	\centering
	%\centering
	\captionsetup{justification=centering}
	\includegraphics[width=12cm]{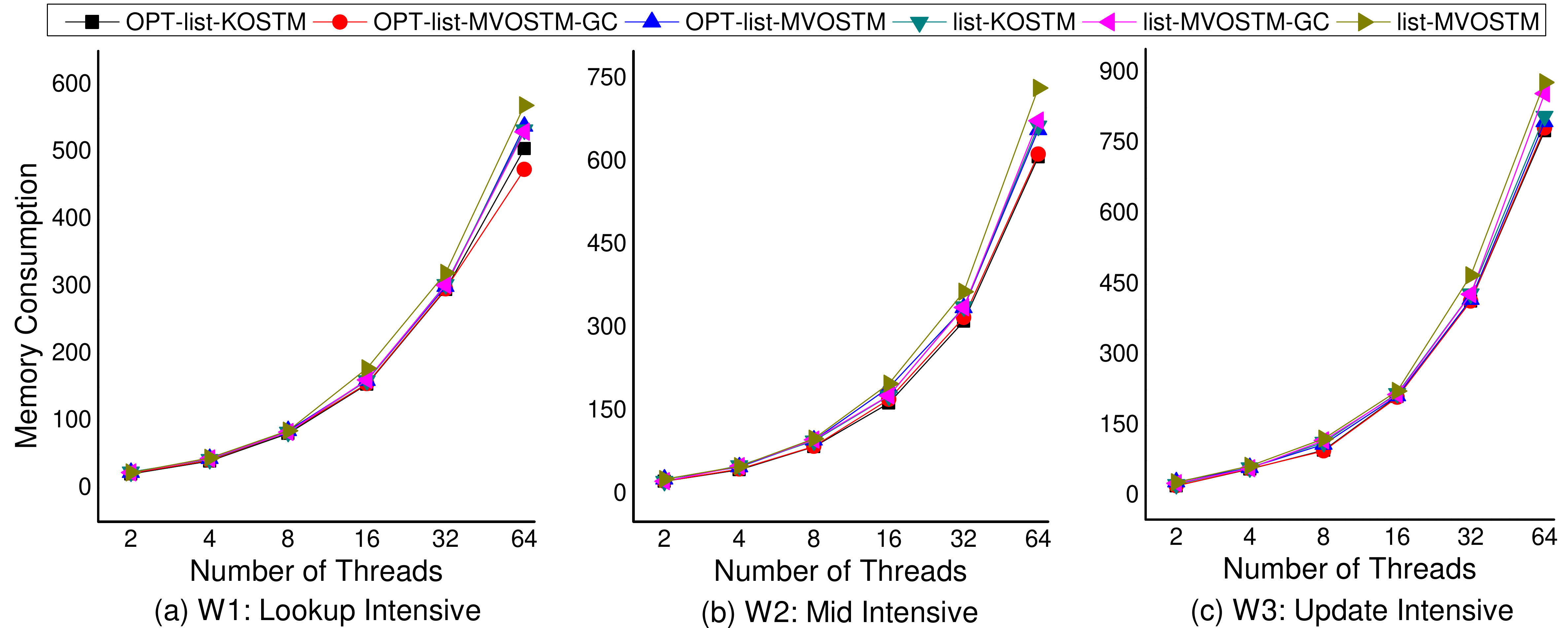}
	\centering
	\caption{Memory consumption among variants of \emph{\oplmvotm{s}} and \emph{\lmvotm{s}} on list}\label{fig:list-memory}
\end{figure}

\begin{figure}[H]
	\centering
	%\centering
	\captionsetup{justification=centering}
	\includegraphics[width=9cm, height=5cm]{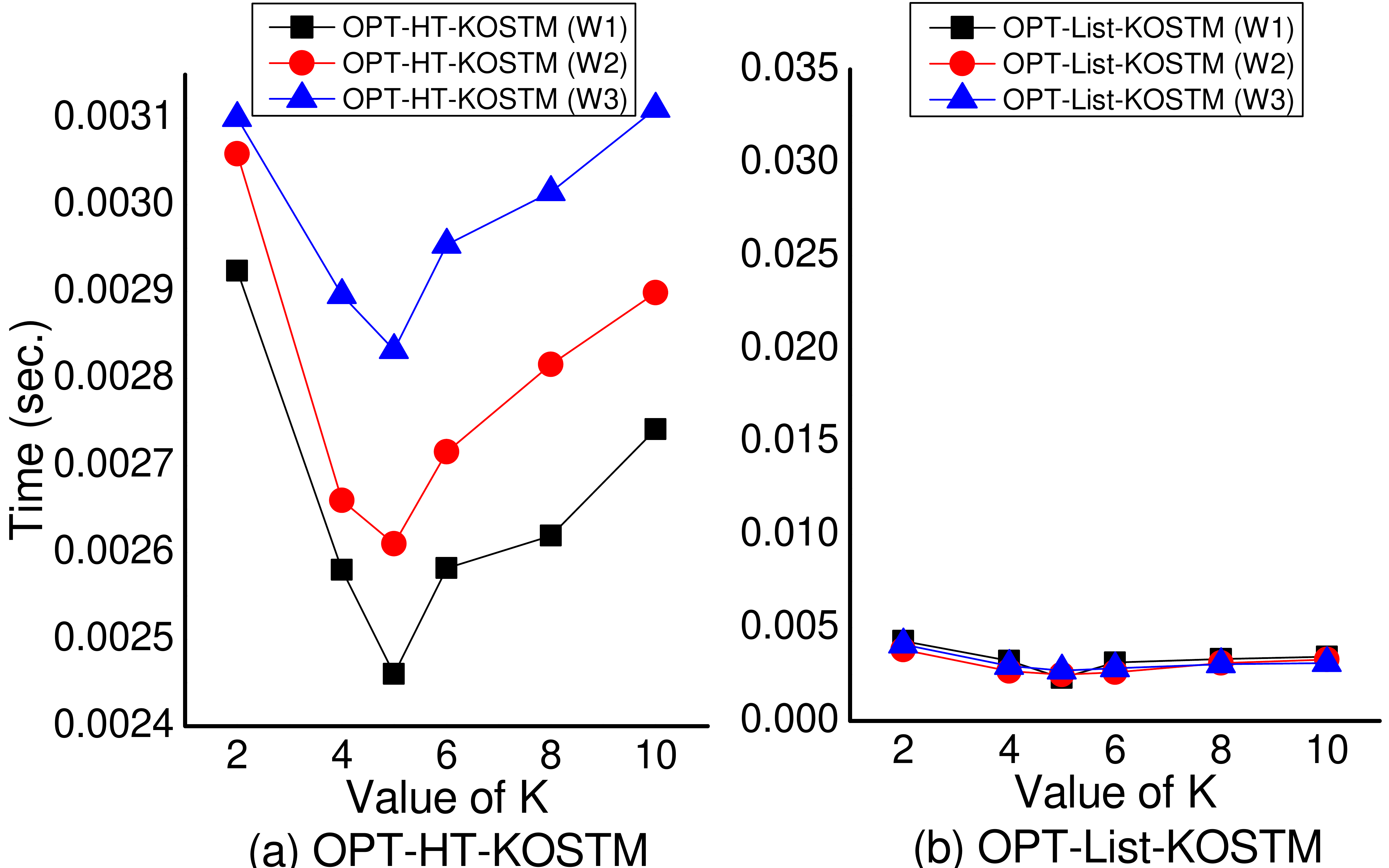}
	\centering
	\caption{Optimal Value of $K$ for \ophkotm and \oplkotm}\label{fig:optK}
\end{figure}

\cmnt{
%\vspace{-1.3cm}
\begin{figure}[H]
	\centering
	%\centering
	\captionsetup{justification=centering}
	\includegraphics[width=13cm]{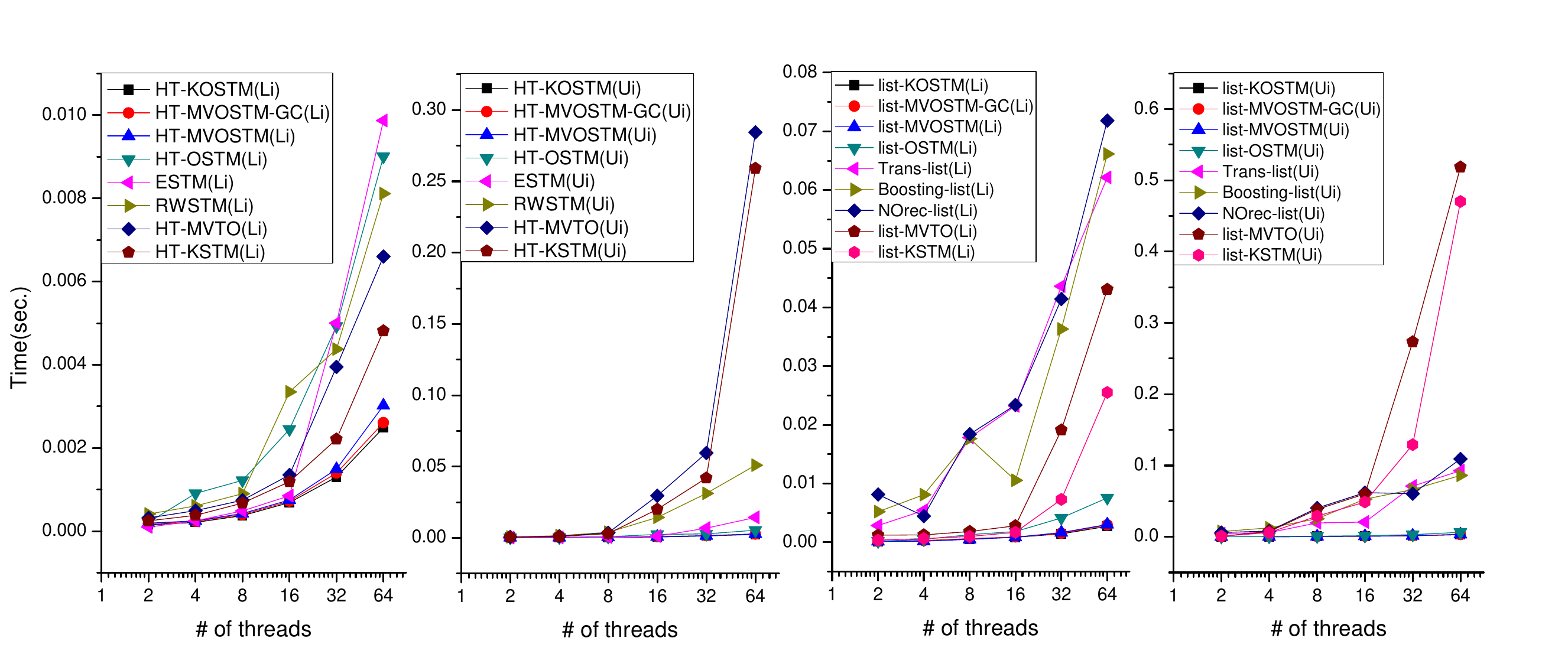}
	\centering
	\caption{Performance of \hmvotm and \lmvotm}\label{fig:htw1}
\end{figure}
%\vspace{-1.8cm}
\begin{figure}[H]
%	\centering
	%\centering
	\captionsetup{justification=centering}
	\includegraphics[width=13.5cm]{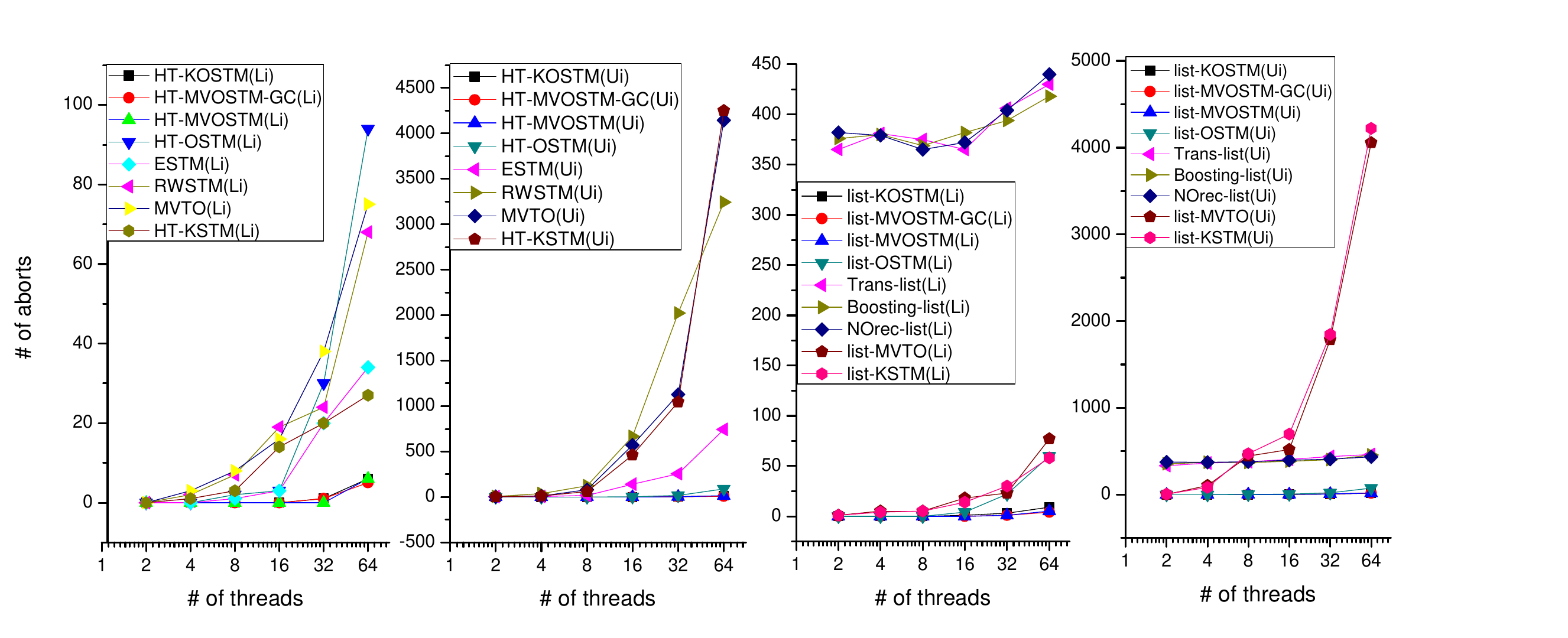}
	\centering
	\caption{Aborts of \hmvotm and \lmvotm}\label{fig:htw2}
\end{figure}
%\vspace{-1cm}
}

\noindent
\\
As explained in \secref{pcode}, for efficient memory utilization, we develop two variations of \opmvotm.
The first, \opmvotmgc, uses unbounded versions but performs garbage collection. \textbf{This is achieved by deleting non-latest versions whose timestamp is less than the timestamp of  the least live transaction.} \opmvotmgc gave a performance gain of 16\%  over \opmvotm without garbage collection in the best case which is on workload W1 with 64 number of threads. We did one more optimization in \opmvotmgc on the marked node exist in the \rn to make it search efficiently. \textbf{This is achieved by deleting a marked node from \rn whose $max_{rvl}$ of the last version is less than the timestamp of the least live transaction.}  
%as follows: whenever a transaction wants to create a new version of a key, it checks for the least live transaction (LLT) in the system. If current transaction is LLT then it deletes all the previous versions of that key and create one of its own.
%
The second, % technique to efficiently use memory is to restrict the number of versions rather than using unbounded number of versions, without compromising on the benefits of multi-version. 
\opkotm, keeps at most $K$  versions by replacing the oldest version when $(K+1)^{th}$ version is created by a current transaction as explained in \secref{pcode}. \opkotm shows a performance gain of 24\%  over \opmvotm without garbage collection in the best case which is on workload W1 with 64 number of threads. As \opkotm has a limited number of versions while \opmvotmgc can have infinite versions, the memory consumed by \opkotm is also less than \opmvotmgc. We have integrated these variations in both \tab{} based (\ophmvotmgc and \ophkotm) and linked-list based MVOSTMs (\oplmvotmgc and \oplkotm), we observed that these two variations increase the performance, concurrency and reduce the number of aborts as compared to OPT-MVOSTM which does not perform garbage collection.

\textbf{Memory Consumption by OPT-MVOSTM-GC and OPT-KOSTM:} As depicted above \opkotm performs better than \opmvotmgc. Continuing the comparison between the two variations of \opmvotm we chose another parameter as memory consumption. Here we test for the memory consumed by each variation algorithms in creating a version of a key. We count the total versions created, where creating a version increases the counter value by 1 and deleting a version decreases the counter value by 1. % Our experiments, as shown in \figref{memoryC}, under the same contentions $C1$ and $C2$ show that \kotm needs less memory space than \mvotmgc. As \kotm has finite number of versions while \mvotmgc can have infinite versions, the memory consumed by \kotm is 21\% less than \mvotmgc. These experiments are performed for \lmvotm and similar experiments can be performed for \hmvotm. 
\figref{ht-memory} depicts the comparison of memory consumption by all the variants of proposed optimized \mvotm with all variants of \mvotm for hash-table objects. \ophkotm consumes minimum memory among all the algorithms (OPT-HT-MVOSTM-GC, OPT-HT-MVOSTM, HT-KOSTM, HT-MVOSTM-GC, HT-MVOSTM) by a factor of 1.07, 1.16, 1.15, 1.15, 1.21 for W1 , 1.01, 1.08, 1.06, 1.07, 1.19 for W2, and 1.01, 1.03, 1.02, 1.03, 1.08 for W3 respectively. Similarly, \figref{list-memory} depicts the comparison of memory consumption by all the variants of proposed optimized \mvotm with all variants of \mvotm for list objects. \oplkotm consumes minimum memory among all the algorithms
(OPT-list-MVOSTM-GC, OPT-list-MVOSTM, list-KOSTM, list-MVOSTM-GC, list-MVOSTM) by a factor of 1.01, 1.05, 1.05, 1.04, 1.11 for W1, 1.02, 1.1, 1.1, 1.11 1.19 for W2, and 1.01, 1.03, 1.05, 1.08, 1.13 for W3 respectively.

\cmnt{\todo{SK:Remove this paragraph}
one with garbage collection on unbounded MVOSTMs (MVOSTM-GC.\footnote{Implementation details of MVOSTM-GC and KOSTM are give in appendix. ???\label{chirag}}), where each transaction that wants to create a new version of a key checks for the least live transaction (LLT) in the system, if current transaction is LLT then it deletes all the previous versions of that key and create one of its own. Other variation is by using finite limit on the number of versions or bounding the versions in MVOSTMs (KOSTM \footref{chirag}), where oldest versions is overridden by a validated transaction that wishes to create a new version once the limit of version count is reached.}

\cmnt{
\begin{figure}
	\centering
	%\centering
	\captionsetup{justification=centering}
	\includegraphics[width=13cm]{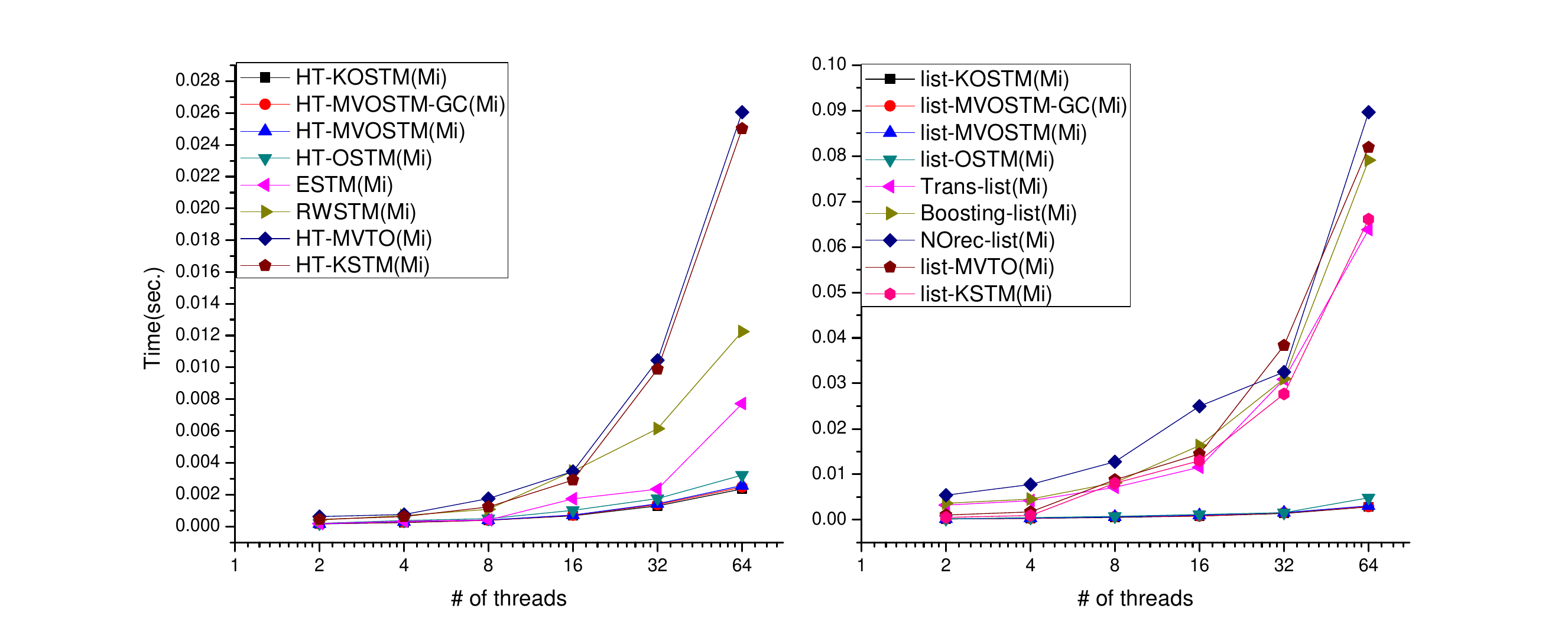}
	\centering
	\caption{Performance of \hmvotm and \lmvotm}\label{fig:pmidI}
\end{figure}
%\vspace{-.7cm}
\begin{figure}
	\centering
	%\centering
	\captionsetup{justification=centering}
	\includegraphics[width=13cm]{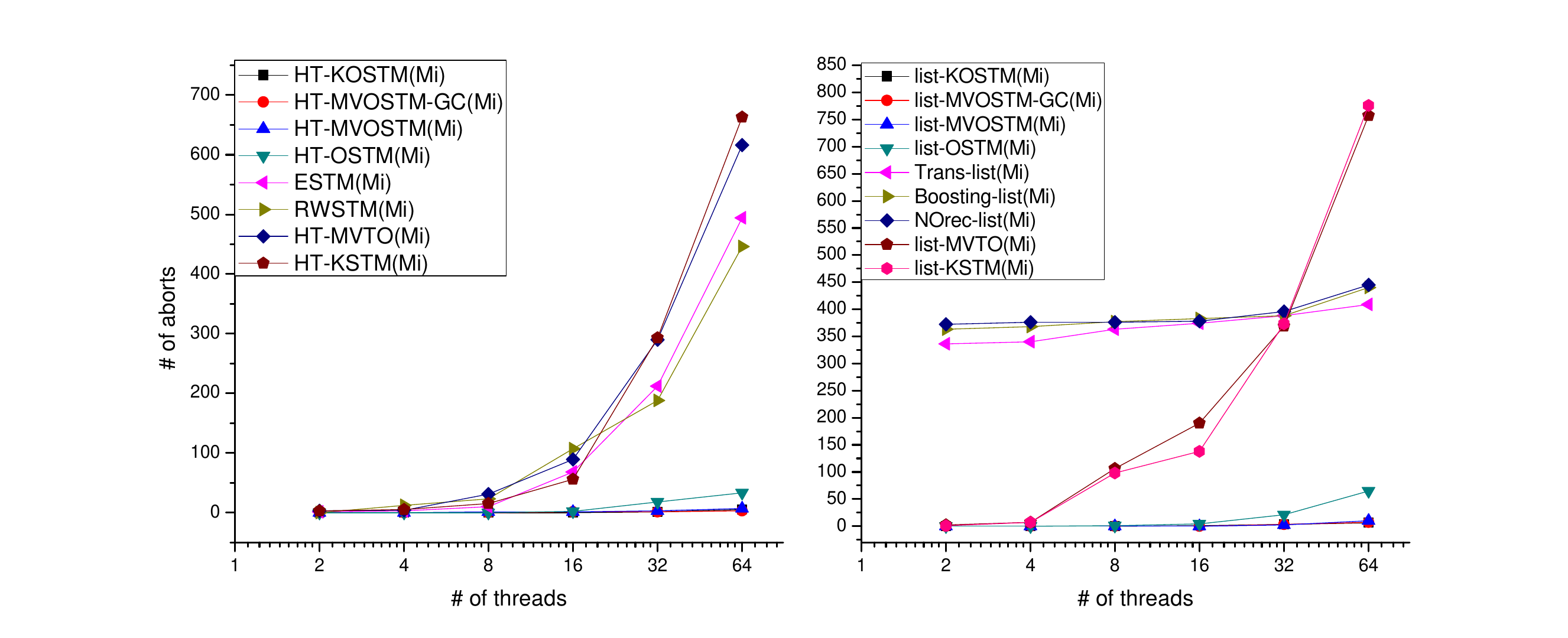}
	\centering
	\caption{Aborts of \hmvotm and \lmvotm}\label{fig:amidI}
\end{figure}
}
\cmnt{
\begin{figure}
	\centering
	%\centering
	\captionsetup{justification=centering}
	\includegraphics[width=13cm]{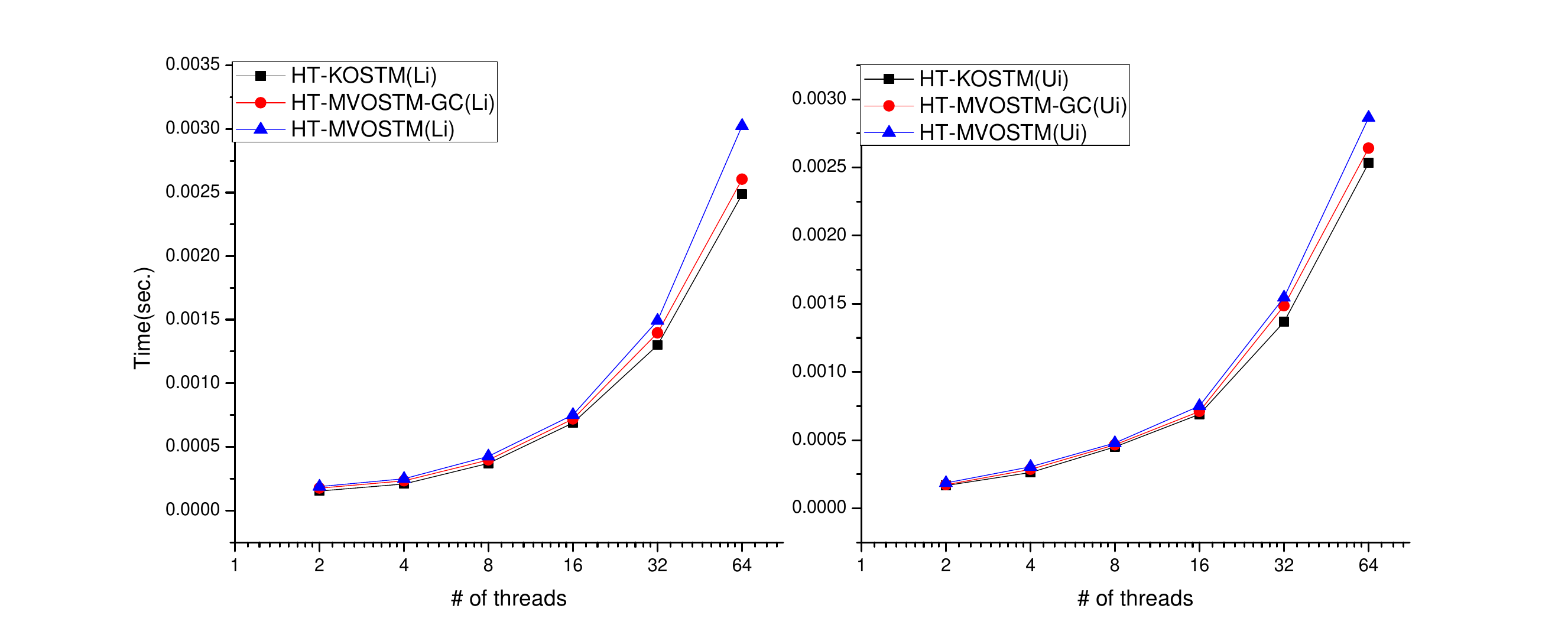}
	\centering
	\caption{Performance comparisons of variations (\hmvotm and \hkotm) of \hmvotm}\label{fig:KHTGC}
\end{figure}
%\vspace{-.7cm}
\begin{figure}
	\centering
	%\centering
	\captionsetup{justification=centering}
	\includegraphics[width=13cm]{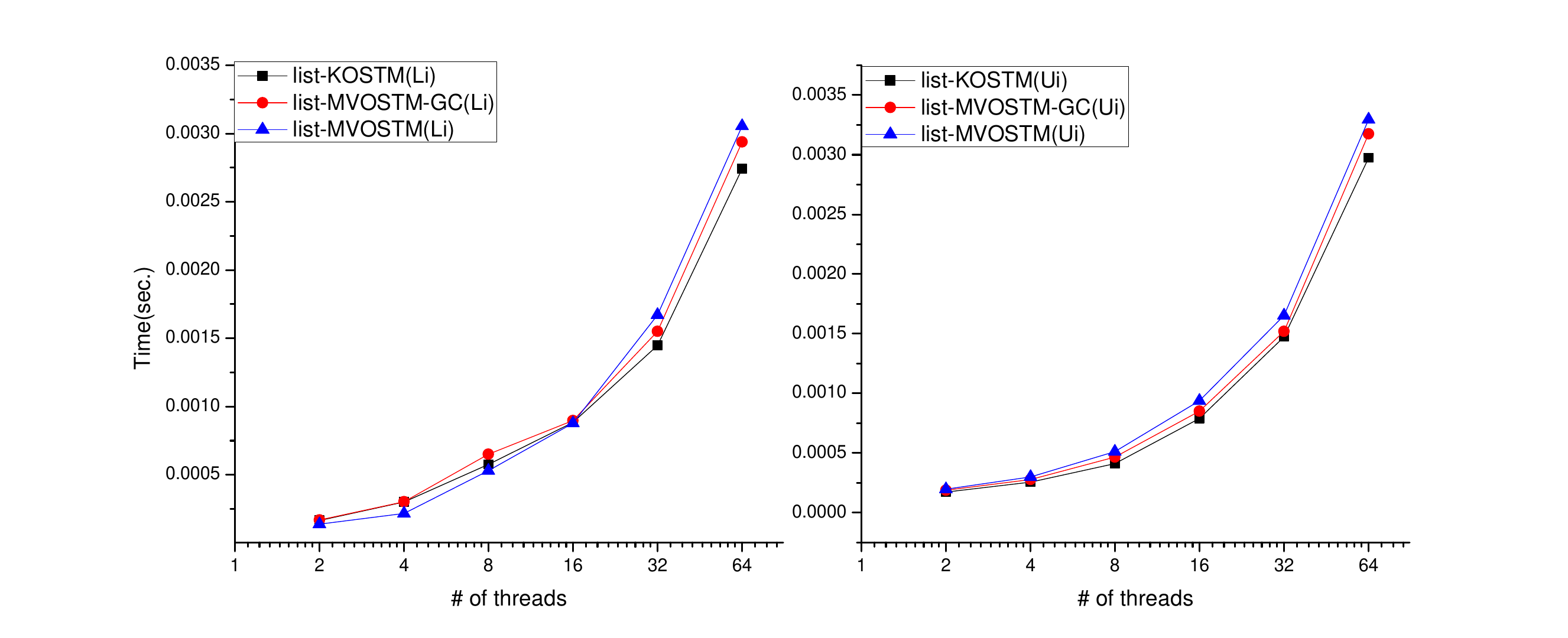}
	\centering
	\caption{Performance comparisons of variations (\lmvotm and \lkotm) of \lmvotm}\label{fig:KlistGC}
\end{figure}
%\vspace{-.7cm}
}

\textbf{Finite version \opmvotm (\opkotm):} %Another technique to efficiently use memory is to restrict the number of versions rather than using unbounded number of versions, without compromising on the benefits of multi-version. \opkotm, keeps at most $K$  versions by replacing the oldest version when $(K+1)^{th}$ version is created by a validated transaction, i.e., once a key reaches its maximum number of versions count $K$, no new version is created in a new memory location rather new version overrides the version with the oldest time stamp. 
To find the ideal value of $K$ such that performance as compared to \opmvotmgc does not degrade or can be increased, we perform experiments on all the workloads (W1, W2, and W3) for both (\ophkotm and \oplkotm). %two settings one on high contention high workload ($C1$) and other on low contention low workload ($C2$).
\figref{optK} (a) and (b) shows the best value of $K$ as 5 for \ophkotm and \oplkotm on all the workloads for both hash-table and list objects. %similarly, \figref{optK} (b) represents the best value of $K$ as 5  for  and similar experiments can be performed for \hmvotm.
%\vspace{-.7cm}
\cmnt{
\begin{figure}
	\centering
	%\centering
	\captionsetup{justification=centering}
	\includegraphics[width=15cm]{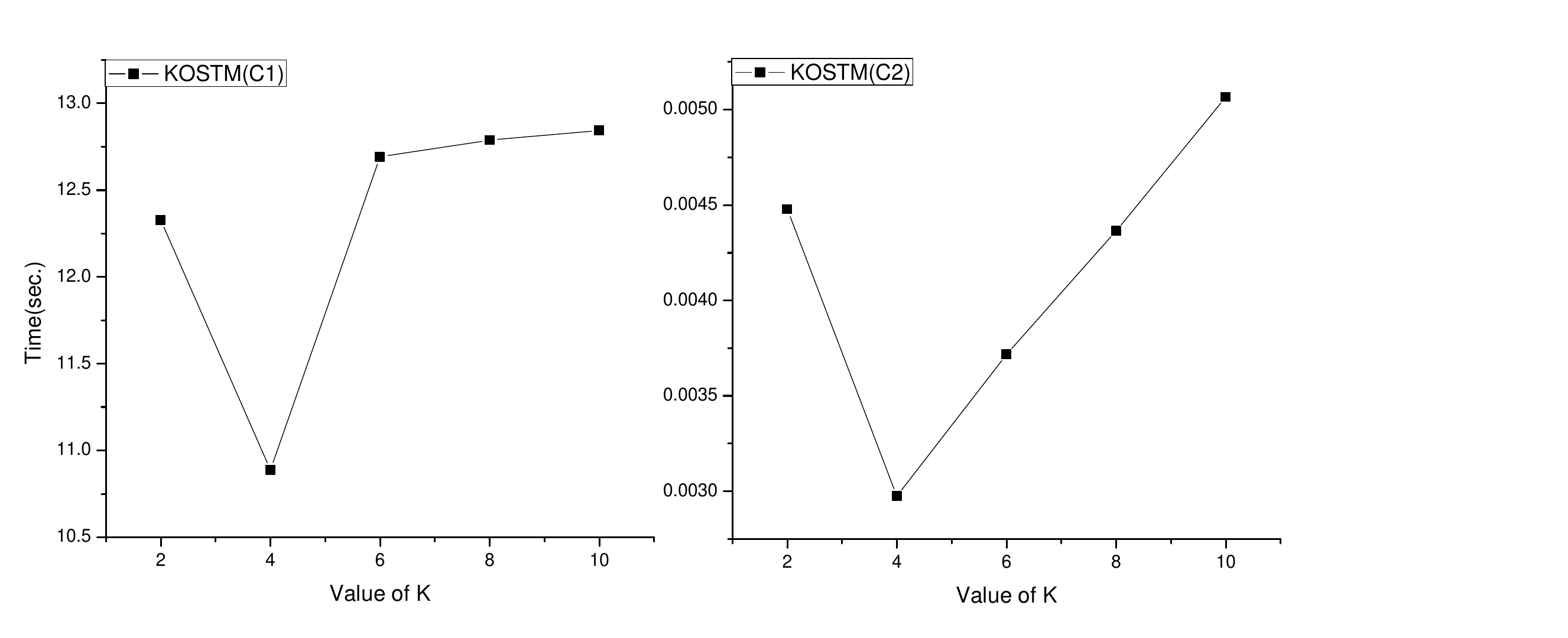}
	\centering
	\caption{Optimal value of K as 4}\label{fig:opmitalK}
\end{figure}
%\vspace{-.7cm}
}
%Under high contention $C1$, each thread spawns over 100  different transactions and each transaction performs 45\% insert, 45\% delete and 10\% lookup operations over 50 random keys. And under low contention $C2$, each thread spawns over one transaction and each transaction performs 45\% insert, 45\% delete and 10\% lookup operations over 1000 random keys. Our experiments as shown in \figref{opmitalK} give the best value of $K$ as 5 under both contention settings. These experiments are performed for \lmvotm and similar experiments can be performed for \hmvotm.
%Between these two variations, \kotm performs better than \mvotmgc as shown in \figref{KHTGC} and \figref{KlistGC}. \hkotm helps to achieve a performance speedup of 1.22 and 1.15 for workload type $W1$ and speedup of 1.15 and 1.08 for workload type $W2$ as compared to \hmvotm and \hmvotmgc respectively. Whereas \lkotm (with four versions) gives a speedup of 1.1, 1.07 for workload type $W1$ and speedup of 1.25, 1.13 for workload type $W2$ over the \lmvotm and \lmvotmgc respectively.  

\vspace{1mm}
\noindent

\cmnt{
\begin{figure}
	\centering
	%\centering
	\captionsetup{justification=centering}
	\includegraphics[width=13cm]{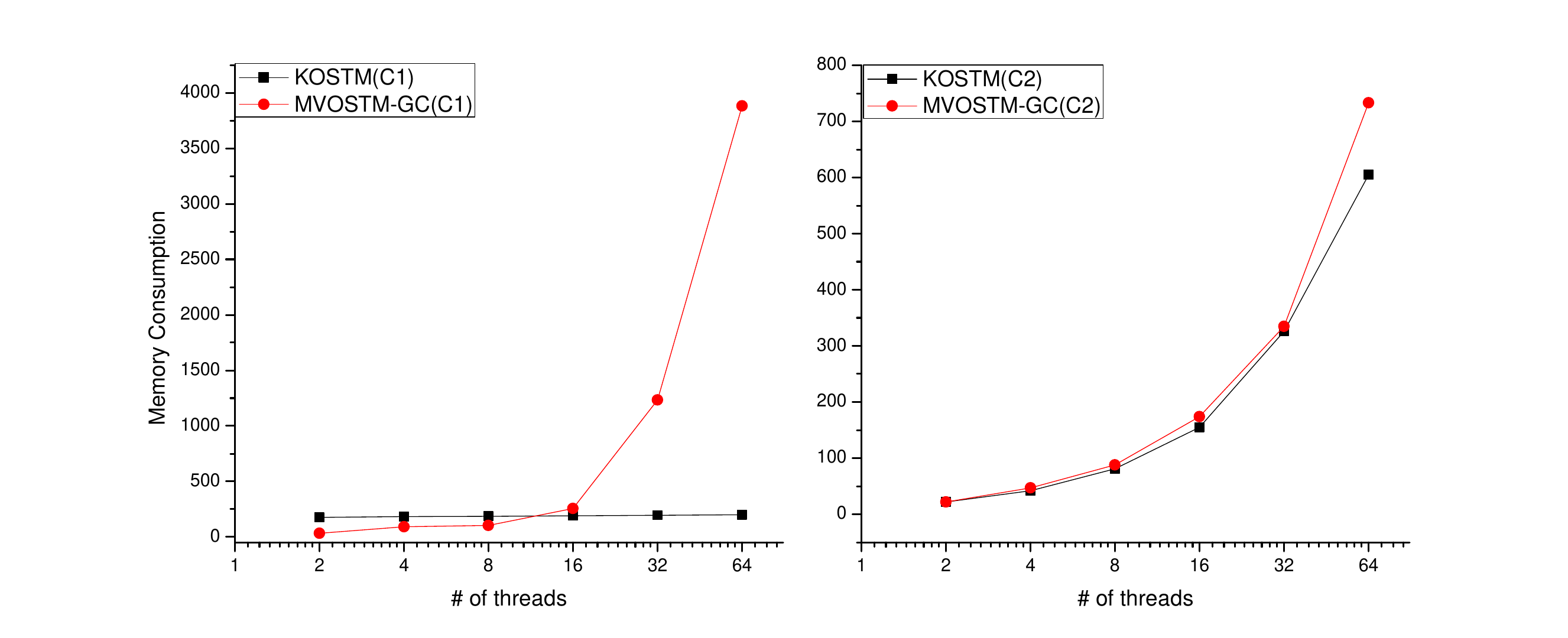}
	\centering
	\caption{Memory Consumption}\label{fig:memoryC}
\end{figure}
}

%For best performance of KOSTM we have considered value of K equals to be 4.\todo{fix macros}

\cmnt{

We have implemented garbage collection for MVOSTM for both hash-based and linked-list based approaches. Each transaction, in the beginning, does its entry in a global list named as ALTL (All live transactions list), which keeps track of all the live transactions. Under the
optimistic approach of STM, each transaction performs its updates in the shared memory in the update execution phase. Each transaction in this phase performs some validations and if all validations are completed successfully a version of that key is created by that transaction. When a transaction goes to create a version of a key in the shared memory, it checks for the least time stamp live transaction present in the ALTL. If the current transaction is the one with least time-stamp present in ALTL, then this transaction deletes all the older versions of the current key and create a version of its own. If current transaction is not the least time-stamp live transaction then it doesn't do any garbage collection. In this way, we ensure each transaction performs garbage collection on the keys it is going to create a version on. Once the transaction, changes its state to commit, it removes its entry from the ALTL. As shown in all the graphs above MVOSTM with garbage collection (HT-MVOSTM-GC and list MVOSTM-GC) performs better than MVOSTM without garbage collection.
}
\section{Conclusion}
\label{sec:confu}
With the rise of multi-core systems, concurrent programming becomes popular. Concurrent programming using multiple threads has become necessary to utilize all the cores present in the system effectively. But concurrent programming is usually challenging due to synchronization issues between the threads. 

In the past few years, several STMs have been proposed which address these synchronization issues and provide greater concurrency. STMs hide the synchronization and communication difficulties among the multiple threads from the programmer while ensuring correctness and hence making programming easy. Another advantage of STMs is that they facilitate compositionality of concurrent programs with great ease. Different concurrent operations that need to be composed to form a single atomic unit is achieved by encapsulating them in a single transaction.
 
In literature, most of the STMs are \rwtm{s} which export read and write operations. To improve the performance, a few researchers have proposed \otm{s} \cite{HerlihyKosk:Boosting:PPoPP:2008,Hassan+:OptBoost:PPoPP:2014, Peri+:OSTM:Netys:2018} which export higher level objects operation such as \tab{} insert, delete, and lookup etc. By leveraging the semantics of these higher level \op{s}, these STMs provide greater concurrency. On the other hand, it has been observed in STMs and databases that by storing multiple versions for each \tobj in case of \rwtm{s} provides greater concurrency \cite{perel+:2010:MultVer:PODC,Kumar+:MVTO:ICDCN:2014}.

\ignore{
	
\todo{SK: Remove this para}
So, we get inspired from literature and proposed a new STM as \mvotm which is the combination of both of these ideas (multi-versions in OSTMs). It provides greater concurrency and reduces number of abort while maintaining multiple versions corresponding to each key. \mvotm ensures compositionality by making the transactions atomic. In addition to that, we develop garbage collection for MVOSTM (MVOSTM-GC) to delete unwanted versions corresponding to the each keys to reduce traversal overhead. \mvotm satisfies \ccs{} as \emph{opacity}.
%\color{blue}

%\color{black}
}

This paper proposed the notion of the optimized multi-version object based STMs (\opmvotm{s}) and compares their effectiveness with multi-version object based STMs (MVOSTMs), single-version object based STMs and multi-version read-write STMs. We find that \opmvotm provides a significant benefit over above-mentioned state-of-the-art STMs for different types of workloads. Specifically, we  have  evaluated the effectiveness of OPT-MVOSTM for the \tab{} and list data structure as \ophmvotm and \oplmvotm respectively. 

\ophmvotm and \oplmvotm use the unbounded number of versions for each key. To utilize the memory efficiently, we limit the number of versions and develop two variants for both \tab and list data structures: (1) A garbage collection method in \opmvotm to delete the unwanted versions of a key, denoted as \opmvotmgc. (2) Placing a limit of $K$ on the number of versions in \opmvotm, resulting in \opkotm. Both these variants (\opmvotmgc and \opkotm) gave a performance gain of over 16\% and 24\% over \opmvotm in the best case. \opkotm consumes minimum memory among all the variants of it. We represent \opmvotmgc in hash-table and list as \ophmvotmgc and \oplmvotmgc  respectively. Similarly, We represent \opkotm in hash-table and list as \ophkotm  and \oplkotm  respectively.

\ophkotm performs best among its variants and
outperforms state-of-the-art hash-table based STMs  (HT-OSTM, ESTM, RWSTM, HT-MVTO, HT-KSTM) by a factor of 3.62, 3.95, 3.44, 2.75, 1.85 for workload W1,  1.44, 2.36, 4.45, 9.84, 7.42 for workload W2, and 2.11, 4.05, 7.84, 12.94, 10.70 for workload W3 respectively. Similarly, \oplkotm performs best among its variants and outperforms state-of-the-art list based STMs (list-OSTM, Trans-list, Boosting-list, NOrec-list, list-MVTO, list-KSTM)  by a factor of  2.56, 25.38, 23.57, 27.44, 13.34, 5.99 for W1, 1.51, 20.54, 24.27, 29.45, 24.89, 19.78 for W2, and 2.91, 32.88, 28.45, 40.89, 173.92, 124.89 for W3 respectively.  We rigorously proved that \opmvotm{s} satisfy the correctness criteria as \opty.

%Garbage collection gave a performance gain of 15\%  over \mvotm without garbage collection in the best case. Thus, the overhead of garbage collection is less than the performance improvement due to improved memory usage. This gave performance gain of 22\% over \mvotm without garbage collection in the best case. 

%Furthermore, we optimize \mvotm with limited (say $K$) number of versions corresponding to each key. We find that this optimization improves the performance even further.
%compare our proposed \mvotm with limited number of version.

\bibliography{citations}

%\clearpage
%\input{dpcode}

\end{document}